\newcommand{\lp}{\left(}
\newcommand{\rp}{\right)}
\newcommand{\lb}{\left[}
\newcommand{\rb}{\right]}
\newcommand{\lbp}{\left\{}
\newcommand{\rbp}{\right\}}
\newcommand{\lnorm}{\left\lVert}
\newcommand{\rnorm}{\right\rVert}
\newcommand{\mv}{\,\middle\vert\,}
\newcommand{\mcal}{\mathcal}
\newcommand{\mscr}{\mathscr}
\newcommand{\mbb}{\mathbb}
\newcommand{\lce}{\left\lceil}
\newcommand{\rce}{\right\rceil}
\newcommand{\diid}{\overset{\text{i.i.d.}}{\sim}}
\newcommand{\Indc}[1]{\mathbbm{1}{\lbp #1\rbp}}
\newcommand{\argmin}{\mathop{\mathrm{argmin}}}
\newcommand{\KLD}[2]{\mathrm{D}\mkern-1.5mu\left( #1  \middle\Vert #2 \right)}
\newcommand{\RenyiD}[3]{\mathrm{D}_{#3}\mkern-1.5mu\left( #1  \middle\Vert #2 \right)}
\newcommand{\GJS}[3]{\mathrm{GJS}\mkern-1.5mu\left( #1, #2, #3 \right)}
\newcommand{\supp}[1]{\mathsf{supp}(#1)}
\newcommand{\BKLD}[2]{\mathrm{d}\mkern-1.5mu\left( #1  , #2 \right)}
\newcommand{\eT}[1]{e_{#1}}
\newcommand{\poly}[1]{\mathsf{poly}\mkern-1.5mu\left( #1 \right)}
\newcommand{\comp}[1]{#1^\mathrm{c}}
\newcommand{\closure}[1]{\mathrm{cl}(#1)}
\newcommand{\interior}[1]{\mathrm{int}(#1)}
\newcommand*{\dotleq}{\mathrel{\dot{\leq}}}
\newtheorem{theorem}{Theorem}
\newtheorem{assumption}{Assumption}
\newtheorem{lemma}{Lemma}
\newtheorem{proposition}{Proposition}
\newtheorem{corollary}{Corollary}
\theoremstyle{definition}
\newtheorem{definition}{Definition}
\theoremstyle{remark}
\newtheorem{remark}{Remark}
\newtheorem{case}{Case}
\newcommand{\distset}{\mcal{D}_\varepsilon}
\newcommand{\fullseq}{$\mathsf{Fully\text{-}Sequential}$}
\newcommand{\semione}{$\mathsf{Semi\text{-}Sequential\text{-}1}$}
\newcommand{\semitwo}{$\mathsf{Semi\text{-}Sequential\text{-}2}$}
\newcommand{\fixed}{$\mathsf{Fixed\text{-}Length}$}
\newcommand{\eseq}{e_{1,\mathsf{seq}}^*}
\newcommand{\esemione}{e_{1,\mathsf{semi1}}^*}
\newcommand{\esemitwo}{e_{1,\mathsf{semi2}}^*}
\newcommand{\efix}{e_{1,\mathsf{fix}}^*}
\begin{document}
\title{A Unified Study on Sequentiality in Universal Classification with Empirically Observed Statistics}

\author{Ching-Fang~Li,~\IEEEmembership{Student Member,~IEEE} and
I-Hsiang~Wang,~\IEEEmembership{Member,~IEEE}
\thanks{This work was supported by NSTC of Taiwan under Grant 111-2628-E-002-005-MY2 and 113-2628-E-002-022-MY4, and NTU under Grant 113L7764, 113L891404, and 113L900902. 
The material in this paper was presented in part at the 2024 International Zurich Seminar
on Information and Communication, March 2024, and the 2024 IEEE International Symposium on Information Theory, July 2024.}%
\thanks{C.-F. Li was with the Graduate Institute of Electrical Engineering, 
National Taiwan University, Taipei 10617, Taiwan. She is now with the Department of Electrical Engineering, Stanford University (email: cfli@stanford.edu).}
\thanks{I.-H. Wang is with the Department of Electrical Engineering and the Graduate Institute of Communication Engineering, National Taiwan University, Taipei 10617, Taiwan (email: ihwang@ntu.edu.tw).}
}

\maketitle

\begin{abstract}
In the binary hypothesis testing problem, it is well known that sequentiality in taking samples eradicates the trade-off between two error exponents, yet implementing the optimal test requires the knowledge of the underlying distributions, say $P_0$ and $P_1$. In the scenario where the knowledge of distributions is replaced by empirically observed statistics from the respective distributions, the gain of sequentiality is less understood when subject to universality constraints over all possible $P_0,P_1$. 
In this work, the gap is mended by a unified study on sequentiality in the universal binary classification problem, 
where the universality constraints are set on the expected stopping time as well as the type-I error exponent. 
The type-I error exponent is required to achieve a pre-set distribution-dependent constraint $\lambda(P_0,P_1)$ for all $P_0,P_1$. 
Under the proposed framework, different sequential setups are investigated so that fair comparisons can be made with the fixed-length counterpart. 
By viewing these sequential classification problems as special cases of a general sequential composite hypothesis testing problem, the optimal type-II error exponents are characterized. 
Specifically, in the general sequential composite hypothesis testing problem subject to universality constraints, upper and lower bounds on the type-II error exponent are proved, and a sufficient condition for which the bounds coincide is given. The results for sequential classification problems are then obtained accordingly. 
With the characterization of the optimal error exponents, the benefit of sequentiality is shown both analytically and numerically by comparing the sequential and the fixed-length cases in representative examples of type-I exponent constraint $\lambda$. 
\end{abstract}
\begin{IEEEkeywords}
Universal classification, sequential composite hypothesis testing, error exponents.
\end{IEEEkeywords}

\section{Introduction}\label{sec:intro}
It is known that sequentiality in taking samples greatly enhances reliability in statistical inference. Take the binary hypothesis testing problem as an example. The decision maker observes a sequence of samples drawn i.i.d. from one of the two known distributions $P_0$ or $P_1$. It aims to infer from which of the two distributions the sequence is generated. Both type-I and type-II error probabilities vanish exponentially fast as the number of samples $n$ tends to infinity. The exponential rates are denoted as the error exponents $\eT{0}$ and $\eT{1}$ respectively, between which there exists a fundamental trade-off \cite{Blahut_74}. When samples are taken sequentially and the decision maker is free to decide when to stop as long as the expected stopping time is less than a given constraint $n$, Wald's Sequential Probability Ratio Test (SPRT) \cite{Wald_45} is shown to be optimal \cite{WaldWolfowitz_48} and the error exponents can simultaneously achieve the two extremes, namely, the two KL divergences $\KLD{P_1}{P_0}$ and $\KLD{P_0}{P_1}$. In words, sequentiality in taking samples eradicates the trade-off between error exponents. The optimal design of the stopping and inference strategies, however, critically relies on the knowledge of the underlying distributions \cite{Wald_45,WaldWolfowitz_48,DragalinTartakovsky_99}. 

In this work, we aim to investigate the benefit of sequentiality when the underlying distributions are unknown to the decision maker. 
Instead, the decision maker only has access to empirical statistics of training sequences sampled from these distributions, say, $P_0$ and $P_1$ in the binary-hypothesis case. 
Since the underlying distributions are unknown, it is natural to ask for a \emph{universal} guarantee on certain performances. One such framework focused on the asymptotic performance as the number of samples in all sequences tend to infinity was proposed and studied by Ziv \cite{Ziv_88}, where a universality constraint was set on the type-I error exponent to be no less than a given constant $\lambda_0>0$ regardless of the underlying distributions. Gutman \cite{Gutman_89} later proved the asymptotic optimality of Ziv's universal test in the sense that it attains the optimal type-II error exponent (as a function of $P_0, P_1, \lambda_0$). 
Further extension was made by Levitan and Merhav \cite{LevitanMerhav_02} where a competitive criterion was taken, replacing the constant constraint $\lambda_0$ by a \emph{distribution-dependent} one, $\lambda(P_0,P_1)$. From these results in the fixed-length settings, it can be seen that there also exists a trade-off between the two error exponents. 
A natural question emerges: in this universal classification problem, can sequentiality improve the trade-off, and even eradicate them \emph{without} knowing the underlying distributions?

Full resolution to the question above is provided in this paper, and as we will see, the answer depends on the distribution-dependent constraint $\lambda(P_0,P_1)$ and the setup of sequentiality. 
Since the decision maker observes three sequences (one testing and two training), there are different ways of considering sequentiality. The main focus is on the following setups: (1) the \fullseq\ setup where testing and training samples all arrive sequentially, (2) the \semione\ setup where the training sequences have fixed lengths and testing samples arrive sequentially, and (3) the \semitwo\ setup where the testing sequence has a fixed length and training samples arrive sequentially. 
To have a fair comparison with the \fixed\ setup in previous works, universality constraints are set on the expected stopping time and the type-I error exponent. 

Our main contribution is the characterization of the optimal type-II error exponent when the type-I error exponent universality constraint $\lambda(P_0,P_1)$ satisfies mild continuity conditions. 
For the \fullseq\ setup, the optimal type-II error exponent is $\eseq(P_0,P_1) =\min\big\{ \RenyiD{P_0}{P_1}{\frac{\alpha}{1+\alpha}},\kappa(P_0,P_1)\big\}$. Here the first term $\RenyiD{P_0}{P_1}{\frac{\alpha}{1+\alpha}}$ is the R\'enyi divergence and $\alpha$ is the expected ratio of the number of the $P_0$-training samples to that of testing samples. It corresponds to the error event that the test stops at around time $n$ with testing and $P_0$-training samples looking like they are generated by the same distribution. 
Likewise, the second term $\kappa(P_0,P_1)$ corresponds to the error event that the test stops at around time $n$ while the testing samples look like $P_1$-training samples instead but the empirical distributions fall in the optimal \fixed\ decision region of the null hypothesis. 
For the \semione\ setup, there is an additional term as the performance is further restricted by the fixed-length training sequences. Specifically, the optimal type-II error exponent is $\esemione(P_0,P_1) =\min\big\{ \eseq(P_0,P_1),\mu(P_0,P_1)\big\}$.  
The additional term $\mu(P_0,P_1)$ corresponds to an error event in the regime where the number of testing samples is much larger than the fixed lengths of training sequences and is hence the optimal type-II error exponent in an induced fixed-length composite hypothesis testing problem \cite{LevitanMerhav_02} when the distribution of the testing samples is fixed as $P_1$ and known. 
Similarly, for the \semitwo\ setup, the additional restriction is from the fixed-length testing sequence, and $\esemitwo(P_0,P_1) =\min\big\{ \eseq(P_0,P_1),\nu(P_0,P_1)\big\}$.  
Here $\nu(P_0,P_1)$ instead corresponds to an error event in the regime where the numbers of training samples are both much larger than the fixed length of the testing sequence so that effectively the distributions of the two hypotheses are both known. Hence, it is the optimal type-II error exponent for the binary hypothesis problem when the type-I error exponent is $\lambda(P_0,P_1)$.
For the \fixed\ setup, the optimal type-II error exponent is given in \cite{LevitanMerhav_02} and denoted as $\efix(P_0,P_1)$. To illustrate the benefit of sequentiality and the optimal type-II error exponents in different setups, a numerical example is given in Figure~\ref{fig:compare_weird}, where \fullseq\ strictly outperforms \semione /\semitwo, and \semione /\semitwo\ strictly outperforms \fixed. Meanwhile, \semione\ is sometimes better than \semitwo\ and sometimes worse. 

\begin{figure*}[ht]
    \centering
    \includegraphics[scale=0.6]{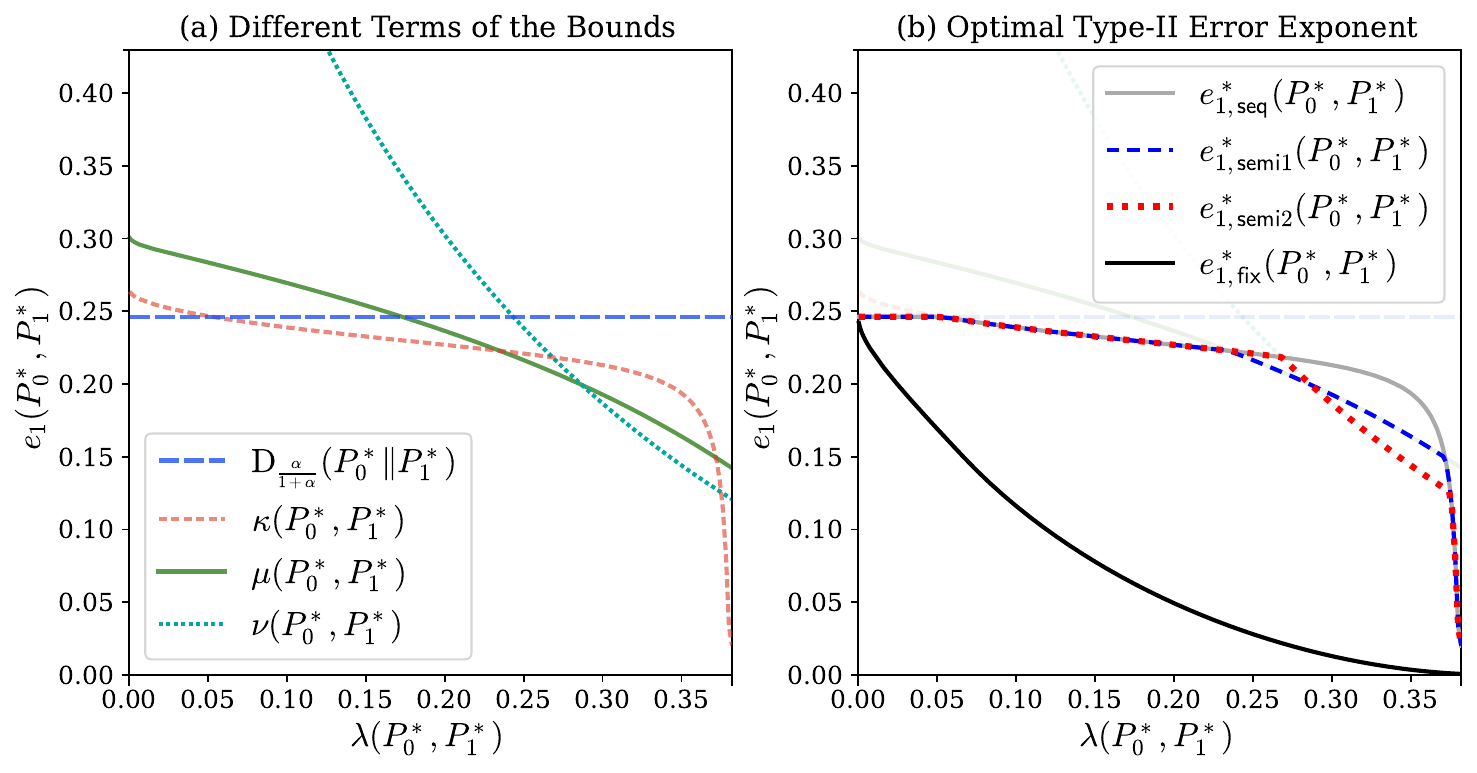}
    \caption{The optimal type-II error exponents under type-I error exponent constraint $\textstyle\lambda(P_0,P_1)=\xi\big(\RenyiD{P_1}{P_0}{\frac{\beta}{1+\beta}}+0.003\big)$, where $\beta$ is the expected ratio of the number of the $P_1$-training samples to that of testing samples. 
    Here $\mcal X = \{0,1\}$,
    $\alpha=0.38$, $\beta=0.6$, and $P_0^*=[0.6, 0.4]$, $P_1^*=[0.1, 0.9]$.
    Let $\xi$ increase from $0.001$ to $1$ to obtain a curve.
    The optimal type-II error exponents in different setups can be identified by taking the minimum of the corresponding terms, as demonstrated in plot (b). 
    This particular example is chosen judiciously so that all terms are active in the considered regime.
    } 
    \label{fig:compare_weird}
\end{figure*}

Analytical comparisons are also made besides numerical evaluation. 
In particular, for the choice of $\lambda(P_0,P_1)$ being a constant $\lambda_0 > 0$, we show that \fixed\ tests have strictly smaller type-II error exponent than \fullseq\ ones. Moreover, \semione\ and \fullseq\ tests have the same optimal type-II error exponents, indicating that there is no additional gain due to sequentiality in taking training samples when testing samples are sequentially observed.
On the other hand, when the choice of $\lambda(P_0,P_1)$ permits exponentially vanishing error probabilities for all $P_0,P_1$, it is shown that the trade-off between error exponents is eradicated in the \fullseq\ case, demonstrating strict error exponent gain over \fixed\ tests. Regarding the additional benefit of sequentiality in taking training samples when testing samples are already sequential, we characterize the necessary and sufficient condition of whether or not there exists a strict gap between the optimal exponents of \semione\ and \fullseq\ tests. The condition pertains to $\alpha$ and $\beta$, the expected ratios of the number of the $P_0/P_1$-training samples to that of the testing samples, respectively. 

A unified proof of the optimal type-II error exponents in the classification problem is developed by investigating a more general composite hypothesis testing problem with $s$ 
independent sequences, each of which is either fixed-length or sequentially observed. In this problem, there are two disjoint composite hypothesis sets, $\mscr P_0$ and $\mscr P_1$, 
collecting the possible underlying distributions of the sequences under two hypotheses.
When all sequences are fixed-length, the problem was investigated in \cite{LevitanMerhav_02}, and hence the proposed problem can be viewed as a generalization of that in \cite{LevitanMerhav_02} with sequential components. Similar to the setup in the classification problem, the universality constraints are set on the expected stopping time and the type-I error exponent. For this general composite hypothesis testing problem with sequential components, we establish upper and lower bounds on the type-II error exponents, as well as a sufficient condition 
such that the upper and lower bounds coincide. 
The optimal type-II error exponents of the classification problem can be obtained by identifying the different sequential setups as special cases of the composite hypothesis testing problem. 
We show that the aforementioned sufficient condition holds in these cases, and characterize the the optimal type-II error exponents in the classification problems.

\subsection{Related works}
For the sequential version of binary classification problem, 
Haghifam \textit{et al.} \cite{HaghifamTan_21} considered the \semione\ setup as well. They proposed a test and showed that it achieves larger Bayesian error exponent over the \fixed\ case. Under the same setting, Bai \textit{et al.} \cite{BaiZhou22} proposed an \emph{almost fixed-length} two-phase test with performance lying between Gutman's \fixed\ test and the \semione\ test in \cite{HaghifamTan_21}. However, in both \cite{HaghifamTan_21} and \cite{BaiZhou22}, they did not have a universality constraint on the expected stopping time, nor other universal guarantees over all possible distributions. For example, in \cite{HaghifamTan_21}, the expected stopping time of their test depends implicitly on the unknown distributions $P_0,P_1$. Moreover, in order to achieve certain performance guarantees, parameters have to be chosen to satisfy some conditions that depend on the underlying distribution. 
A problem of lacking universality guarantees is that there may be a test that performs well under a specific pair of distributions and bad under another pair of distributions, while another test does the opposite. Then it cannot be said which test is better. Even if a test may be instance optimal, there might not exist a universally optimal test. As a result, it would be difficult or even impossible to derive a tight converse bound matching the achievability results. Such difficulties are experienced in both \cite{HaghifamTan_21} and \cite{BaiZhou22}. In comparison, by setting universality constraints, we can focus on tests that are (at least to certain extent) comparable to each other. For such tests, optimality can be clearly defined and there even exist tests that are universally optimal.

Hsu \textit{et al.} \cite{HsuWang_22} considered the same \fullseq\ setup as in this work, while the universality constraints are set on the expected stopping time and vanishing error probabilities.
Hence the result is not directly comparable with those works in the \fixed\ setting \cite{Ziv_88,Gutman_89,LevitanMerhav_02}. 
In a preliminary version of this work \cite{LiWang_24}, the treatment only pertains to universal tests that can achieve exponentially vanishing error probabilities, and thus faces a similar problem as \cite{HsuWang_22}, not comparable with results under a constant type-I error exponent constraint \cite{Ziv_88,Gutman_89}. Besides, the lengths of training sequences are assumed to be the same in \cite{LiWang_24}, while in this work, the assumption is lifted.
Partial results of this work also appear in \cite{LiWang_24_isit}, where the focus is on \semione\ and \fullseq\ classification. Here the results are generalized to a composite hypothesis testing problem and the proofs are unified. This provides the flexibility to consider more possibilities regarding the sequentiality in taking samples, and the potential to apply to other classical problems.
For the sake of completeness, results in \cite{LiWang_24, LiWang_24_isit} will also be included in exposition.

\subsection{Organization}
In Section~\ref{sec:formulation}, we formally define the binary classification problem, including different sequential setups, the universality constraints, and the performance metrics. The general composite hypothesis testing problem is also formulated similarly. 
In Section~\ref{sec:general_results}, we present the results for the general composite hypothesis testing problem, serving as the main tool for characterization of error exponents for the binary classification problem. 
In \Cref{sec:results}, we present the optimal error exponents for the binary classification problem and show that they can be derived as a special case of the general composite hypothesis testing problem.
The section is completed with comparisons between different sequential setups.
Section~\ref{sec:converse} and Section~\ref{sec:achievability} contain respectively the proof of converse and achievability for the results in the general composite hypothesis testing problem. Section~\ref{sec:discussion} collects discussions on various aspects related to our problem, including the technical assumptions, comparison with prior works, and extensions. 
Finally, Section~\ref{sec:conclusion} concludes the paper.

\subsection{Notation} 
A finite-length sequence $(x_1,x_2,...,x_n)$ is denoted as $x^n$. Logarithms are of base $2$ if not specified. 
For a subset $S$ of a topological space, $\interior{S}$ and $\closure{S}$ denote the interior and closure of $S$, respectively. 
$\mcal{P}(\mcal{X})$ is the set of all probability distributions over alphabet $\mcal{X}$. 
When $|\mcal X|=d$, the probability simplex $\mcal{P}(\mcal{X})$ can be embedded into $\mbb R^{d-1}$. 
Hence we use $\interior{\mcal{P}(\mcal{X})}$ to denote the set of all probability distributions over alphabet $\mcal{X}$ with full support.
An indicator function is written as $\Indc{\cdot}$. 
Given positive sequences $\{a_n\}$ and $\{b_n\}$, we write $a_n\doteq b_n$ if $\lim_{n\to\infty}\frac{1}{n}\log \frac{a_n}{b_n}=0$. The relation $\dotleq$ is defined similarly. 
Also, $o(1)$ is used to denote vanishing terms, and $\poly{n}$ means some polynomial in $n$.

\section{Problem Formulation}\label{sec:formulation}
\subsection{Binary Classification}

Let $\mcal{X}$ be a finite alphabet with $|\mcal{X}|=d\geq2$ and 
consider the set $\mcal P_\varepsilon = \lbp P\in\mcal P(\mcal X) \mv \forall\,x\in\mcal X,\ P(x)\geq\varepsilon \rbp$ for some $\varepsilon>0$. 
Note that $\mcal P_\varepsilon$ is compact and chosen to ensure that the KL divergences between these distributions are bounded and uniformly continuous. The underlying distributions are described by a pair of distinct distributions $(P_0,P_1)\in\distset = \lbp(P,Q)\mv P,Q\in\mcal P_\varepsilon,\ P\neq Q\rbp$,
and $(P_0,P_1)$ is \emph{unknown} to the decision maker. 
Under ground truth $\theta\in\{0,1\}$, the decision maker observes i.i.d. testing samples $X_k$'s following $P_\theta$, along with i.i.d. training samples $T_{0,k}$'s and $T_{1,k}$'s following $P_0$ and $P_1$ respectively to learn about the unknown underlying distributions. Note that the testing and training samples are mutually independent. 
Let $n\in\mbb{N}$ be an index of the problem, which will later be related to the number of testing samples in expectation.
Let $\alpha,\beta>0$ be two fixed problem parameters indicating the asymptotic ratios of the number of training samples to $n$. 
The objective of the decision maker is to output $\hat{\theta}\in\{0,1\}$ as an estimation of the unknown ground truth $\theta$, based on the observed samples.

Next let us specify different setups regarding the sequentiality of taking samples.
\begin{itemize}
    \item \fullseq: the testing and training samples are all sequentially observed. At time $k$, there are $k$ testing samples and $N_k=\lce\alpha k\rce, M_k=\lce\beta k\rce$ training samples from each distribution. 
    A test is a pair $\Phi_n = (\tau_n,\delta_n)$ where $\tau_n\in\mbb{N}$ is a Markov stopping time with respect to the filtration $\mcal F_k = \sigma(X^k,T_0^{N_k},T_1^{M_k})$. Recall that $T_0^{N_k}$ denote the first $N_k$ training samples $T_{0,i}$'s. We may write $\tau_n$ as $\tau$ when it is clear from the context. The decision rule $\delta_n:\mcal X^\tau\times\mcal X^{N_\tau}\times \mcal X^{M_\tau}\to\{0,1\}$ is a $\mcal F_\tau$-measurable function, and the output is denoted as $\hat{\theta}$. 
    \item \semione: the testing samples $X_k$'s arrive sequentially, while the numbers of $P_0$/$P_1$-training samples are fixed to $N=\lce\alpha n\rce$ and $M=\lce\beta n\rce$. Denote the two training sequences as $T_0^N$ and $T_1^M$. 
    A test $\Phi_n = (\tau_n,\delta_n)$ is similarly defined as in the \fullseq\ 
    setup, with time-dependent variables $N_k,M_k$ replaced by the constants $N, M$.
    \item \semitwo: the number of the testing samples is fixed to $n$ and the training samples are sequentially observed. That is, at time $k$, there are $N_k=\lce\alpha k\rce$ and $M_k=\lce\beta k\rce$ training samples from each distribution. A test $\Phi_n = (\tau_n,\delta_n)$ is similarly defined as in the \fullseq\   
    setup, with the number of testing samples $k$ replaced by the constant $n$.
    \item \fixed: all three sequences have fixed-length. Note that it can be viewed as restricting $\tau_n=n$ in either one of the above sequential setups. 
\end{itemize}
When the collection of training samples are done way earlier than the collection of testing samples, it is reasonable to consider the \semione\ setup, as it may be infeasible to collect the training samples once the collection of testing samples starts. On the other hand, if the collection of training samples can be done on-the-fly or after the collection of testing samples, it may be reasonable to consider the \semitwo\ and \fullseq\ setups. 
For example, a real-world application is mentioned in \cite{GerberPolyanskiy_24} about determining whether the Higgs boson exists. In this case, the actual data from experiment is expensive to obtain. However, one can generate simulated data under different hypotheses. Since simulation has lower cost and easy to resume, both the \semitwo\ and \fullseq\ setups apply. 

In the following, when $X^k$ is observed, denote the empirical distribution as $\hat{P}^k$, where $\hat{P}^k(x) = \frac{1}{k}\sum_{i=1}^k \Indc{X_i=x}$ for $x\in\mcal X$. 
The empirical distributions of $T_0^N$ and $T_1^M$ are denoted as $\hat{P}_0^N$ and $\hat{P}_1^M$ (we omit $N, M$ if it is clear from the context).

The performance of tests is measured by the error probability and the number of samples used. Given $(P_0,P_1)\in\distset$ and $\theta\in\{0,1\}$, the error probability is defined as $\pi_\theta(\Phi_n|P_0,P_1) = \mbb P_\theta\{\hat{\theta}\neq\theta\}$, where $\mbb P_\theta$ is the shorthand notation for the joint probability law of the testing sequence and training sequences. The average number of samples used can be described by the expected stopping time $\mbb E_\theta[\tau_n|P_0,P_1]$, where the expectation is taken under $\mbb P_\theta$. 
Since the underlying distributions are unknown, it is natural to ask for some universal guarantees on the performance. The universality constraints are twofold. 
\begin{itemize}
\item 
First, to compare with fixed-length tests, we set a \emph{universality constraint on the expected stopping time} to be at most $n$. 
Let $\{\Phi_n\}$ be a sequence of tests where $\Phi_n$ satisfies $\mbb E_\theta[\tau_n|P_0,P_1]\leq n$ for all underlying distributions $(P_0,P_1)\in\distset$ and ground truth $\theta$. The type-I and type-II error exponents of $\{\Phi_n\}$ given $(P_0,P_1)\in\distset$ are defined as
\begin{equation}
    \eT{\theta}(P_0,P_1) = \liminf\limits_{n\to\infty}\frac{-\log{\pi_\theta(\Phi_n|P_0,P_1)}}{n},\ \text{for }\theta=0,1.
\end{equation}

\item
Second, we adopt the competitive Neyman-Pearson criterion proposed in \cite{LevitanMerhav_02} and set a \emph{universality constraint on the type-I error exponent}.
Let $\lambda:\distset\to (0,\infty)$ be a pre-set distribution-dependent constraint function. We focus on tests satisfying
\begin{equation}
    \eT{0}(P_0,P_1)\geq \lambda(P_0,P_1),\quad\forall\,(P_0,P_1)\in\distset. 
\end{equation} 
\end{itemize}
The goal is to characterize the maximum $\eT{1}(P_0,P_1)$ that can be achieved for tests satisfying these universality constraints and to find such a test, if possible, that achieves the maximum \emph{uniformly} over all possible underlying distributions. 

We close this sub-section with the following assumption on $\lambda$ that will be used in proving the achievability in the \semione\ and \semitwo\ setup.

\begin{assumption}\label{assumption:lambda}
    The function $\lambda:\distset\to (0,\infty)$ can be extended to a continuous function $\bar\lambda:\mcal P_\varepsilon\times \mcal P_\varepsilon\to[0,\infty)$.
\end{assumption}

\begin{remark}
    The competitive Neyman-Pearson criterion can be viewed as a generalization of the generalized Neyman-Pearson criterion which requires the type-I error exponent to be lower bounded by a constant regardless of the underlying distributions. Since $P_0$ and $P_1$ can be arbitrarily close to each other, in some cases it might be too stringent to ask for a constant type-I error exponent as that will inevitably make the type-II error tend to $1$. In \cite{LevitanMerhav_02}, the competitive Neyman-Pearson criterion is proposed to allow more flexibility and make it possible that both error probabilities decay to zero exponentially for all the underlying distributions. Two examples of the threshold functions are $\lambda \equiv \lambda_0$ and $\lambda(P_0,P_1) = \xi \RenyiD{P_1}{P_0}{\frac{\beta}{1+\beta}}$ for some $\xi\in(0,1)$, corresponding to the two scenarios mentioned above.
    Further discussions can be found in Section~\ref{subsec:compare}. It is worth noting that since the distributions are unknown, one should define $\lambda(P_0,P_1)$ for all $(P_0,P_1)\in\distset$, specifying the performance requirement for every possibly encountered case.
\end{remark}

\subsection{Composite Hypothesis Testing}\label{subsec:formulation_general}
In \cite{LevitanMerhav_02}, a more general composite hypothesis testing problem was considered, and the binary classification problem in the \fixed\ setup can be viewed as a special case. 
To provide a unified proof of our main results, the optimal error exponents of binary classification problem, we propose a more general version of the problem in \cite{LevitanMerhav_02} so that the setups with sequentially arrived samples such as \semione, \semitwo, and \fullseq\ can be incorporated. 

\begin{figure*}[ht]    
    \centering
    \includegraphics[scale=1]{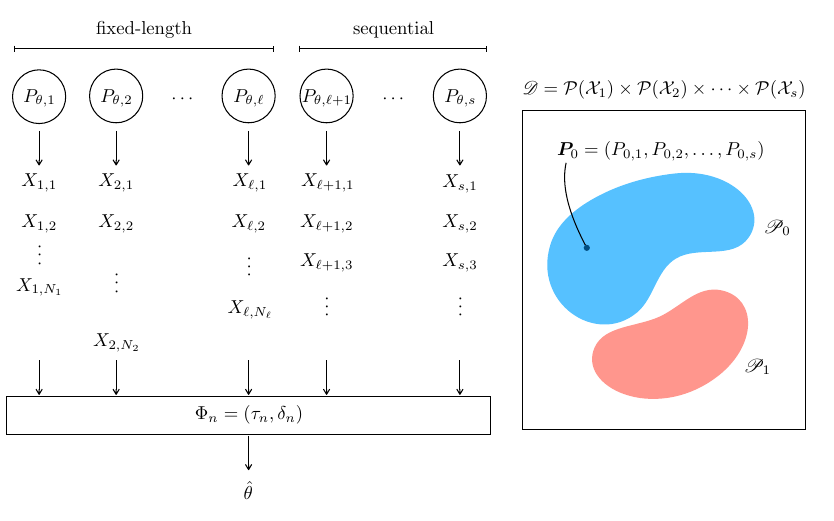}
    \caption{The setup of the generalized problem.} 
    \label{fig:explain}
\end{figure*}

Let us first formalize the general setup. 
For convenience, in the context of the generalized problem, some notations are reused and overridden to avoid introducing too many new ones. The definitions of the reused notations should be clear in the respective contexts.  
Let $s\in\mbb N$ be the number of sequences. For each $i=1,2,\dots,s$, $\mcal X_i$ is a finite alphabet and $\alpha_i>0$ is a constant. Let $\mscr D:=\mcal P(\mcal X_1)\times \mcal P(\mcal X_2)\times\dots\times \mcal P(\mcal X_s)$ and disjoint sets $\mscr P_0, \mscr P_1\subseteq \interior{\mscr D}$ be the collections of possible underlying distributions when the ground truth is $0$ and $1$, respectively.
Fix some integer $0\leq\ell\leq s$, indicating the number of fixed-length sequences. 

Given an index $n\in\mbb N$, the decision maker observes $\ell$ fixed-length sequences $X_1^{N_1},X_2^{N_2},\dots, X_\ell^{N_\ell}$, where $N_i = \lce\alpha_i n\rce$ is the length of the $i$-th sequence $X_i^{N_i}=(X_{i,1},X_{i,2},\dots,X_{i,N_i})\in\mcal X_i^{N_i}$ for $i=1,2,\dots,\ell$. In addition, some samples are sequentially observed. Specifically, for $i=\ell+1,\dots,s$, at time $k$, there are $\lce\alpha_i k\rce$ samples in the $i$-th sequence. 
Given the unknown ground truth $\theta\in\{0,1\}$, each sample $X_{i,k}$ of the $i$-th sequence i.i.d. follow $P_{\theta,i}$, which is the $i$-th component of the underlying distribution $\bm P_\theta = (P_{\theta,1},P_{\theta,2},\dots,P_{\theta,s})\in \mscr P_\theta$. 
Note that all the sequences are assumed to be mutually independent. 
The objective of the decision maker is to output $\hat{\theta}\in\{0,1\}$ as an estimation of the ground truth $\theta$, based on the observed samples. 
A test $\Phi_n=(\tau_n,\delta_n)$ consists of a Markov stopping time with respect to the filtration $\mcal F_k=\sigma(X_1^{N_1},\dots,X_\ell^{N_\ell}, X_{\ell+1}^{\lce\alpha_{\ell+1} k\rce},\dots,X_s^{\lce\alpha_s k\rce})$ and a decision rule that maps observations to $\hat{\theta}$. Given ground truth $\theta$ and underlying distributions $\bm P_\theta$, the error probability $\pi_\theta(\Phi_n|\bm P_\theta)=\mbb P_\theta\{\hat{\theta}\neq\theta\}$, where $\mbb P_\theta$ is the shorthand notation for the joint probability law of all the sequences. The expected stopping time is $\mbb E_\theta[\tau|\bm P_\theta]$, where the expectation is taken under $\mbb P_\theta$.

Again, we set a constraint on the expected stopping time to be at most $n$, that is, $\mbb E_\theta[\tau|\bm P_\theta]\leq n$ for all $\theta=0,1$ and $\bm P_\theta\in\mcal P_\theta$.
The error exponent is defined as
\begin{equation}
    e_\theta(\bm P_\theta) = \liminf_{n\to\infty}\frac{-\log \pi_\theta(\Phi_n|\bm P_\theta)}{n}.
\end{equation}
We also consider the competitive Neyman-Pearson criterion, where a constraint is set on the type-I error exponent. Let $\lambda:\mscr P_0 \to (0,\infty)$, and we focus on tests satisfying $e_0(\bm P_0)\geq \lambda(\bm P_0)$ for all $\bm P_0\in\mscr P_0$. 

It is not hard to see that the different sequential binary classification problems are instances of this general composite hypothesis testing problem. Detailed specifications can be found in \Cref{subsec:proof_of_thm1-2,subsec:proof_of_thm3}.

\section{Results for Composite Hypothesis Testing}\label{sec:general_results}
In this section, we present converse (\Cref{prop:composite_converse}) and achievability (\Cref{prop:composite_achievability}) bounds on the type-II error exponent in the general composite hypothesis testing problem. Then we provide sufficient conditions for which the two bounds coincide.

Let us first take a look at the special case $\ell=s$, which is essentially the fixed-length setup in \cite{LevitanMerhav_02}. The slight difference is that there is only one sequence in the original setup \cite{LevitanMerhav_02}. 
It is not difficult to modify the arguments in \cite{LevitanMerhav_02} to show that the optimal type-II error exponent is
\begin{equation}
    e^*_{1,\mathsf{fix}}(\bm P_1) = \inf_{\substack{\bm Q\in\mscr D \\ g_1(\bm Q)<0}} \sum_{i=1}^s \alpha_i \KLD{Q_i}{P_{1,i}},\label{eq:optimal_fixed-length_e1}
\end{equation}
where $\bm Q$ denotes the tuple $(Q_1,Q_2,\dots,Q_s)$, and
\begin{equation}
    g_1(\bm Q) = \inf_{\bm P_0'\in\mscr P_0} \Big( \sum_{i=1}^s \alpha_i \KLD{Q_i}{P_{0,i}'} - \lambda(\bm P_0') \Big).
\end{equation}
Also, there exist tests satisfying the constraint on the type-I error exponent and achieve $e^*_{1,\mathsf{fix}}(\bm P_1)$ simultaneously for all $\bm P_1\in\mscr P_1$. 
Notice that $\sum_{i=1}^s \alpha_i \KLD{Q_i}{P_{1,i}}$ in \eqref{eq:optimal_fixed-length_e1} is the divergence of $\bm Q$ from the real distribution $\bm P_1$, and $g_1(\bm Q)<0$ means $\bm Q$ is ``close'' to some distribution $\bm P_0'\in\mscr P_0$. Here ``closeness'' is measured by the KL divergence, and the function $\lambda$ is used as the threshold. We say $\bm Q$ is ``$\lambda$-close'' to some $\bm P_0'\in\mscr P_0$, and such $\bm Q$ constitutes the decision region for hypothesis $0$.

In the following treatment, we shall focus on the case $\ell<s$. 
The results are summarized in the following propositions.
\begin{proposition}[Converse]\label{prop:composite_converse}
    Let $\lambda:\mscr P_0 \to (0,\infty)$, and $\{\Phi_n\}$ be a sequence of tests such that
    \begin{itemize}
        \item for each $\Phi_n = (\tau_n,\delta_n)$, the expected stopping time 
        $\mbb E_\theta[\tau_n|\bm P_\theta]\leq n$ for all $\theta\in\{0,1\}$ and $\bm P_\theta\in\mscr P_\theta$,
        \item 
        $e_0(\bm P_0)\geq \lambda(\bm P_0)$ for all $\bm P_0\in\mscr P_0$.
    \end{itemize}
    Then for any $\bm P_1\in\mscr P_1$,
    \begin{equation}\label{eq:Expt_CHT_converse}
        e_1(\bm P_1) \leq e_1^*(\bm P_1):=\inf_{\bm Q\in\mscr P_0\cup \Gamma_0\cup\Omega(\bm P_1)} \sum_{i=1}^s \alpha_i \KLD{Q_i}{P_{1,i}},
    \end{equation}
    where $\Gamma_0 = \lbp\bm Q\in\mscr P_1\mv g_1(\bm Q) < 0\rbp$ and
    \begin{align}
\label{eq:def_Omega_P1}
        \Omega(\bm P_1) &= 
            \lbp\bm Q\in \mcal{S}(\bm P_1) \mv g^{\bm P_1}(\bm Q) < 0\rbp, \\
\label{eq:def_g_P1}
        g^{\bm P_1}(\bm Q) &= \inf_{\bm P_0'\in\mscr P_0\cap \mcal{S}(\bm P_1)} \Big( \sum_{i=1}^{\ell} \alpha_i \KLD{Q_i}{P_{0,i}'} - \lambda(\bm P_0') \Big),\\       
        \mcal{S}(\bm P_1) &= \lbp \bm Q\in\mscr D \mv Q_i=P_{1,i}\ \forall\,i=\ell+1,\dots,s\rbp.
    \end{align}
\end{proposition}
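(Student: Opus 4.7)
The plan is to show, for each $\bm Q \in \mscr P_0 \cup \Gamma_0 \cup \Omega(\bm P_1)$, the lower bound
\begin{equation*}
\pi_1(\Phi_n \mid \bm P_1) \geq \exp\bigl(-n\bigl(\textstyle\sum_i \alpha_i \KLD{Q_i}{P_{1,i}} + o(1)\bigr)\bigr),
\end{equation*}
after which the infimum over $\bm Q$ gives the stated $e_1(\bm P_1) \leq e_1^*(\bm P_1)$. For each $\bm Q$ I associate an auxiliary $\bm P_0' \in \mscr P_0$ tailored to the case: $\bm P_0' = \bm Q$ in Case 1 ($\bm Q \in \mscr P_0$); a near-minimizer in the definition of $g_1$ in Case 2 ($\bm Q \in \Gamma_0$); and a near-minimizer in the definition of $g^{\bm P_1}$ drawn from $\mscr P_0 \cap \mcal S(\bm P_1)$ in Case 3 ($\bm Q \in \Omega(\bm P_1)$). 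By construction, the strict inequality $\sum_i \alpha_i \KLD{Q_i}{P_{0,i}'} < \lambda(\bm P_0')$ holds in Cases 1 and 2, and in Case 3 the analogous inequality holds with the sum restricted to $i \leq \ell$.

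The main tool is a change of measure from $\bm P_0'$ to $\bm P_1$ combined with the method of types. I will define an event $E_n$ capturing ``empirical distributions close to $\bm Q$'': in Cases 1 and 2, require $\tau \in [n(1-\delta), n(1+\delta)]$ along with all sequence types at time $\tau$ lying within an $\eta$-neighborhood of $\bm Q$; in Case 3, only constrain the fixed-length sequences to have types in an $\eta$-neighborhood of $(Q_1,\ldots,Q_\ell)$, since $\bm P_0' \in \mcal S(\bm P_1)$ forces the sequential marginals to agree under $\bm P_0'$ and $\bm P_1$. A Sanov/type-counting argument then yields $\mathbb{P}_{\bm P_0'}(E_n) \geq \exp(-n \sum_i \alpha_i \KLD{Q_i}{P_{0,i}'} - o(n))$ in Cases 1 and 2 (with sum restricted to $i \leq \ell$ in Case 3). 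Combined with the type-I bound $\mathbb{P}_{\bm P_0'}(\hat\theta = 1) \leq \exp(-n(\lambda(\bm P_0') - o(1)))$ and the strict inequality above, this delivers $\mathbb{P}_{\bm P_0'}(E_n \cap \{\hat\theta = 0\}) \doteq \mathbb{P}_{\bm P_0'}(E_n)$.

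Applying the optional-stopping change-of-measure identity
\begin{equation*}
\mathbb{P}_1\!\bigl(E_n \cap \{\hat\theta=0\}\bigr) = \mathbb{E}_{\bm P_0'}\!\Bigl[\mathbbm{1}_{E_n \cap \{\hat\theta=0\}} \prod_i \prod_{k=1}^{N_i(\tau)} \tfrac{P_{1,i}(X_{i,k})}{P_{0,i}'(X_{i,k})}\Bigr]
\end{equation*}
and noting that on $E_n$ the log-likelihood ratio is within $o(n)$ of $\sum_i N_i(\tau)(\KLD{Q_i}{P_{0,i}'} - \KLD{Q_i}{P_{1,i}}) \approx n \sum_i \alpha_i(\KLD{Q_i}{P_{0,i}'} - \KLD{Q_i}{P_{1,i}})$ (the factors with $i > \ell$ equal $1$ pointwise in Case 3 because $P_{0,i}' = P_{1,i}$), one obtains $\pi_1(\Phi_n | \bm P_1) \geq \exp(-n\sum_i \alpha_i \KLD{Q_i}{P_{1,i}} - o(n))$, where the sum in Case 3 reduces to $i \leq \ell$ since $\KLD{Q_i}{P_{1,i}} = 0$ for $i > \ell$. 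Sending $\delta, \eta \downarrow 0$ and using continuity of $\KLD{\cdot}{\cdot}$ on $\interior{\mscr D}$ completes the bound per $\bm Q$.

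The sharpest technical step is the lower bound on $\mathbb{P}_{\bm P_0'}(E_n)$ in Cases 1 and 2, where $E_n$ couples the stopping time with the empirical distributions of the sequential sequences while $\tau$ is controlled only in expectation under $\bm P_0'$. I plan to first truncate $\tau \leq cn$ via Markov (at cost $1/c$ in probability), then partition $[n(1-\delta), cn]$ into $\poly{n}$ subintervals on which the per-sequence sample sizes $N_i(\tau)$ vary negligibly, apply a type-class lower bound inside each subinterval, and union-bound. Case 3 sidesteps this obstacle entirely because $\bm P_0'$ and $\bm P_1$ share sequential marginals, reducing the argument to a plain fixed-length method-of-types computation over the $\ell$ fixed-length sequences.
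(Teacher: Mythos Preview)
Your Case~3 argument is sound and matches the paper's reduction: since $\bm P_0'\in\mcal S(\bm P_1)$ agrees with $\bm P_1$ on the sequential coordinates, the likelihood ratio reduces to the fixed-length part, and a plain Sanov lower bound on $\mathbb P_{\bm P_0'}(E_n)$ suffices.

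The gap is in Cases~1 and~2. Your event $E_n$ imposes $\tau\ge n(1-\delta)$, but the hypotheses give only $\mathbb E_\theta[\tau\mid\bm P_\theta]\le n$, which yields no lower bound on $\tau$; a legitimate test may stop at time $O(1)$ with high probability, making your partition over $[n(1-\delta),cn]$ vacuous. Even dropping the lower bound on $\tau$, the deeper obstacle remains: you need a Sanov \emph{lower} bound on $\mathbb P_{\bm P_0'}\{\text{types at }\tau\text{ near }\bm Q\}$ where $\tau$ is a data-dependent stopping time fully correlated with those types. Nothing in your plan rules out a test that, under $\bm P_0'$, continues well past $n$ precisely on the exponentially rare sample paths whose time-$n$ types are near $\bm Q$; such a test still satisfies $\mathbb E_0[\tau\mid\bm P_0']\le n$ because the event is rare. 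The constraint that actually forbids this behaviour is $\mathbb E_1[\tau\mid\bm Q]\le n$ (recall $\bm Q\in\mscr P_1$ in Case~2), and your Markov step under $\bm P_0'$ never touches it.

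The paper sidesteps both issues. For $\bm Q\in\mscr P_0$ it avoids types entirely: data processing plus Wald's identity give $\BKLD{1-\pi_0(\Phi_n\mid\bm Q)}{\pi_1(\Phi_n\mid\bm P_1)}\le n\sum_i\alpha_i\KLD{Q_i}{P_{1,i}}+O(1)$ directly. For $\bm Q\in\Gamma_0$ the key device is to look at types at the \emph{fixed} time $n$ rather than at $\tau$: apply Markov under $\bm Q$ (not $\bm P_0'$) to get $\mathbb P_{\bm Q}\{\tau\le n\}\ge 1/(n{+}1)$, find a single type $\bm P^n$ near $\bm Q$ with $\mathbb P_{\bm Q}\{\tau\le n\mid\hat{\bm P}^n=\bm P^n\}\ge 1/\poly{n}$, and then exploit the crucial fact that conditioning on a type class makes the sample path uniform over that class, hence the conditional law of $(\tau,\hat\theta)$ given $\hat{\bm P}^n=\bm P^n$ is \emph{independent of the underlying product measure}. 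This transports the event freely among $\bm Q$, $\bm P_0'$, and $\bm P_1$ with clean type-class probabilities, and the random-$\tau$ Sanov step never arises.
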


The type-II error exponent \eqref{eq:Expt_CHT_converse} is the KL divergence of some distribution $\bm Q\in\mscr P_0\cup \Gamma_0\cup\Omega(\bm P_1)$ from the real underlying distribution $\bm P_1$. The first set $\mscr P_0$ collects the distributions under hypothesis 0. 
The second set $\Gamma_0$ consists of distributions in $\mscr P_1$ that are ``$\lambda$-close'' to some $\bm P_0'\in\mscr P_0$.
The idea behind these two sets is that if the empirical distributions of the observed sequences are close to the real underlying distributions, then in order to satisfy the universality constraint on the expected stopping time, the test should not stop too late as this event happens with high probability. However, since the real underlying distributions are unknown, the test should stop early as long as the empirical distributions are close to any possible distributions in $\mscr P_0$ or $\mscr P_1$. 
Moreover, if the test stops at some time around time $n$, it should follow the optimal decision rule in the fixed-length setup \cite{LevitanMerhav_02} in order to satisfy the universality constraint on the type-I error exponent. That is, output $0$ if the empirical distributions are ``$\lambda$-close'' to some $\bm P_0'\in\mscr P_0$. The main technique to make these arguments rigorous is the method of types. 

The third set $\Omega(\bm P_1)$ represents the restriction caused by the first $\ell$ fixed-length sequences, and is obtained by a reduction from a new fixed-length composite hypothesis testing problem where the distributions behind sequentially observed samples are perfectly learned. 
Apparently the decision maker cannot do better than when it exactly knows those distributions. 
To understand the notation, notice that the set $\mcal{S}(\bm P_1)$ is a slice where the last $s-\ell$ coordinates are restricted to be the same as $\bm P_1$. This corresponds to the situation that the distributions of the last $s-\ell$ sequences are known (perfectly learned from the sequentially observed samples). The set $\Omega(\bm P_1)$ consists of distributions that are too close to some distribution $\bm P_0'\in\mscr P_0$, with everything restricted to the slice $\mcal{S}(\bm P_1)$. 
The complete proof of \Cref{prop:composite_converse} is provided in \Cref{sec:converse}.

\begin{proposition}[Achievability]\label{prop:composite_achievability}
    Suppose the following two conditions hold: 
    \begin{enumerate}
    \item $\mscr P_0$ is bounded away from the boundary of $\mscr D$.
    \item $\lambda:\mscr P_0 \to (0,\infty)$ can be extended to a continuous function $\bar\lambda:\closure{\mscr P_0}\to[0,\infty)$. 
    \end{enumerate}
    Then there exists a sequence of tests $\{\Phi_n\}$ satisfying the universality constraints \begin{itemize}
        \item for each $\Phi_n = (\tau_n,\delta_n)$, the expected stopping time 
        $\mbb E_\theta[\tau_n|\bm P_\theta]\leq n$ for all $\theta\in\{0,1\}$ and $\bm P_\theta\in\mscr P_\theta$,
        \item 
        $e_0(\bm P_0)\geq \lambda(\bm P_0)$ for all $\bm P_0\in\mscr P_0$,
    \end{itemize}
    and achieves the the following type-II error exponent simultaneously for all $\bm P_1\in\mscr P_1$:
    \begin{equation}\label{eq:Expt_CHT_achievable}
        e_1(\bm P_1) \geq\inf_{\bm Q\in\mscr P_0\cup \Gamma_0\cup\bar\Omega (\bm P_1)} \sum_{i=1}^s \alpha_i \KLD{Q_i}{P_{1,i}},
    \end{equation}
    where
    \begin{align}
\label{eq:def_Omega_bar_P1}
        \bar\Omega (\bm P_1) &= 
            \lbp\bm Q\in \mcal{S}(\bm P_1) \mv \bar g^{\bm P_1}(\bm Q) \leq 0\rbp,\\
        \bar g^{\bm P_1}(\bm Q) &= \inf_{\bm P_0'\in\closure{\mscr P_0}\cap \mcal{S}(\bm P_1)} \Big( \sum_{i=1}^{\ell} \alpha_i \KLD{Q_i}{P_{0,i}'} - \bar\lambda(\bm P_0') \Big).
        \label{eq:def_g_bar_P1}
    \end{align}
\end{proposition}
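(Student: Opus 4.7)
The plan is to design an explicit sequential test and analyze its three universality properties via the method of types. Fix a vanishing tolerance $\epsilon_n=o(1)$ (e.g.\ $\epsilon_n=n^{-1/4}$) and a deadline $n_{\max}=n+\sqrt{n}$. At each time $k\leq n_{\max}$, form the empirical-distribution tuple $\hat{\bm P}_k$ from the $\ell$ fixed-length sequences and the currently observed prefixes of the $s-\ell$ sequential ones. The test stops at the first $k$ such that $\hat{\bm P}_k$ lies within $\ell_1$ distance $\epsilon_n$ of $\closure{\mscr P_0}\cup\closure{\mscr P_1}$, or at $k=n_{\max}$ otherwise. The decision rule outputs $\hat\theta=0$ whenever $\hat{\bm P}_{\tau_n}$ satisfies the $\bar\lambda$-closeness condition mirroring the fixed-length optimal rule of \cite{LevitanMerhav_02} (with $\lambda$ replaced by $\bar\lambda$ and $\mscr P_0$ by $\closure{\mscr P_0}$), and outputs $\hat\theta=1$ otherwise.

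For the expected stopping time, Sanov's theorem applied to $\hat{\bm P}_k$ under $\bm P_\theta\in\mscr P_\theta$ shows that for $k\geq (1-\delta)n$ the probability that the test has not yet stopped decays as $\exp(-\Omega(n))$, since $\bm P_\theta$ itself sits inside the stopping region once $\epsilon_n$ dominates the empirical fluctuations. Combined with the hard truncation $\tau_n\leq n_{\max}$, this yields $\mbb E_\theta[\tau_n|\bm P_\theta]\leq n+o(1)$, which can be brought down to $\leq n$ by a harmless shift of $n$. For the type-I exponent, partitioning $\{\hat\theta=1\}$ by the stopping time and applying the Levitan--Merhav type-counting bound at each $k\leq n_{\max}$ gives $\pi_0(\Phi_n|\bm P_0)\leq \poly{n}\exp(-n\bar\lambda(\bm P_0)+o(n))$, where continuity of $\bar\lambda$ is used to absorb the $\epsilon_n$-perturbation of the decision boundary.

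The type-II exponent is obtained by decomposing $\{\hat\theta=0\}$ under $\mbb P_1$ into three regimes matching the three constituents of the infimum in \eqref{eq:Expt_CHT_achievable}. First, if $\tau_n\ll n$ and $\hat{\bm P}_{\tau_n}$ is close to some $\bm Q\in\closure{\mscr P_0}$, Sanov yields the exponent $\inf_{\bm Q\in\mscr P_0}\sum_{i=1}^s\alpha_i\KLD{Q_i}{P_{1,i}}$. Second, if $\tau_n$ is comparable to $n$ and $\hat{\bm P}_{\tau_n}$ is close to some $\bm Q\in\closure{\mscr P_1}$ which is simultaneously $\bar\lambda$-close to $\closure{\mscr P_0}$, the corresponding $\Gamma_0$ term appears. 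Third, if $\tau_n=n_{\max}$, the last $s-\ell$ empirical distributions have converged to $(P_{1,\ell+1},\dots,P_{1,s})$ with overwhelming probability, so the problem reduces to a fixed-length composite test on the slice $\mcal{S}(\bm P_1)$, producing the $\bar\Omega(\bm P_1)$ term via \eqref{eq:def_Omega_bar_P1}. A union bound over these regimes and over the polynomially many types then gives \eqref{eq:Expt_CHT_achievable}.

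The main obstacle is twofold. First, the tolerance $\epsilon_n$ and the deadline $n_{\max}$ must be calibrated so that, simultaneously, (a) the expected stopping time stays at $n+o(1)$, (b) the type-I exponent loses at most $o(1)$, and (c) none of the three type-II regimes is degraded; balancing these drives the choice of vanishing rates. Second, passing from the strict inequality $g^{\bm P_1}(\bm Q)<0$ in the converse to the non-strict $\bar g^{\bm P_1}(\bm Q)\leq 0$ of \eqref{eq:def_g_bar_P1} requires both hypotheses of the proposition: condition (2) guarantees that the $\bar\lambda$-close regions behave well as $\bm P_0'\to\partial\mscr P_0$, while condition (1) supplies the compactness needed to replace the infima over $\mscr P_0$ by minima over $\closure{\mscr P_0}$ without loss.
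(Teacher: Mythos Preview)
Your argument has a genuine gap in the third regime. With $n_{\max}=n+\sqrt{n}$, the sequential sequences at the deadline carry only $\lceil\alpha_i(n+\sqrt{n})\rceil\sim\alpha_i n$ samples---no more, in order of magnitude, than if they had been fixed-length from the start. Hence under $\mbb P_1$ the probability that $(\hat P_{\ell+1},\dots,\hat P_s)$ deviates from $(P_{1,\ell+1},\dots,P_{1,s})$ by a fixed $\delta>0$ is still $2^{-\Theta(n)}$, i.e.\ a \emph{finite} exponent after dividing by $n$. The problem therefore does \emph{not} reduce to the slice $\mcal S(\bm P_1)$, and the $\bar\Omega(\bm P_1)$ term cannot be produced this way; the event $\{\tau_n=n_{\max},\,\hat\theta=0\}$ would at best contribute the ordinary fixed-length exponent. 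The paper's test instead takes the second-phase deadline to be $n^2$: the sequential empiricals are then based on $\Theta(n^2)$ samples, a fixed deviation costs $2^{-\Theta(n^2)}$, and after normalizing by $n$ this contribution is effectively $+\infty$, which is exactly what legitimizes the slice reduction. A separate limit argument (Lemma~\ref{lemma:limit}) is still needed to show that the weighted decision statistic $g_n$ used at time $n^2$ produces $\bar\Omega(\bm P_1)$ in the limit; this is where conditions (1) and (2) are actually invoked. Going out to $n^2$ of course forces the probability of entering the second phase to be $O(n^{-2})$ so that $\mbb E_\theta[\tau_n]\leq n$, which dictates a first-phase KL tolerance of order $(\log n)/n$, not $\epsilon_n=n^{-1/4}$.

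There is a second, related problem: your first-hitting-time rule can stop at times $k\ll n$. At such $k$ the sequential sequences have only $\alpha_i k$ samples, so the exponent under $\mbb P_1$ for $\hat{\bm P}_k$ landing near a given $\bm Q\in\mscr P_0$ is $\sum_{i\le\ell}\alpha_i\KLD{Q_i}{P_{1,i}}+(k/n)\sum_{i>\ell}\alpha_i\KLD{Q_i}{P_{1,i}}$, which for small $k/n$ is strictly smaller than the $\mscr P_0$ term you claim in regime one. The paper sidesteps both issues with a clean two-phase design that is close in spirit to yours but with two essential changes: the test \emph{never} stops before $n-1$, and if it does not stop at $n-1$ it goes all the way to $n^2$.
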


To prove \Cref{prop:composite_achievability}, we propose a two-phase test that satisfies the universality constraints. 
Following the same idea as in the converse part, to ensure the expected stopping time is bounded by $n$, the test stops at time $n-1$ if the empirical distributions of the observed sequences are close to any distributions in $\mscr P_0$ or $\mscr P_1$. 
The decision rule follows the optimal fixed-length test in \cite{LevitanMerhav_02} to meet the type-I error exponent constraint. The contribution to the type-II error exponent in this part unsurprisingly corresponds to $\mscr P_0$ and $\Gamma_0$. 

On the other hand, if the empirical distributions are not close to any possible underlying distributions, it means that the samples are inadequate to make a good estimation. 
Again, motivated by the idea in the converse part, we aim to approach the case when the distributions behind sequentially observed samples are known. Intuitively, if we observe infinitely many samples, then the distributions behind those sequences can be known by the law of large numbers. 
Thus in this case the test continues to take more samples until time $n^2$, which can be viewed as mimicking having infinitely many samples.
The decision rule again leverages the idea of fixed-length test in \cite{LevitanMerhav_02}. In this part, the contribution to the type-II error exponent corresponds to $\bar{\Omega}(\bm P_1)$.

Observe that $\bar{\Omega}(\bm P_1)$ is the only part that \eqref{eq:Expt_CHT_achievable} in \Cref{prop:composite_achievability} differs from \eqref{eq:Expt_CHT_converse} in \Cref{prop:composite_converse}.
In particular, $<$ in \eqref{eq:def_Omega_P1} is replaced by $\leq$ in \eqref{eq:def_Omega_bar_P1}, and $g^{\bm P_1}(\bm Q)$ in \eqref{eq:def_g_P1} is replaced by $\bar g^{\bm P_1}(\bm Q)$ in \eqref{eq:def_g_bar_P1}, where the infimum is taken over $\mscr P_0\cap \mcal{S}(\bm P_1)$ and $\closure{\mscr P_0}\cap \mcal{S}(\bm P_1)$ respectively. Also note that in \eqref{eq:def_g_bar_P1}, $\lambda$ is replaced by its extension $\bar{\lambda}$ so that the definition is valid. 
The reason behind this gap is that in the proof of \Cref{prop:composite_converse}, 
we assume the distributions of the last $s-\ell$ sequences are known, which is possible only if infinitely many samples are observed.
However, in the proof of \Cref{prop:composite_achievability}, the constructed test only observes finitely many samples (in the order of $n^2$), so those distributions are just approximated instead of perfectly known. As a result, the closure emerges when taking limit. 
The two conditions in \Cref{prop:composite_achievability} are also added to keep the proof tractable when taking limit. Specifically, the first condition is to avoid the case that KL divergence being possibly infinity on the boundary of the probability simplex. The second condition means the threshold function $\lambda$ should be continuous even on the boundary of its domain.
The detailed proof of \Cref{prop:composite_achievability} can be found in \Cref{sec:achievability}.

Let us now investigate when the converse and achievability can coincide. In the following lemma, we first show that if the two conditions in \Cref{prop:composite_achievability} hold, $\Omega(\bm P_1)$ 
can be replaced with another set $\Omega' (\bm P_1)$ which differs from $\bar\Omega(\bm P_1)$ 
only in substituting $\bar g^{\bm P_1}(\bm Q)$ with $g^{\bm P_1}(\bm Q)$. The proof is given in Appendix~\ref{app:rewrite_Omega}.
\begin{lemma}\label{lemma:rewrite_Omega}
    Suppose the two conditions in \Cref{prop:composite_achievability} hold, then 
    \begin{equation}
        \inf_{\bm Q\in\Omega(\bm P_1)} \sum_{i=1}^s \alpha_i \KLD{Q_i}{P_{1,i}} = \inf_{\bm Q\in\Omega'(\bm P_1)} \sum_{i=1}^s \alpha_i \KLD{Q_i}{P_{1,i}},
    \end{equation}
    where $\Omega'(\bm P_1) = \lbp\bm Q\in \mcal{S}(\bm P_1) \mv g^{\bm P_1}(\bm Q) \leq 0\rbp$.
\end{lemma}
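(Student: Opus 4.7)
The inclusion $\Omega(\bm P_1)\subseteq\Omega'(\bm P_1)$ gives $\inf_{\Omega'(\bm P_1)}F\le\inf_{\Omega(\bm P_1)}F$ for free, where I write $F(\bm Q):=\sum_{i=1}^{s}\alpha_i\KLD{Q_i}{P_{1,i}}$. The content is the reverse inequality, and my plan is to show that for every $\bm Q^*\in\Omega'(\bm P_1)$ there is a sequence $\{\bm Q^{(k)}\}\subseteq\Omega(\bm P_1)$ with $\bm Q^{(k)}\to\bm Q^*$ and $F(\bm Q^{(k)})\to F(\bm Q^*)$, which would yield $\inf_{\Omega(\bm P_1)}F\le F(\bm Q^*)$. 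The only nontrivial case is $g^{\bm P_1}(\bm Q^*)=0$; when $g^{\bm P_1}(\bm Q^*)<0$ the point $\bm Q^*$ is already in $\Omega(\bm P_1)$.

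Fix such a boundary $\bm Q^*$ and choose $\{\bm P_0^{(k)}\}\subseteq\mscr P_0\cap\mcal S(\bm P_1)$ with $a_k:=\sum_{i=1}^{\ell}\alpha_i\KLD{Q^*_i}{P_{0,i}^{(k)}}-\lambda(\bm P_0^{(k)})\to 0$. Condition~1 of \Cref{prop:composite_achievability} makes $\closure{\mscr P_0}$ a compact subset of $\interior{\mscr D}$, so along a subsequence $\bm P_0^{(k)}\to\bm P_0^*\in\closure{\mscr P_0}\cap\mcal S(\bm P_1)$. Joint continuity of $\KLD{\cdot}{\cdot}$ on pairs whose second argument stays bounded away from $0$, combined with Condition~2 extending $\lambda$ to a continuous $\bar\lambda$ on $\closure{\mscr P_0}$, then yields $\sum_{i=1}^{\ell}\alpha_i\KLD{Q^*_i}{P_{0,i}^{*}}=\bar\lambda(\bm P_0^*)$. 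I split on the sign of $\bar\lambda(\bm P_0^*)$.

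In the generic case $\bar\lambda(\bm P_0^*)>0$, I define $Q_i^{(k)}:=(1-\eta_k)Q^*_i+\eta_k P_{0,i}^{(k)}$ for $i\le\ell$ and $Q_i^{(k)}:=P_{1,i}$ for $i>\ell$, with a mixing parameter $\eta_k\downarrow 0$ chosen to exceed $\max(a_k,0)/\lambda(\bm P_0^{(k)})$ (feasible because $\lambda(\bm P_0^{(k)})\to\bar\lambda(\bm P_0^*)>0$ so the denominator is eventually $\ge\bar\lambda(\bm P_0^*)/2$). Convexity of $\KLD{\cdot}{P_{0,i}^{(k)}}$ in its first argument gives $\sum_{i\le\ell}\alpha_i\KLD{Q_i^{(k)}}{P_{0,i}^{(k)}}\le(1-\eta_k)(\lambda(\bm P_0^{(k)})+a_k)<\lambda(\bm P_0^{(k)})$, certifying $g^{\bm P_1}(\bm Q^{(k)})<0$ and hence $\bm Q^{(k)}\in\Omega(\bm P_1)$. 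Continuity of $F$ in $\bm Q$ (using $\bm P_1\in\interior{\mscr D}$) yields $F(\bm Q^{(k)})\to F(\bm Q^*)$.

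The hard part is the degenerate case $\bar\lambda(\bm P_0^*)=0$, where the convex-combination trick breaks: the lower bound on $\eta_k$ need no longer shrink to $0$ when $\lambda(\bm P_0^{(k)})$ also vanishes. I resolve it by noting that $\sum_{i\le\ell}\alpha_i\KLD{Q^*_i}{P_{0,i}^{*}}=0$ forces $Q^*_i=P^*_{0,i}$ on the first $\ell$ coordinates, while $\bm Q^*,\bm P_0^*\in\mcal S(\bm P_1)$ forces agreement on the remaining $s-\ell$ coordinates, so in fact $\bm Q^*=\bm P_0^*$. Then the approximants $\bm P_0^{(k)}$ themselves do the job: setting $Q_i^{(k)}:=P_{0,i}^{(k)}$ for $i\le\ell$ and $Q_i^{(k)}:=P_{1,i}$ for $i>\ell$, one has $g^{\bm P_1}(\bm Q^{(k)})\le -\lambda(\bm P_0^{(k)})<0$ so $\bm Q^{(k)}\in\Omega(\bm P_1)$, while $\bm Q^{(k)}\to\bm Q^*$ and $F(\bm Q^{(k)})\to F(\bm Q^*)$ follow from continuity. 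Identifying and handling this degenerate case is the main obstacle; everything else is a straightforward convexity-plus-continuity argument.
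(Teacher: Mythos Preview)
Your proof is correct and follows essentially the same strategy as the paper's: both hinge on approximating boundary points of $\Omega'(\bm P_1)$ by points of $\Omega(\bm P_1)$, and both single out the degenerate situation where the witnessing $\bm P_0'$ has $\bar\lambda(\bm P_0')=0$ as the only genuine obstacle, resolving it by observing that then $\bm Q^*=\bm P_0^*$ and that nearby $\bm P_0^{(k)}\in\mscr P_0\cap\mcal S(\bm P_1)$ still have $\lambda(\bm P_0^{(k)})>0$. The paper organizes things a bit differently: it rewrites $\Omega(\bm P_1)$ and $\Omega'(\bm P_1)$ as unions of open and closed KL-balls $\mcal B(\bm P_0')$, $\bar{\mcal B}(\bm P_0')$ indexed by $\bm P_0'\in\closure{\mscr P_0\cap\mcal S(\bm P_1)}$, invokes the auxiliary \Cref{lemma:inf_over_union} and \Cref{lemma:inf_continuity}, and compares the two infima ball-by-ball. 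Your version instead works directly with a single $\bm Q^*$ and builds explicit convex-combination approximants, which makes it more self-contained (no auxiliary lemmas needed); the paper's decomposition is more modular and reuses machinery already in place for the achievability proof.
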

To proceed, further note that $g^{\bm P_1}(\bm Q)$ in \eqref{eq:def_g_P1} can be rewritten as 
\begin{equation}\label{eq:def_g_P1_ext}
g^{\bm P_1}(\bm Q) = \inf_{\bm P_0'\in\mscr P_0\cap \mcal{S}(\bm P_1)} \Big( \sum_{i=1}^{\ell} \alpha_i \KLD{Q_i}{P_{0,i}'} - \bar{\lambda}(\bm P_0') \Big).
\end{equation}
In general, $\mscr P_0\cap \mcal{S}(\bm P_1) \subseteq \closure{\mscr P_0\cap \mcal{S}(\bm P_1)} \subseteq \closure{\mscr P_0}\cap \mcal{S}(\bm P_1)$. Hence,
\begin{align}
g^{\bm P_1}(\bm Q) &= \inf_{\bm P_0'\in\mscr P_0\cap \mcal{S}(\bm P_1)} \Big( \sum_{i=1}^{\ell} \alpha_i \KLD{Q_i}{P_{0,i}'} - \bar{\lambda}(\bm P_0') \Big) \\
\ge \bar g^{\bm P_1}(\bm Q) &= \inf_{\bm P_0'\in\closure{\mscr P_0}\cap \mcal{S}(\bm P_1)} \Big( \sum_{i=1}^{\ell} \alpha_i \KLD{Q_i}{P_{0,i}'} - \bar\lambda(\bm P_0') \Big).
\end{align}

As we will show in \Cref{lemma:inf_continuity}, since the KL divergence and $\bar{\lambda}$ are continuous, taking infimum over $\mscr P_0\cap \mcal{S}(\bm P_1)$ in \eqref{eq:def_g_P1_ext} is the same as taking infimum over $\closure{\mscr P_0\cap \mcal{S}(\bm P_1)}$. Thus we can obtain a sufficient condition for which the converse and achievability match, that is, $\closure{\mscr P_0\cap \mcal{S}(\bm P_1)} = \closure{\mscr P_0}\cap \mcal{S}(\bm P_1)$. 
This sufficient condition will be leveraged to prove the main results for binary classification problems in \Cref{sec:results}. 

Let us summarize the above sufficient condition in the following proposition for convenience.

\begin{proposition}[Characterization of the optimal type-II error exponent]\label{prop:match}
Suppose the two conditions in \Cref{prop:composite_achievability} hold and $\closure{\mscr P_0\cap \mcal{S}(\bm P_1)} = \closure{\mscr P_0}\cap \mcal{S}(\bm P_1)$, then the optimal type-II error exponent of all tests that satisfy the universality constraints in \Cref{prop:composite_converse,prop:composite_achievability} is $e_1^*(\bm P_1)$ in \eqref{eq:Expt_CHT_converse}, and it can be achieved by a sequence of tests that do not depend on the actual $\bm{P}_1$.
\end{proposition}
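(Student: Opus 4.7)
The plan is to sandwich the type-II error exponent between the converse bound of Proposition~\ref{prop:composite_converse} and the achievability bound of Proposition~\ref{prop:composite_achievability}, and then use the extra closure hypothesis to squeeze the two bounds together. The upper bound $e_1(\bm P_1)\leq e_1^*(\bm P_1)$ is immediate from Proposition~\ref{prop:composite_converse}, since the universality constraints imposed here are the same as those assumed there. For the lower bound, Proposition~\ref{prop:composite_achievability} already supplies a sequence $\{\Phi_n\}$ -- constructed without reference to any particular $\bm P_1$ -- that meets both universality constraints and satisfies
\[
e_1(\bm P_1) \geq \inf_{\bm Q\in\mscr P_0\cup\Gamma_0\cup\bar\Omega(\bm P_1)} \sum_{i=1}^s \alpha_i \KLD{Q_i}{P_{1,i}}
\]
simultaneously for all $\bm P_1\in\mscr P_1$. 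The whole task therefore reduces to showing that the infima of the same KL-divergence objective over $\Omega(\bm P_1)$ and over $\bar\Omega(\bm P_1)$ coincide; the contributions from $\mscr P_0$ and $\Gamma_0$ are already identical in the two expressions.

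I would carry out this reduction in two steps. First, apply Lemma~\ref{lemma:rewrite_Omega} (whose hypotheses are precisely the two conditions of Proposition~\ref{prop:composite_achievability}) to replace the strict inequality in the definition of $\Omega(\bm P_1)$ by a non-strict one, producing the set $\Omega'(\bm P_1) = \{\bm Q\in\mcal S(\bm P_1)\mid g^{\bm P_1}(\bm Q)\leq 0\}$ whose associated infimum agrees with that over $\Omega(\bm P_1)$. Second, I would argue that under the closure hypothesis, $g^{\bm P_1}(\bm Q) = \bar g^{\bm P_1}(\bm Q)$ for every $\bm Q\in\mcal S(\bm P_1)$, so that $\Omega'(\bm P_1)$ and $\bar\Omega(\bm P_1)$ coincide as sets and the two infima are trivially equal.

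For the pointwise equality $g^{\bm P_1} = \bar g^{\bm P_1}$, I would follow the sketch already given in the text right after Lemma~\ref{lemma:rewrite_Omega}: rewrite $g^{\bm P_1}(\bm Q)$ using the continuous extension $\bar\lambda$ (which agrees with $\lambda$ on $\mscr P_0$) to obtain \eqref{eq:def_g_P1_ext}; observe that Condition~1 of Proposition~\ref{prop:composite_achievability} keeps the $P_0'$-argument uniformly bounded away from the boundary of the simplex, so the map $\bm P_0'\mapsto \sum_{i=1}^\ell \alpha_i \KLD{Q_i}{P'_{0,i}} - \bar\lambda(\bm P_0')$ is continuous on the relevant compact region; apply the infimum-continuity lemma (\Cref{lemma:inf_continuity} in the sequel) to push the infimum from $\mscr P_0\cap \mcal S(\bm P_1)$ to its closure $\closure{\mscr P_0\cap \mcal S(\bm P_1)}$; and finally invoke the assumed identity $\closure{\mscr P_0\cap \mcal S(\bm P_1)} = \closure{\mscr P_0}\cap \mcal S(\bm P_1)$ to match the domain of $\bar g^{\bm P_1}(\bm Q)$.

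The main obstacle is precisely this last identification. A priori $\mscr P_0\cap\mcal S(\bm P_1)$ could be much smaller than $\closure{\mscr P_0}\cap\mcal S(\bm P_1)$ -- the slice could even miss the open set $\mscr P_0$ entirely while touching its closure, in which case $g^{\bm P_1}(\bm Q) = +\infty$ while $\bar g^{\bm P_1}(\bm Q)$ is finite and $\bar\Omega(\bm P_1)$ is strictly larger than $\Omega'(\bm P_1)$. The closure equality stated in the proposition is exactly the topological condition that rules this pathology out; once it is granted the remainder of the argument is a routine combination of Lemma~\ref{lemma:rewrite_Omega}, the infimum-continuity lemma, and the already-constructed universal test of Proposition~\ref{prop:composite_achievability}, whose independence of $\bm P_1$ gives the final clause of the statement.
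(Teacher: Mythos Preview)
Your proposal is correct and follows essentially the same route as the paper: sandwich via Propositions~\ref{prop:composite_converse} and~\ref{prop:composite_achievability}, invoke Lemma~\ref{lemma:rewrite_Omega} to pass from $\Omega(\bm P_1)$ to $\Omega'(\bm P_1)$, then use Lemma~\ref{lemma:inf_continuity} together with the closure hypothesis to identify $g^{\bm P_1}$ with $\bar g^{\bm P_1}$ and hence $\Omega'(\bm P_1)$ with $\bar\Omega(\bm P_1)$. The paper presents exactly this argument in the discussion immediately preceding the statement of the proposition.
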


Before we proceed to the specialization to the binary classification problem, let us close this sub-section by taking a look at a special case to better understand the general result. When $\ell=0$, all the samples arrive sequentially with a fixed ratio. In this case, 
it can be shown that the optimal type-II error exponent can be achieved and simplified, even if $\closure{\mscr P_0\cap \mcal{S}(\bm P_1)} \neq \closure{\mscr P_0}\cap \mcal{S}(\bm P_1)$.
\begin{corollary}[Fully-sequential composite hypothesis testing]\label{cor:fully-seq_CHT}
    Suppose the two conditions in \Cref{prop:composite_achievability} hold and $\ell=0$, then the  optimal type-II error exponent of all tests that satisfy the universality constraints in \Cref{prop:composite_converse,prop:composite_achievability} is
    \begin{equation}
        e_1^*(\bm P_1)=\inf_{\bm Q\in\mscr P_0\cup \Gamma_0} \sum_{i=1}^s \alpha_i \KLD{Q_i}{P_{1,i}},
    \end{equation}
    and it can be achieved by a sequence of tests that do not depend on the actual $\bm{P}_1$.
\end{corollary}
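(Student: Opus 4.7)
The plan is to apply \Cref{prop:composite_converse} and \Cref{prop:composite_achievability} in the special case $\ell=0$ and to verify that the additional sets $\Omega(\bm P_1)$ and $\bar\Omega(\bm P_1)$ contribute nothing to the infimum. The first observation I would make is that with $\ell=0$, the slice collapses: $\mcal{S}(\bm P_1)=\lbp\bm Q\in\mscr D\mv Q_i=P_{1,i}\ \forall\,i=1,\dots,s\rbp=\lbp\bm P_1\rbp$, and the sum $\sum_{i=1}^{\ell}\alpha_i \KLD{Q_i}{P_{0,i}'}$ appearing in $g^{\bm P_1}$ and $\bar g^{\bm P_1}$ is empty. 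Consequently both $\Omega(\bm P_1)$ and $\bar\Omega(\bm P_1)$ are subsets of $\lbp\bm P_1\rbp$, and everything reduces to understanding whether this single point is included.

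For the converse, I would argue that $\Omega(\bm P_1)=\emptyset$. Indeed, $\mscr P_0\cap\mcal{S}(\bm P_1)=\mscr P_0\cap\lbp\bm P_1\rbp=\emptyset$ because $\mscr P_0$ and $\mscr P_1$ are disjoint and $\bm P_1\in\mscr P_1$; hence $g^{\bm P_1}(\bm P_1)$ is an infimum over an empty set, equal to $+\infty$, which is not negative, so $\bm P_1\notin\Omega(\bm P_1)$. Substituting into \eqref{eq:Expt_CHT_converse} gives the converse half of the corollary.

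The achievability requires more care because $\bar g^{\bm P_1}$ uses $\closure{\mscr P_0}$ in place of $\mscr P_0$. I would split into two cases. If $\bm P_1\notin\closure{\mscr P_0}$, the same empty-set argument yields $\bar\Omega(\bm P_1)=\emptyset$ and the bound in \eqref{eq:Expt_CHT_achievable} already matches the converse. If $\bm P_1\in\closure{\mscr P_0}$, then $\bar g^{\bm P_1}(\bm P_1)=-\bar\lambda(\bm P_1)\leq 0$ by condition~2 of \Cref{prop:composite_achievability}, so $\bar\Omega(\bm P_1)=\lbp\bm P_1\rbp$ and contributes $\sum_{i=1}^s\alpha_i\KLD{P_{1,i}}{P_{1,i}}=0$ to the infimum in \eqref{eq:Expt_CHT_achievable}. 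To absorb this zero into $\inf_{\bm Q\in\mscr P_0}\sum_i\alpha_i\KLD{Q_i}{P_{1,i}}$, I would pick any sequence $\bm Q^{(k)}\in\mscr P_0$ with $\bm Q^{(k)}\to\bm P_1$ (which exists since $\bm P_1\in\closure{\mscr P_0}$) and invoke continuity of KL divergence on subsets of $\mscr D$ that are bounded away from $\partial\mscr D$ (which holds by condition~1 of \Cref{prop:composite_achievability}, together with $\bm P_1\in\mscr P_1\subseteq\interior{\mscr D}$) to conclude $\sum_i\alpha_i\KLD{Q^{(k)}_i}{P_{1,i}}\to 0$. Either way, the infimum over $\mscr P_0\cup\Gamma_0\cup\bar\Omega(\bm P_1)$ equals that over $\mscr P_0\cup\Gamma_0$. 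Since the test from \Cref{prop:composite_achievability} does not depend on $\bm P_1$, the ``uniformly in $\bm P_1$'' clause of the corollary is automatic.

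The main subtlety is that the sufficient condition of \Cref{prop:match}, namely $\closure{\mscr P_0\cap\mcal{S}(\bm P_1)}=\closure{\mscr P_0}\cap\mcal{S}(\bm P_1)$, fails exactly when $\bm P_1\in\partial\mscr P_0\setminus\mscr P_0$: the left-hand side is empty while the right-hand side equals $\lbp\bm P_1\rbp$. So \Cref{prop:match} cannot be invoked as a black box, and the only delicate point is the ad hoc argument for this boundary case showing that the zero contributed by $\bar\Omega(\bm P_1)$ is already present in $\inf_{\bm Q\in\mscr P_0}\sum_i\alpha_i\KLD{Q_i}{P_{1,i}}$ via the KL-continuity limit above. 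Everything else is a direct specialization of the two propositions to $\ell=0$.
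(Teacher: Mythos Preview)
Your proposal is correct and follows essentially the same line as the paper's proof: both collapse $\mcal{S}(\bm P_1)$ to $\{\bm P_1\}$, note that $\mscr P_0\cap\mcal{S}(\bm P_1)=\emptyset$, and split according to whether $\bm P_1\in\closure{\mscr P_0}$; in the boundary case both arguments conclude that $\inf_{\bm Q\in\mscr P_0}\sum_i\alpha_i\KLD{Q_i}{P_{1,i}}=0$ via continuity (the paper packages this as \Cref{lemma:inf_continuity}, you do it by hand with a convergent sequence). The only cosmetic difference is that in the non-boundary case the paper invokes \Cref{prop:match} (whose hypothesis $\closure{\mscr P_0\cap\mcal{S}(\bm P_1)}=\closure{\mscr P_0}\cap\mcal{S}(\bm P_1)=\emptyset$ then holds), whereas you verify $\bar\Omega(\bm P_1)=\emptyset$ directly.
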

\begin{proof}
    Since $\mscr P_0$ and $\mscr P_1$ are disjoint, we have $\mcal{S}(\bm P_1)=\{\bm P_1\}$ and $\mscr P_0\cap \mcal{S}(\bm P_1)=\emptyset$. 
    If $\closure{\mscr P_0}\cap \mcal{S}(\bm P_1)=\emptyset$, simply apply \Cref{prop:match}. Otherwise, $\closure{\mscr P_0}\cap \mcal{S}(\bm P_1)=\{\bm P_1\}$, and thus $\bm P_1\in\closure{\mscr P_0}$. With the help of \Cref{lemma:inf_continuity} (presented later in \Cref{sec:achievability}), we can observe that
    \begin{align}
        0\leq e_1^*(\bm P_1) &\leq \inf_{\bm Q\in\mscr P_0\cup \Gamma_0} \sum_{i=1}^s \alpha_i \KLD{Q_i}{P_{1,i}}\\
        &\leq\inf_{\bm Q\in\mscr P_0} \sum_{i=1}^s \alpha_i \KLD{Q_i}{P_{1,i}}\\
        &=\inf_{\bm Q\in\closure{\mscr P_0}} \sum_{i=1}^s \alpha_i \KLD{Q_i}{P_{1,i}}=0,
    \end{align}
    where the last equality holds because $\bm P_1\in\closure{\mscr P_0}$.
\end{proof}
\begin{remark}
    In the fixed-length setup ($\ell=s$), the result in \cite{LevitanMerhav_02} shows matching converse and achievability without the additional conditions in \Cref{prop:composite_achievability}. It is worth notice that if $\mscr P_0$ is bounded away from the boundary of $\mscr D$, then the optimal type-II error exponent can be rewritten as the following:
    \begin{equation}
        e^*_{1,\mathsf{fix}}(\bm P_1) = \inf_{\substack{\bm Q\in\mscr D \\ g_1(\bm Q)\leq 0}} \sum_{i=1}^s \alpha_i \KLD{Q_i}{P_{1,i}}.
    \end{equation}
    This result follows from the same argument as in \Cref{lemma:rewrite_Omega} and some observations in \Cref{sec:discussion}.
\end{remark}

\section{Main Results on Universal Classification with Empirically Observed Statistics}\label{sec:results}
Now we switch our focus back to the binary classification problem and present the optimal type-II error exponents in different sequential setups.
To present the results, we first introduce two divergences. 
For $P,Q\in\mcal{P}(\mcal{X})$, the R\'{e}nyi Divergence of order $\frac{\alpha}{1+\alpha}$ of $P$ from $Q$ can be expressed as
\begin{equation}
    \RenyiD{P}{Q}{\frac{\alpha}{1+\alpha}} = \min_{V\in\mcal{P}(\mcal{X})} \{\alpha\KLD{V}{P} + \KLD{V}{Q} \},\label{eq:Renyi}
\end{equation}
and the minimizer is $V^*$, where
\begin{equation}\label{eq:Renyi_minimizer}
    V^*(x)=\frac{P(x)^{\frac{\alpha}{1+\alpha}} Q(x)^{\frac{1}{1+\alpha}}}{\sum_{x'\in\mcal X}P(x')^{\frac{\alpha}{1+\alpha}} Q(x')^{\frac{1}{1+\alpha}}} \quad\text{for } x\in\mcal X.
\end{equation}
The $\alpha$-weighted generalized Jensen-Shannon (GJS) divergence of $Q$ from $P$ is defined as
\begin{equation}
    \textstyle\GJS{P}{Q}{\alpha} = \alpha\KLD{P}{\frac{\alpha P+Q}{\alpha+1}} + \KLD{Q}{\frac{\alpha P+Q}{\alpha+1}}.
\end{equation}
We start with the \fullseq\ setup where all the samples arrive sequentially.
\begin{theorem}[\fullseq]\label{thm:complete_fully-seq}
    Given $\lambda:\distset\to (0,\infty)$ and let $\{\Phi_n\}$ be a sequence of \fullseq\ tests such that for any underlying distributions $(P_0,P_1)\in\distset$,
    \begin{itemize}
        \item for each $\Phi_n = (\tau_n,\delta_n)$, 
        $\mbb E_\theta[\tau_n|P_0,P_1]\leq n\ \forall\,\theta\in\{0,1\}$,
        \item 
        $\eT{0}(P_0,P_1)\geq \lambda(P_0,P_1)$.
    \end{itemize}
    Then for any $(P_0,P_1)\in\distset$, $\eT{1}(P_0,P_1)$ is upper bounded by
    \begin{equation}
         \eseq(P_0,P_1):= \min\lbp \RenyiD{P_0}{P_1}{\frac{\alpha}{1+\alpha}},\ \kappa(P_0,P_1)\rbp,\label{eq:fully-sequential_converse}
    \end{equation}
    where $\kappa(P_0,P_1)=$
    \begin{align}
\inf_{\substack{(Q_0,Q_1)\in\distset\\g_1(Q_1, Q_0,Q_1)<0}} \Big( \KLD{Q_1}{P_1} + \alpha\KLD{Q_0}{P_0} + \beta\KLD{Q_1}{P_1} \Big),
    \end{align}
    and $g_1(Q,Q_0,Q_1)=$
    \begin{align}
    \inf_{(P_0',P_1')\in\distset} \Big(\KLD{Q}{P_0'} &+ \alpha\KLD{Q_0}{P_0'}\nonumber\\
         &+ \beta\KLD{Q_1}{P_1'} - \lambda(P_0',P_1')\Big).
    \end{align}
    Moreover, the upper bound $\eseq(P_0,P_1)$ can be achieved simultaneously for all $(P_0,P_1)\in\distset$.
\end{theorem}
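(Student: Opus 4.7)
The plan is to recognize the \fullseq\ classification problem as an instance of the general composite hypothesis testing problem of \Cref{subsec:formulation_general} with $s=3$ sequences and $\ell=0$ fixed-length sequences, and then invoke \Cref{cor:fully-seq_CHT} directly. Concretely, I take $\mcal X_1=\mcal X_2=\mcal X_3=\mcal X$ with weights $\alpha_1=1$, $\alpha_2=\alpha$, $\alpha_3=\beta$; the three sequences correspond respectively to the testing samples, the $P_0$-training samples, and the $P_1$-training samples. The composite hypothesis sets are $\mscr P_0=\lbp(P,P,Q):(P,Q)\in\distset\rbp$ and $\mscr P_1=\lbp(Q,P,Q):(P,Q)\in\distset\rbp$, with the threshold lifted via $\lambda((P,P,Q)):=\lambda(P,Q)$. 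Under this correspondence a \fullseq\ test with underlying pair $(P_0,P_1)$ becomes a test in the reduced problem with $\bm P_0=(P_0,P_0,P_1)$ under hypothesis~$0$ and $\bm P_1=(P_1,P_0,P_1)$ under hypothesis~$1$, preserving both the expected stopping time and the two error probabilities.

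Next I would verify the two conditions required by \Cref{cor:fully-seq_CHT}. Since every coordinate of any $\bm P\in\mscr P_0\cup\mscr P_1$ lies in $\mcal P_\varepsilon$, the set $\mscr P_0$ is bounded away from the boundary of $\mscr D$; the continuity condition on $\lambda$ is inherited from \Cref{assumption:lambda} by extending componentwise. \Cref{cor:fully-seq_CHT} then yields
\begin{equation*}
e_1^*(\bm P_1)=\inf_{\bm Q\in\mscr P_0\cup\Gamma_0}\bigl(\KLD{Q_1}{P_1}+\alpha\KLD{Q_2}{P_0}+\beta\KLD{Q_3}{P_1}\bigr),
\end{equation*}
and it is attained by a test that does not depend on $(P_0,P_1)$.

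The remaining step is to evaluate the two pieces of the infimum. For $\bm Q=(P_0',P_0',P_1')\in\mscr P_0$ the objective reads $\KLD{P_0'}{P_1}+\alpha\KLD{P_0'}{P_0}+\beta\KLD{P_1'}{P_1}$; minimizing in $P_1'=P_1$ kills the last term, and the variational formula \eqref{eq:Renyi} identifies the remaining infimum in $P_0'$ as $\RenyiD{P_0}{P_1}{\frac{\alpha}{1+\alpha}}$. I would briefly argue that the Rényi minimizer $V^*$ in \eqref{eq:Renyi_minimizer} lies in $\mcal P_\varepsilon$ (it is a geometric mean of $P_0,P_1\in\mcal P_\varepsilon$) and is distinct from $P_1$ whenever $P_0\neq P_1$, so that the optimum is actually attained within the pairs admitted by $\distset$. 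For $\bm Q=(Q_1,Q_0,Q_1)\in\Gamma_0$, substituting $\bm Q$ and $\bm P_0'=(P_0',P_0',P_1')$ into the general-CHT $g_1$ produces exactly the function $g_1(Q_1,Q_0,Q_1)$ stated in the theorem, while the objective simplifies to $(1+\beta)\KLD{Q_1}{P_1}+\alpha\KLD{Q_0}{P_0}$, whose infimum over $\lbp g_1<0\rbp$ is by definition $\kappa(P_0,P_1)$. Taking the minimum of the two pieces delivers $\eseq(P_0,P_1)$.

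The main obstacle I anticipate is the boundary/openness issue: $\distset$ rules out the diagonal $P_0=P_1$, and both $\mscr P_0$ and the region $\lbp g_1<0\rbp$ fail to be closed, so some care is needed to argue that the infima are unchanged when closures are taken and that the $\mcal P_\varepsilon$ restriction does not cost anything compared to the full simplex over which the Rényi identity is stated. These points can be handled by the same continuity arguments as in \Cref{lemma:rewrite_Omega} and \Cref{lemma:inf_continuity}, together with the observation that the Rényi minimizer stays in the interior of $\mcal P_\varepsilon$; once this is in place, the rest of the proof is essentially bookkeeping on top of \Cref{cor:fully-seq_CHT}.
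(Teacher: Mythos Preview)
Your approach is essentially the paper's: recast the \fullseq\ problem as the $\ell=0$ instance of the general composite hypothesis testing framework (the paper orders the sequences as $(T_0,T_1,X)$ rather than your $(X,T_0,T_1)$, but this is just a relabeling), invoke the $\ell=0$ characterization (you cite \Cref{cor:fully-seq_CHT}, the paper cites \Cref{prop:match}; these coincide here), and then evaluate the infimum over $\mscr P_0$ and $\Gamma_0$ exactly as you outline to extract the R\'enyi and $\kappa$ terms. One point worth flagging: you appeal to \Cref{assumption:lambda} to supply the continuity hypothesis needed by \Cref{cor:fully-seq_CHT}, but \Cref{thm:complete_fully-seq} as stated imposes no such assumption on $\lambda$; the paper's Section~4.1 argument has the same implicit dependence via \Cref{prop:match}, and the continuity requirement is only removed later in the discussion through \Cref{prop:substituting_lambda} and \Cref{cor:fully-seq_CHT_stronger}.
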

\Cref{thm:complete_fully-seq} is proved as a corollary of \Cref{prop:match} by identifying it as a special case of the general composite hypothesis testing problem. Detailed derivation can be found in \Cref{subsec:proof_of_thm1-2}.
In \eqref{eq:fully-sequential_converse}, $\RenyiD{P_0}{P_1}{\frac{\alpha}{1+\alpha}}$ corresponds to the region of true distributions under ground truth $0$, that is $\mscr P_0$ in the general setup. 
On the other hand, $\kappa(P_0,P_1)$  corresponds to the region of true distributions under ground truth $1$ that falls in the $\lambda$-balls centered at distributions under ground truth $0$, that is $\Gamma_0$ in the general setup.

Next we look at the \semione\ setup where the testing samples $X_k$'s arrive sequentially and the numbers of $P_0$/$P_1$-training samples are fixed.
\begin{theorem}[\semione]\label{thm:complete_semi-seq}
    Given $\lambda:\distset\to (0,\infty)$ and let $\{\Phi_n\}$ be a sequence of \semione\ tests such that for any underlying distributions $(P_0,P_1)\in\distset$,
    \begin{itemize}
        \item for each $\Phi_n = (\tau_n,\delta_n)$, 
        $\mbb E_\theta[\tau_n|P_0,P_1]\leq n\ \forall\,\theta\in\{0,1\}$,
        \item 
        $\eT{0}(P_0,P_1)\geq \lambda(P_0,P_1)$.
    \end{itemize}
    Then for any $(P_0,P_1)\in\distset$, $\eT{1}(P_0,P_1)$ is upper bounded by
    \begin{align}
        &\esemione(P_0,P_1):=\nonumber\\
        &\hspace{1cm}\min\lbp \RenyiD{P_0}{P_1}{\frac{\alpha}{1+\alpha}},\ \kappa(P_0,P_1),\ 
        \mu(P_0,P_1)\rbp,\label{eq:semi-sequential_converse}
    \end{align}
    where
    \begin{equation}
        \mu(P_0,P_1) = \inf_{\substack{Q_0,Q_1\in\mcal{P}(\mcal X)\\g(Q_0,Q_1) < 0}} \Big(\alpha\KLD{Q_0}{P_0} + \beta\KLD{Q_1}{P_1}\Big),
    \end{equation}
    and $g(Q_0,Q_1)=$
    \begin{align}
    \inf_{P_1'\in\mcal{P}_\varepsilon\setminus \{P_1\}} \Big(\alpha\KLD{Q_0}{P_1} + \beta\KLD{Q_1}{P_1'} - \lambda(P_1,P_1')\Big).
    \end{align}
    Moreover, if $\lambda$ satisfies \Cref{assumption:lambda}, then $\esemione(P_0,P_1)$ can be achieved simultaneously for all $(P_0,P_1)\in\distset$.
\end{theorem}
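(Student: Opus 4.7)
The plan is to realize Theorem~\ref{thm:complete_semi-seq} as an instance of the general composite hypothesis testing problem and then invoke Proposition~\ref{prop:match}. Concretely, I would set $s=3$, $\ell=2$, listing the two fixed-length training sequences first (with ratios $\alpha_1=\alpha$, $\alpha_2=\beta$) and the sequential testing sequence last (with $\alpha_3=1$). Under this bookkeeping the composite hypothesis sets are
\begin{equation*}
\mscr P_0 = \{(P_0',P_1',P_0') : (P_0',P_1')\in\distset\},\quad \mscr P_1 = \{(P_0',P_1',P_1') : (P_0',P_1')\in\distset\},
\end{equation*}
and for $\bm P_1 = (P_0,P_1,P_1)\in\mscr P_1$ the target quantity is $e_1^*(\bm P_1)$ given by Proposition~\ref{prop:composite_converse}.

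Next I would verify the hypotheses of Proposition~\ref{prop:match}. Boundedness of $\mscr P_0$ away from $\partial\mscr D$ is immediate from the definition $\mcal P_\varepsilon$. The continuous extension of $\lambda$ from $\mscr P_0$ to $\closure{\mscr P_0}$ is inherited from the continuous extension $\bar\lambda$ on $\mcal P_\varepsilon\times\mcal P_\varepsilon$ promised by Assumption~\ref{assumption:lambda}, by pulling back through the projection $(P_0',P_1',P_0')\mapsto(P_0',P_1')$. The slice identity $\closure{\mscr P_0\cap\mcal S(\bm P_1)}=\closure{\mscr P_0}\cap\mcal S(\bm P_1)$ would be checked by direct computation: $\mcal S(\bm P_1)$ fixes the third coordinate to $P_1$, which together with the equality of first and third coordinates in $\mscr P_0$ forces the first coordinate to be $P_1$; so both sides equal $\{(P_1,P_1',P_1) : P_1'\in\mcal P_\varepsilon\}$, using that $\mcal P_\varepsilon\setminus\{P_1\}$ is dense in $\mcal P_\varepsilon$ when $|\mcal X|\ge 2$.

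Having verified the hypotheses, I would then compute the three pieces of the infimum in \eqref{eq:Expt_CHT_converse} and show they match the three terms inside the $\min$ of \eqref{eq:semi-sequential_converse}. For the $\mscr P_0$ piece, $\bm Q = (P_0',P_1',P_0')$ gives the objective $\alpha\KLD{P_0'}{P_0}+\beta\KLD{P_1'}{P_1}+\KLD{P_0'}{P_1}$; optimizing $P_1' = P_1$ zeros the $\beta$-term, and the remaining minimization over $P_0'$ is exactly the variational characterization~\eqref{eq:Renyi} of $\RenyiD{P_0}{P_1}{\frac{\alpha}{1+\alpha}}$. For the $\Gamma_0$ piece, substituting $\bm Q=(Q_0',Q_1',Q_1')\in\mscr P_1$ into the definition of $g_1$ and comparing term-by-term reproduces the definition of $\kappa(P_0,P_1)$. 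For the $\Omega(\bm P_1)$ piece, the constraint $\bm P_0'\in\mscr P_0\cap\mcal S(\bm P_1)$ forces $\bm P_0' = (P_1,P_1'',P_1)$ with $P_1''\neq P_1$, which collapses $g^{\bm P_1}$ to exactly the $g$ of the theorem and makes the resulting infimum equal $\mu(P_0,P_1)$.

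The main obstacle I anticipate is the closure identity $\closure{\mscr P_0\cap\mcal S(\bm P_1)}=\closure{\mscr P_0}\cap\mcal S(\bm P_1)$: it hinges on the fact that removing the diagonal $\{(P,P')\in\mcal P_\varepsilon^2 : P=P'\}$ (implicit in $\distset$) is not binding once we have already fixed the first coordinate to the single value $P_1$, so only the single point $P_1'' = P_1$ is excluded in the second coordinate, and this is recovered in the closure. Beyond this closure check, the argument is a direct substitution into Proposition~\ref{prop:match}, with no further essential difficulty.
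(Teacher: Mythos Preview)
Your proposal is correct and follows essentially the same route as the paper: cast \semione\ as the general composite problem with $s=3$, $\ell=2$, ordering $(T_0,T_1,X)$, verify the slice identity $\closure{\mscr P_0\cap\mcal S(\bm P_1)}=\closure{\mscr P_0}\cap\mcal S(\bm P_1)$ exactly as you describe, and then unpack $e_1^*(\bm P_1)$ into the three named terms. One small point of bookkeeping: the converse half of the theorem is stated for arbitrary $\lambda$ (no continuity assumption), so for that half you should invoke Proposition~\ref{prop:composite_converse} directly rather than Proposition~\ref{prop:match}; only the achievability half needs Assumption~\ref{assumption:lambda} and the slice identity.
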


The term $\mu(P_0,P_1)$ comes from the limitation by the fixed-length training sequences, which marks the difference between \Cref{thm:complete_semi-seq} and the \fullseq\ result in \Cref{thm:complete_fully-seq}.
Likewise, \Cref{thm:complete_semi-seq} is proved as a corollary of \Cref{prop:match} in \Cref{subsec:proof_of_thm1-2}.

The next theorem summarizes the result for the \semitwo\ setup where the testing samples have fixed length and the training samples arrive sequentially.
\begin{theorem}[\semitwo]\label{thm:complete_semi-seq-2}
    Given $\lambda:\distset\to (0,\infty)$ and let $\{\Phi_n\}$ be a sequence of \semitwo\ tests such that for any underlying distributions $(P_0,P_1)\in\distset$,
    \begin{itemize}
        \item for each $\Phi_n = (\tau_n,\delta_n)$, 
        $\mbb E_\theta[\tau_n|P_0,P_1]\leq n\ \forall\,\theta\in\{0,1\}$,
        \item 
        $\eT{0}(P_0,P_1)\geq \lambda(P_0,P_1)$.
    \end{itemize}
    Then for any $(P_0,P_1)\in\distset$, $\eT{1}(P_0,P_1)$ is upper bounded by
    \begin{align}
         &\esemitwo(P_0,P_1):= \nonumber\\
         &\hspace{1cm}\min\lbp \RenyiD{P_0}{P_1}{\frac{\alpha}{1+\alpha}},\ \kappa(P_0,P_1),\ \nu(P_0,P_1)\rbp.\label{eq:semi-sequential-2_converse}
    \end{align}
    where
    \begin{equation}
        \nu(P_0,P_1) = \inf_{\substack{Q\in\mcal{P}(\mcal X)\\ \KLD{Q}{P_0}<\lambda(P_0,P_1)}}\KLD{Q}{P_1}.
    \end{equation}
    Moreover, if $\lambda$ satisfies \Cref{assumption:lambda}, then $\esemitwo(P_0,P_1)$ can be achieved simultaneously for all $(P_0,P_1)\in\distset$.
\end{theorem}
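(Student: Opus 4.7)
The plan is to derive \Cref{thm:complete_semi-seq-2} as a direct specialization of the general results in \Cref{sec:general_results}, parallel to how \Cref{thm:complete_fully-seq,thm:complete_semi-seq} are obtained. Set $s=3$ with alphabets $\mcal X_i=\mcal X$ and ratios $\alpha_1=1$, $\alpha_2=\alpha$, $\alpha_3=\beta$; take $\ell=1$ so that only the testing sequence is fixed-length and both training sequences are sequentially observed. Define
\begin{align*}
\mscr P_0 &= \lbp (P,P,P'):P,P'\in\mcal P_\varepsilon,\,P\neq P'\rbp,\\
\mscr P_1 &= \lbp (P',P,P'):P,P'\in\mcal P_\varepsilon,\,P\neq P'\rbp,
\end{align*}
and the threshold by $\lambda(P,P,P'):=\lambda(P,P')$. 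Under ground truth $\theta$ with true pair $(P_0,P_1)\in\distset$, the joint law of $(X^n,T_0^{\lce\alpha k\rce},T_1^{\lce\beta k\rce})_k$ matches that in the general setup with $\bm P_\theta\in\mscr P_\theta$, so the filtration of the \semitwo\ test coincides with that defined in \Cref{subsec:formulation_general}.

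The converse follows from \Cref{prop:composite_converse} applied to $\bm P_1=(P_1,P_0,P_1)$. The infimum over $\mscr P_0$ reduces to $\inf_{P_0'\in\mcal P_\varepsilon}\lbp\KLD{P_0'}{P_1}+\alpha\KLD{P_0'}{P_0}\rbp$ since the $\beta$-term is killed by the free choice $P_1'=P_1$; by the variational identity \eqref{eq:Renyi} this equals $\RenyiD{P_0}{P_1}{\frac{\alpha}{1+\alpha}}$. The infimum over $\Gamma_0$ is, by direct inspection of the definition, exactly $\kappa(P_0,P_1)$ as in \Cref{thm:complete_fully-seq}. The new ingredient is $\Omega(\bm P_1)$: the slice $\mcal S(\bm P_1)$ forces $Q_2=P_0$ and $Q_3=P_1$, and the intersection $\mscr P_0\cap\mcal S(\bm P_1)$ collapses to the singleton $\lbp (P_0,P_0,P_1)\rbp$; hence $g^{\bm P_1}(\bm Q)=\KLD{Q_1}{P_0}-\lambda(P_0,P_1)$ and the resulting infimum evaluates to $\inf_{Q:\,\KLD{Q}{P_0}<\lambda(P_0,P_1)}\KLD{Q}{P_1}=\nu(P_0,P_1)$. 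Taking the minimum over the three subsets yields the claimed $\esemitwo(P_0,P_1)$.

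For the matching achievability under \Cref{assumption:lambda}, I would invoke \Cref{prop:match}. Its first hypothesis (that $\mscr P_0$ is bounded away from the boundary of $\mscr D$) is immediate because $\mcal P_\varepsilon$ is compactly contained in $\interior{\mcal P(\mcal X)}$, and its second (continuous extension $\bar\lambda$) is precisely \Cref{assumption:lambda}. The third condition, $\closure{\mscr P_0\cap\mcal S(\bm P_1)}=\closure{\mscr P_0}\cap\mcal S(\bm P_1)$, is trivial here since $\mscr P_0\cap\mcal S(\bm P_1)=\lbp\bm P_0\rbp$ is already closed and any element of $\closure{\mscr P_0}=\lbp (P,P,P'):P,P'\in\mcal P_\varepsilon\rbp$ restricted to $\mcal S(\bm P_1)$ must equal $(P_0,P_0,P_1)=\bm P_0$; both sides are $\lbp\bm P_0\rbp$.

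The main obstacle of this particular proof is essentially nil once the general machinery is in place---the singleton structure of $\mscr P_0\cap\mcal S(\bm P_1)$ makes the closure interchange vacuous and each of the three infima reduces mechanically. The real conceptual difficulty lies upstream in \Cref{prop:composite_achievability}: constructing a two-phase test that (i) enforces $\eT{0}\geq\lambda$ universally, (ii) keeps $\mbb E[\tau]\leq n$ under either hypothesis, and (iii) in its continuation phase runs long enough (e.g., to $n^2$) that the sequentially observed training samples effectively reveal $P_0,P_1$, thereby activating the $\bar\Omega(\bm P_1)=\lbp Q:\KLD{Q}{P_0}\leq\lambda(P_0,P_1)\rbp$ region that delivers the $\nu(P_0,P_1)$ summand.
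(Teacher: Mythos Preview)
Your proposal is correct and follows essentially the same route as the paper: the paper likewise arranges the sequences as $(X,T_0,T_1)$ with $s=3$, $\ell=1$, $\alpha_1=1$, $\alpha_2=\alpha$, $\alpha_3=\beta$, identifies $\mscr P_0\cap\mcal S(\bm P_1)=\closure{\mscr P_0}\cap\mcal S(\bm P_1)=\{(P_0,P_0,P_1)\}$ so that \Cref{prop:match} applies, and then evaluates the three infima over $\mscr P_0$, $\Gamma_0$, $\Omega(\bm P_1)$ exactly as you do to recover $\RenyiD{P_0}{P_1}{\frac{\alpha}{1+\alpha}}$, $\kappa(P_0,P_1)$, and $\nu(P_0,P_1)$.
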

Similar as in \Cref{thm:complete_semi-seq}, the third term comes from the limitation limitation by the fixed-length testing sequences. 
Notice that this term $\nu(P_0,P_1)$ is exactly the optimal type-II error exponent for the binary hypothesis problem, given the type-I error exponent $e_0(P_0,P_1)\geq \lambda(P_0,P_1)$. This upper bound is natural because the test cannot do better than when the underlying distributions $(P_0,P_1)$ are known. \Cref{thm:complete_semi-seq-2} is proved as a corollary of \Cref{prop:match} in \Cref{subsec:proof_of_thm3}.

Finally, for comparison, we also restate the \fixed\ result in \cite{LevitanMerhav_02}, restricting the underlying distributions within $\distset$.
\begin{theorem}[\fixed \cite{LevitanMerhav_02}]\label{thm:complete_fixed-length}
    Given $\lambda:\distset\to (0,\infty)$ and let $\{\Phi_n\}$ be a sequence of \fixed\ tests such that for any underlying distributions $(P_0,P_1)\in\distset$, the type-I error exponent satisfies $\eT{0}(P_0,P_1)\geq \lambda(P_0,P_1)$. Then for any $(P_0,P_1)\in\distset$, $\eT{1}(P_0,P_1)$ is upper bounded by
    \begin{align}
        &\efix(P_0,P_1):= \nonumber\\
        &\inf_{\substack{Q,Q_0,Q_1\in\mcal P(\mcal X)\\g_1(Q,Q_0,Q_1) < 0}} \Big(\KLD{Q}{P_1} + \alpha\KLD{Q_0}{P_0} + \beta\KLD{Q_1}{P_1}\Big),
    \end{align}
    Moreover, the upper bound $\efix(P_0,P_1)$ can be achieved simultaneously for all $(P_0,P_1)\in\distset$.
\end{theorem}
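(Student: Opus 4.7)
The plan is to identify \Cref{thm:complete_fixed-length} as the $\ell = s$ specialization of the general composite hypothesis testing framework of \Cref{sec:general_results}, applied to the three-sequence binary classification setup. Specifically, I would take $s = 3$ with $\mcal X_i = \mcal X$ and ratios $(\alpha_1,\alpha_2,\alpha_3) = (1,\alpha,\beta)$, corresponding to the (testing, $P_0$-training, $P_1$-training) sequences, and composite sets
\[
    \mscr P_0 = \lbp (P_0,P_0,P_1) : (P_0,P_1)\in\distset \rbp, \quad \mscr P_1 = \lbp (P_1,P_0,P_1) : (P_0,P_1)\in\distset \rbp.
\]
Unfolding $g_1(\bm Q)$ from \Cref{prop:composite_converse} with these choices recovers the function $g_1(Q,Q_0,Q_1)$ in the theorem statement verbatim. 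Since $\ell = s$, the slice $\mcal S(\bm P_1)$ equals the whole product space and $\Omega(\bm P_1)$ contributes nothing more than $\mscr P_0 \cup \Gamma_0$ does, so the general upper bound collapses to $\efix(P_0,P_1)$.

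For the converse I would invoke \Cref{prop:composite_converse} directly on this instance; as noted in the remark following \Cref{prop:match}, the $\ell = s$ case is the classical fixed-length setting and requires no additional regularity on $\lambda$ beyond positivity. For the achievability I would adapt the Levitan-Merhav construction to three independent sequences, declaring $\hat\theta = 0$ whenever there exist $(P_0', P_1') \in \distset$ such that
\[
    \KLD{\hat P^n}{P_0'} + \alpha\KLD{\hat P_0}{P_0'} + \beta\KLD{\hat P_1}{P_1'} \leq \lambda(P_0',P_1'),
\]
and $\hat\theta = 1$ otherwise. The type-I exponent constraint then follows because any joint type $(Q,Q_0,Q_1)$ landing in the decision region of $1$ has $g_1(Q,Q_0,Q_1) \geq 0$, which, instantiated at $(P_0',P_1') = (P_0,P_1)$, forces the weighted KL divergence sum from $(P_0,P_0,P_1)$ to be at least $\lambda(P_0,P_1)$. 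The matching type-II exponent follows by a routine method-of-types bound on the decision region of $0$, corresponding exactly to joint types with $g_1(Q,Q_0,Q_1) \leq 0$.

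The main obstacle I expect is reconciling the strict inequality $g_1 < 0$ in the statement with the non-strict inequality $g_1 \leq 0$ arising naturally from the test's decision region: the infimum defining $\efix$ is over an open set, whereas the exponent produced by the method of types controls a closed region. Continuity of the KL divergences in their arguments and the polynomial count of types on the finite alphabet $\mcal X$ make the two infima coincide, which is exactly the mild modification of the Levitan-Merhav argument alluded to at the start of \Cref{sec:general_results}. Once this is verified, the theorem follows without requiring the continuous-extension assumption on $\lambda$ that was needed in the semi-sequential cases.
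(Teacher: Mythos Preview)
The paper does not give its own proof of \Cref{thm:complete_fixed-length}; it simply restates the Levitan--Merhav result, and at the start of \Cref{sec:general_results} remarks that the multi-sequence version follows by a straightforward modification of their arguments. Your plan---declare $\hat\theta=0$ exactly when some $(P_0',P_1')\in\distset$ satisfies the weighted KL inequality, then bound both exponents by the method of types---is precisely that modification, so your achievability and your handling of the strict/non-strict inequality are on target.

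Two small corrections are worth flagging. First, your sentence ``$\Omega(\bm P_1)$ contributes nothing more than $\mscr P_0\cup\Gamma_0$ does'' is backwards: when $\ell=s$ one has $\mcal S(\bm P_1)=\mscr D$ and $g^{\bm P_1}=g_1$, so $\Omega(\bm P_1)=\{\bm Q\in\mscr D:g_1(\bm Q)<0\}$ is the \emph{largest} of the three sets and absorbs the other two; the infimum over $\Omega(\bm P_1)$ alone already equals $\efix$. Second, be careful about invoking \Cref{prop:composite_converse} for the converse: the paper explicitly restricts its treatment of the propositions to $\ell<s$, and the proof of the $\Omega(\bm P_1)$ part (\Cref{sec:converse}, part~3) works by reduction to the fixed-length result \eqref{eq:optimal_fixed-length_e1} itself, so appealing to it here would be circular. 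The self-contained route is the direct Levitan--Merhav converse (change of measure plus types on the acceptance region), which is what your achievability paragraph already mirrors and which the paper treats as known.
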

\begin{remark}
    For \Cref{thm:complete_fully-seq} and \Cref{thm:complete_fixed-length}, the results still hold even if we replace the distribution set $\distset$ by $\mcal D=\lbp(P,Q)\mv \supp{P}=\supp{Q}=\mcal{X},\ P\neq Q\rbp$. 
\end{remark}

Next we prove \Cref{thm:complete_fully-seq,thm:complete_semi-seq,thm:complete_semi-seq-2} as corollaries of \Cref{prop:match}. 

\subsection{Proof of \Cref{thm:complete_fully-seq,thm:complete_semi-seq}}\label{subsec:proof_of_thm1-2}
Arrange the three sequences by $T_0, T_1, X$. Then the binary classification problem can be viewed as a special case of the composite hypothesis testing problem by taking $s=3$, $\mcal X_i=\mcal X$ for $i=1,2,3$, $\alpha_1=\alpha$, $\alpha_2=\beta$, and $\alpha_3=1$. The two hypothesis sets are 
\begin{align*}
    \mscr P_0&=\lbp(P_0,P_1,P_0)\mv (P_0,P_1)\in\distset\rbp\\
    \text{and }\mscr P_1&=\lbp(P_0,P_1,P_1)\mv (P_0,P_1)\in\distset\rbp.
\end{align*}
For the \fullseq\ setup, let $\ell=0$, and for the \semione\ setup, let $\ell=2$.
Note that we use $\lambda$ with an abuse of notation. We could then apply \Cref{prop:composite_converse} and \Cref{prop:composite_achievability}.
It can be easily verified that $\closure{\mscr P_0}\cap \mcal{S}(\bm P_1) = \closure{\mscr P_0\cap \mcal{S}(\bm P_1)}$ for all $\bm P_1\in\mscr P_1$ in both setups, so 
\Cref{prop:match} is also applicable. 
Specifically, notice that $\closure{\mscr P_0} = \lbp(P_0,P_1,P_0)\mv P_0,P_1\in\mcal P_\varepsilon\rbp$, and the only difference from $\mscr P_0$ is that $P_0$ and $P_1$ can be the same. For the \fullseq\ setup, given any $\bm P_1=(P_0,P_1,P_1)\in\mscr P_1$, we have $P_0\neq P_1$ and hence $\closure{\mscr P_0}\cap \mcal{S}(\bm P_1)=\emptyset$. Moreover, one can observe that $\Omega(\bm P_1)=\emptyset$ for all $\bm P_1=(P_0,P_1,P_1)\in\mscr P_1$. For the \semione\ setup, given any $\bm P_1=(P_0,P_1,P_1)\in\mscr P_1$, $\mcal{S}(\bm P_1)=\lbp(Q_1,Q_2,P_1)\mv Q_1,Q_2\in\mcal P(\mcal X)\rbp$. Observe that $\mscr P_0\cap \mcal{S}(\bm P_1)=\lbp(P_1,Q_2,P_1)\mv Q_2\in\mcal P_\varepsilon\setminus\{P_1\}\rbp$ and $\closure{\mscr P_0}\cap \mcal{S}(\bm P_1)=\lbp(P_1,Q_2,P_1)\mv Q_2\in\mcal P_\varepsilon\rbp$. The closure of the former set is clearly the same as the latter since $Q_2$ can be arbitrarily close to $P_1$.

The bounds in \Cref{thm:complete_fully-seq} and \Cref{thm:complete_semi-seq} can be obtained via simple calculations. The three terms $\RenyiD{P_0}{P_1}{\frac{\alpha}{1+\alpha}}$, $\kappa(P_0,P_1)$ and $\mu(P_0,P_1)$ correspond to the divergences between the real underlying distribution $\bm P_1=(P_0,P_1,P_1)$ and the three sets $\mscr P_0$, $\Gamma_0$ and $\Omega(\bm P_1)$, respectively. 
In particular, to obtain the R\'enyi divergence $\RenyiD{P_0}{P_1}{\frac{\alpha}{1+\alpha}}$, 
\begin{align}
    &\inf_{\bm Q\in\mscr P_0} \Big( \alpha \KLD{Q_1}{P_0} + \beta\KLD{Q_2}{P_1} + \KLD{Q_3}{P_1}\Big) \\
    &= \inf_{P_0',P_1'\in\mcal P_\varepsilon} \Big( \alpha \KLD{P_0'}{P_0} + \beta\KLD{P_1'}{P_1} + \KLD{P_0'}{P_1}\Big)\\
    &= \inf_{P_0'\in\mcal P_\varepsilon} \Big( \alpha \KLD{P_0'}{P_0} + \KLD{P_0'}{P_1}\Big)
    = \RenyiD{P_0}{P_1}{\frac{\alpha}{1+\alpha}}.
\end{align}
The last equality follows from \eqref{eq:Renyi}, where the R\'enyi divergence is expressed as a minimization problem. By H\"older's inequality, since $P_0,P_1\in\mcal P_\varepsilon$, the minimizer described in \eqref{eq:Renyi_minimizer} is also in $\mcal P_\varepsilon$. Thus taking the infimum over $\mcal P_\varepsilon$ instead of $\mcal P(\mcal X)$ in \eqref{eq:Renyi} still gives the R\'enyi divergence.
To obtain $\kappa(P_0,P_1)$, first observe that for $\bm Q=(Q_1,Q_2,Q_3)$, 
\begin{align}
    g_1(\bm Q) &= \inf_{\bm P_0'\in\mscr P_0}\Big( \alpha \KLD{Q_1}{P_{0,1}'} + \beta\KLD{Q_2}{P_{0,2}'} \nonumber\\
    &\hspace{3cm}+ \KLD{Q_3}{P_{0,3}'} - \lambda(\bm P_0')\Big)\\
    &= \inf_{(P_0',P_1')\in\distset} \Big(\alpha\KLD{Q_1}{P_0'} + \beta\KLD{Q_2}{P_1'}  \nonumber\\
    &\hspace{2.8cm}+\KLD{Q_3}{P_0'} - \lambda(P_0',P_1')\Big)\\
    &=g_1(Q_3,Q_1,Q_2) \quad\text{defined in \Cref{thm:complete_fully-seq}.}
\end{align}
As a result, 
\begin{align}
    &\inf_{\bm Q\in\Gamma_0} \Big( \alpha \KLD{Q_1}{P_0} + \beta\KLD{Q_2}{P_1} + \KLD{Q_3}{P_1}\Big) \\
    &=\hspace{-0.2cm}\inf_{\bm Q\in\mscr P_1: g_1(\bm Q)<0} \Big( \alpha \KLD{Q_1}{P_0} + \beta\KLD{Q_2}{P_1} + \KLD{Q_3}{P_1}\Big) \\
    &= \hspace{-0.2cm}\inf_{\substack{(Q_0',Q_1')\in\distset\\g_1(Q_1', Q_0',Q_1')<0}} \hspace{-0.4cm}\Big( \alpha \KLD{Q_0'}{P_0} + \beta\KLD{Q_1'}{P_1} + \KLD{Q_1'}{P_1}\Big)\\
    &= \kappa(P_0,P_1).
\end{align}
For the \semione\ setup, observe that for $\bm Q=(Q_1,Q_2,Q_3)\in \mcal{S}(\bm P_1)$, we have $Q_3=P_1$ and
\begin{align}
    &\quad\ g^{\bm P_1}(\bm Q) \nonumber\\
    &= \hspace{-0.1cm}\inf_{\bm P_0'\in\mscr P_0\cap \mcal{S}(\bm P_1)} \hspace{-0.1cm}\Big( \alpha \KLD{Q_1}{P_{0,1}'} + \beta\KLD{Q_2}{P_{0,2}'} - \lambda(\bm P_0') \Big)\\
    &= \hspace{-0.1cm}\inf_{P_{0,2}'\in\mcal{P}_\varepsilon\setminus \{P_1\}} \hspace{-0.1cm}\Big( \alpha \KLD{Q_1}{P_1} + \beta\KLD{Q_2}{P_{0,2}'} - \lambda(P_1, P_{0,2}') \Big) \\
    &= g(Q_1,Q_2).
\end{align}
Hence,
\begin{align}
    &\inf_{\bm Q\in\Omega(\bm P_1)} \Big( \alpha \KLD{Q_1}{P_0} + \beta\KLD{Q_2}{P_1} + \KLD{Q_3}{P_1}\Big) \\
    &= \inf_{\substack{\bm Q\in \mcal{S}(\bm P_1)\\g^{\bm P_1}(\bm Q)<0}} \Big( \alpha \KLD{Q_1}{P_0} + \beta\KLD{Q_2}{P_1} + \KLD{Q_3}{P_1}\Big) \\
    &= \inf_{\substack{Q_1,Q_2\in\mcal{P}(\mcal X)\\g(Q_0,Q_1) < 0}} \Big(\alpha\KLD{Q_1}{P_0} + \beta\KLD{Q_2}{P_1}\Big) = \mu(P_0,P_1).
\end{align}

\subsection{Proof of \Cref{thm:complete_semi-seq-2}}\label{subsec:proof_of_thm3}
Arrange the three sequences by $X, T_0, T_1$, and take $s=3$, $\ell=1$, $\mcal X_i=\mcal X$ for $i=1,2,3$, $\alpha_1=1$, $\alpha_2=\alpha$, and $\alpha_3=\beta$. The two hypothesis sets are 
\begin{align*}
    \mscr P_0&=\lbp(P_0,P_0,P_1)\mv (P_0,P_1)\in\distset\rbp\\
    \text{and }\mscr P_1&=\lbp(P_1,P_0,P_1)\mv (P_0,P_1)\in\distset\rbp.
\end{align*}
When $\bm P_1=(P_1,P_0,P_1)$, we have $\closure{\mscr P_0}\cap \mcal{S}(\bm P_1) = \closure{\mscr P_0\cap \mcal{S}(\bm P_1)}=\{(P_0,P_0,P_1)\}$. The result follows immediately from \Cref{prop:composite_converse},  \Cref{prop:composite_achievability}, and \Cref{prop:match}. Specifically, $\RenyiD{P_0}{P_1}{\frac{\alpha}{1+\alpha}}$ and $\kappa(P_0,P_1)$ can be obtained using the same arguments as above with a permutation of the three sequences. As for $\nu(P_0,P_1)$, consider $\bm P_1=(P_1,P_0,P_1)$ and observe that for $\bm Q=(Q_1,Q_2,Q_3)\in \mcal{S}(\bm P_1)$, we have $Q_2=P_0, Q_3=P_1$. Also, 
\begin{align}
    g^{\bm P_1}(\bm Q) &= \inf_{\bm P_0'\in\mscr P_0\cap \mcal{S}(\bm P_1)} \Big( \KLD{Q_1}{P_{0,1}'} - \lambda(\bm P_0') \Big)\\
    &= \KLD{Q_1}{P_0} - \lambda(P_0,P_1),
\end{align}
as there is only one element $(P_0,P_0,P_1)$ in $\mscr P_0\cap \mcal{S}(\bm P_1)$.
Thus,
\begin{align}
    &\inf_{\bm Q\in\Omega(\bm P_1)} \Big( \KLD{Q_1}{P_1} + \alpha\KLD{Q_2}{P_0} + \beta\KLD{Q_3}{P_1} \Big) \\
    &= \inf_{\substack{\bm Q\in \mcal{S}(\bm P_1)\\g^{\bm P_1}(\bm Q)<0}} \Big( \KLD{Q_1}{P_1} + \alpha\KLD{Q_2}{P_0} + \beta\KLD{Q_3}{P_1}\Big) \\
    &= \inf_{\substack{Q_1\in\mcal{P}(\mcal X)\\ \KLD{Q_1}{P_0} < \lambda(P_0,P_1)}}\KLD{Q_1}{P_1} = \nu(P_0,P_1).
\end{align}

\subsection{Comparison}\label{subsec:compare}
Here we take a closer look at these results and compare the optimal achievable error exponents for \fullseq, \semione, \semitwo\ and \fixed\ settings to show the benefit of sequentiality. 
Observe that for any $(P_0,P_1)\in\distset$,
\begin{align}
\efix(P_0,P_1) \leq \esemione(P_0,P_1) \leq \eseq(P_0,P_1),\label{ineq:compare1}\\
\efix(P_0,P_1) \leq \esemitwo(P_0,P_1) \leq \eseq(P_0,P_1).\label{ineq:compare2}
\end{align}
The first inequality of \eqref{ineq:compare1} holds because a \fixed\ test is a special case of \semione\ tests. However, \semione\ tests are not a special case of \fullseq\ tests as the whole training sequences are observed even when stopping before time $n$. Instead, the second inequality of \eqref{ineq:compare1} is obtained by directly comparing the expressions defined in \eqref{eq:semi-sequential_converse} and \eqref{eq:fully-sequential_converse}. Following similar arguments, we get the inequalities \eqref{ineq:compare2} for \semitwo\ tests.
A natural question is whether these inequalities are strict. In the following, we analytically address the question under two representative classes of $\lambda$: $\lambda \equiv \lambda_0$ being a constant in \Cref{subsec:constant_lambda}, and $\lambda$ allowing \emph{efficient} universal tests in \Cref{subsec:efficient_tests_lambda}.

\subsubsection{Constant Constraint}\label{subsec:constant_lambda}
Take $\lambda(P_0,P_1)\equiv\lambda_0$ for some $\lambda_0>0$. It is clear that such $\lambda$ satisfies \Cref{assumption:lambda}. For the \fixed\ case, this is exactly the setup in Ziv \cite{Ziv_88} and Gutman \cite{Gutman_89} (generalized Neyman-Pearson criterion). 
The following proposition shows that \semione\ and \fullseq\ tests achieve the same optimal error exponents under constant constraint.
\begin{proposition}\label{prop:constant_compare}
    If $\lambda(P_0,P_1) \equiv \lambda_0$ for some $\lambda_0>0$, then $\kappa(P_0,P_1) \leq \mu(P_0,P_1)$ for all $(P_0,P_1)\in\distset$, and hence
    \begin{align}
        &\quad\ \esemione(P_0,P_1)=\eseq(P_0,P_1)\\
        &=\min\lbp \RenyiD{P_0}{P_1}{\frac{\alpha}{1+\alpha}},\ \kappa(P_0,P_1)\rbp\quad\forall\,(P_0,P_1)\in\distset.
    \end{align}
\end{proposition}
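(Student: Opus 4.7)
The goal is to prove $\kappa(P_0, P_1) \leq \mu(P_0, P_1)$ under $\lambda \equiv \lambda_0$; the conclusion $\esemione = \eseq$ then follows immediately by comparing \eqref{eq:semi-sequential_converse} with \eqref{eq:fully-sequential_converse}, noting that $\mu$ appears only as a third term inside the $\min$ defining $\esemione$. The plan is to exhibit, for each (near-)optimal $\mu$-candidate $Q_0$, a $\kappa$-candidate of the form $(Q_0, P_1)$ with matching objective.

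First, I would reduce the $\mu$ problem by eliminating $Q_1$. For any $\mu$-feasible $(Q_0, Q_1)$, the pair $(Q_0, P_1)$ is also feasible: since $P_1 \in \mcal P_\varepsilon$, approaching $P_1' \to P_1$ shows $\inf_{P_1' \in \mcal P_\varepsilon \setminus \{P_1\}} \KLD{P_1}{P_1'} = 0$, so $g(Q_0, P_1) = \alpha \KLD{Q_0}{P_1} - \lambda_0 \leq g(Q_0, Q_1) < 0$, and its $\mu$-objective drops to $\alpha \KLD{Q_0}{P_0}$. Hence $\mu(P_0, P_1)$ equals $\inf\{\alpha \KLD{Q_0}{P_0} : Q_0 \in \mcal P(\mcal X),\ \alpha \KLD{Q_0}{P_1} < \lambda_0\}$. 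Next, restrict this infimum to the R\'enyi tilted family $Q_0^t \propto P_0^{1/(1+t)} P_1^{t/(1+t)}$, $t \geq 0$. H\"older's inequality bounds its normalizer by $1$, so $P_0, P_1 \in \mcal P_\varepsilon$ forces $Q_0^t \in \mcal P_\varepsilon$, and $Q_0^t \neq P_1$ for finite $t$ since $P_0 \neq P_1$. A Lagrangian/KKT argument (the primal is convex in $Q_0$ and Slater's condition is met by $Q_0 = P_1$) then shows the $\mu$-infimum is attained or approached within this family.

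Finally, for each such $Q_0^t$, take $(Q_0^t, P_1) \in \distset$ as a $\kappa$-candidate. Plugging $P_1' = P_1$ and $P_0' = (\alpha Q_0^t + P_1)/(1+\alpha) \in \mcal P_\varepsilon \setminus \{P_1\}$ into the infimum defining $g_1$ gives $g_1(P_1, Q_0^t, P_1) \leq \GJS{Q_0^t}{P_1}{\alpha} - \lambda_0 \leq \alpha \KLD{Q_0^t}{P_1} - \lambda_0 < 0$, the second inequality following from substituting $V = P_1$ into the variational form of the GJS divergence used in \eqref{eq:Renyi}. The $\kappa$-objective evaluates to $\KLD{P_1}{P_1} + \alpha \KLD{Q_0^t}{P_0} + \beta \KLD{P_1}{P_1} = \alpha \KLD{Q_0^t}{P_0}$, matching the reduced $\mu$-value, so taking infimum over $t$ yields $\kappa(P_0, P_1) \leq \mu(P_0, P_1)$. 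The main obstacle I anticipate is the R\'enyi-family reduction in step two: a naive mixing argument with the uniform distribution to push an arbitrary $Q_0$ into $\mcal P_\varepsilon$ fails at the fixed $\varepsilon$ baked into $\distset$, so a proper convex-duality or KKT analysis, exploiting that both the objective and the constraint are KL divergences in the first argument, is required.
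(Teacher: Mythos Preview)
Your proposal is correct and follows essentially the same route as the paper's proof: both reduce $\mu$ to the one-variable problem $\inf\{\alpha\KLD{Q_0}{P_0}:\alpha\KLD{Q_0}{P_1}<\lambda_0\}$, then upper-bound $\kappa$ by plugging the candidate $(Q_0,P_1)$ into the $\kappa$-infimum and bounding $g_1(P_1,Q_0,P_1)\le\alpha\KLD{Q_0}{P_1}-\lambda_0$. The only cosmetic differences are that the paper obtains this last bound by taking $P_0'=P_1$ (and $P_1'\to P_1$) rather than your GJS-minimizer $P_0'=(\alpha Q_0+P_1)/(1+\alpha)$, and that the ``obstacle'' you flag---restricting $Q_0$ to $\mcal P_\varepsilon\setminus\{P_1\}$---is dispatched in the paper simply by citing \Cref{lemma:BHT_tradeoff}, which already records that the minimizer of $\KLD{Q}{P_0}$ subject to a $\KLD{Q}{P_1}$ constraint is a tilt $P_\rho$ lying in $\mcal P_\varepsilon$ whenever $P_0,P_1\in\mcal P_\varepsilon$; no separate KKT argument is needed. (A tiny slip: the variational identity you invoke for $\GJS{Q_0}{P_1}{\alpha}\le\alpha\KLD{Q_0}{P_1}$ is the GJS form $\min_V\{\alpha\KLD{P}{V}+\KLD{Q}{V}\}$, not the R\'enyi form in \eqref{eq:Renyi}.)
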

\Cref{prop:constant_compare} shows that there is no additional gain due to sequentiality in taking training samples in the constant constraint case. The proof is in Appendix~\ref{app:constant_compare}. Next we provide some numerical comparisons. 
In resemble to the error exponent trade-off in the binary hypothesis testing problem, 
choose a pair of distributions $(P_0^*,P_1^*)\in\distset$ and plot the upper bounds on the type-II error exponent versus the type-I error exponent constraint $\lambda(P_0^*,P_1^*)$. 

\begin{figure*}[ht]
    \centering
    \includegraphics[scale=0.57]{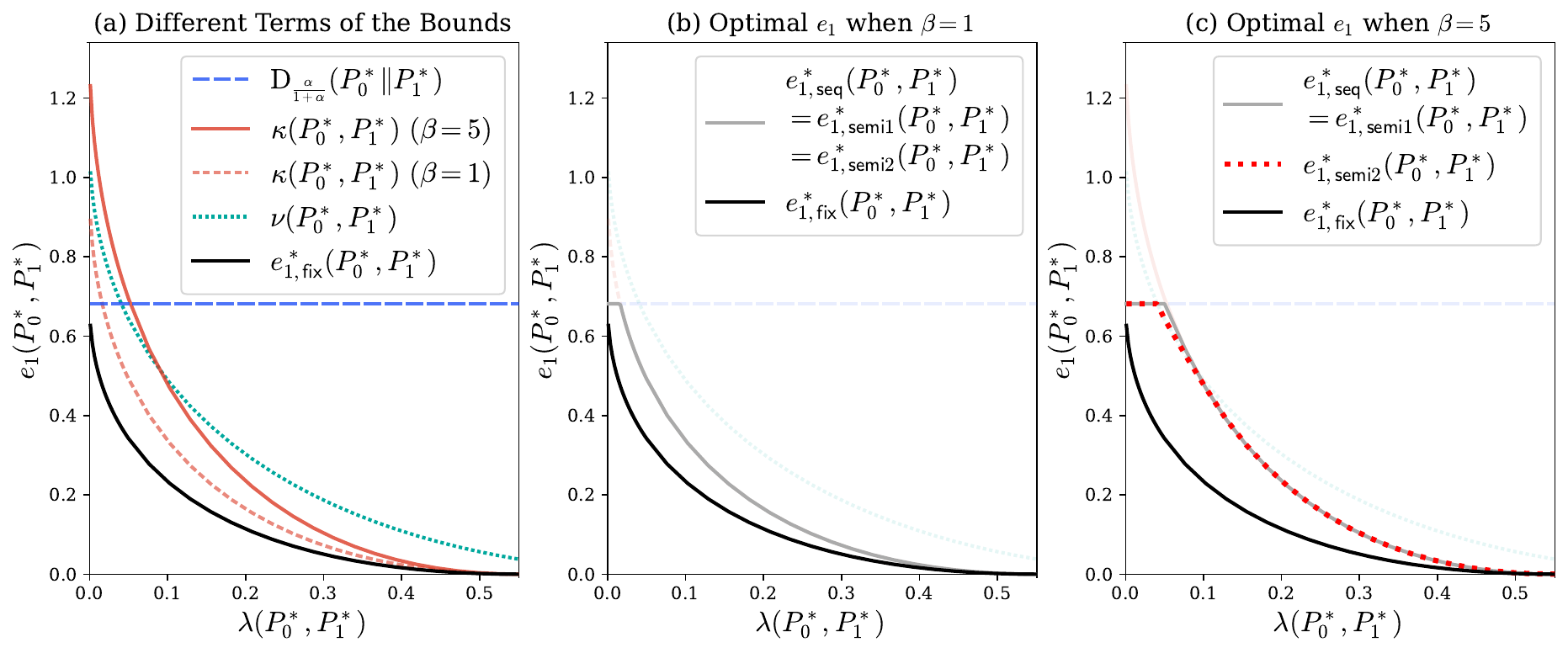}
    \caption{The optimal type-II error exponents under constant type-I error exponent constraint $\lambda(P_0,P_1)\equiv\lambda_0$. Fix $\mcal X = \{0,1\}$, $\varepsilon = 0.01$, $\alpha=2$, and choose $P_0^*=[0.6, 0.4]$, $P_1^*=[0.1, 0.9]$.
    Let $\lambda_0$ increase from $0.001$ to $\GJS{P_0^*}{P_1^*}{\alpha}$ to obtain a curve. Note that in (a), $\kappa(P_0^*,P_1^*)$ is plot under two different values of $\beta$, whereas $\efix(P_0^*,P_1^*)$ and $\RenyiD{P_0^*}{P_1^*}{\frac{\alpha}{1+\alpha}}$ are independent of $\beta$.
    } 
    \label{fig:compare_const}
\end{figure*}

In \Cref{fig:compare_const}, there is a trade-off between the two error exponents for every setup. For \semione\ and \fullseq\ tests, the optimal type-II error exponent is the minimum of the red and blue curve in (a). We also plot them directly in (b) and (c). Comparing to \fixed\ test, the benefit of sequentiality is more significant when $\beta$ is larger as shown in (b) and (c). 
This can also be observed analytically, and the results are summarized later in \Cref{prop:constant_strict_gain}.
On the other hand, \Cref{fig:compare_const} shows that \semitwo\ tests may perform strictly worse than \semione\ and \fullseq\ tests sometimes, depending on the value of $\beta$. Recall that $\nu(P_0,P_1)$ does not depend on $\alpha$ nor $\beta$. The numerical example suggests that it is not possible to get a result in the same form of \Cref{prop:constant_compare} for the \semitwo\ setup. The solution of which $\nu$ intersects with $\kappa$ does not seem easy to obtain analytically either.

\begin{proposition}\label{prop:constant_strict_gain}
    If $\lambda(P_0,P_1) \equiv \lambda_0$ for some $\lambda_0>0$, then $\efix(P_0,P_1)$ and $\kappa(P_0,P_1)$ can be written in the same form, with coefficient $1$ replaced by $1+\beta$:
    \begin{align}
        \efix(P_0,P_1)&=\hspace{-0.2cm}\inf_{\substack{Q,Q_0\in\mcal P_\varepsilon\\\GJS{Q_0}{Q}{\alpha}\leq\lambda_0}} \hspace{-0.3cm}\KLD{Q}{P_1} + \alpha\KLD{Q_0}{P_0},\\
        \kappa(P_0,P_1)&=\nonumber\\
        &\hspace{-0.8cm}\inf_{\substack{Q_0,Q_1\in\mcal P_\varepsilon\\\GJS{Q_0}{Q_1}{\alpha}\leq\lambda_0}} \hspace{-0.3cm}(1+\beta)\KLD{Q_1}{P_1} + \alpha\KLD{Q_0}{P_0}.
    \end{align}
    Moreover, for a pair of distributions $(P_0,P_1)\in\distset$:
    \begin{itemize}
        \item If $\GJS{P_0}{P_1}{\alpha}\leq \lambda_0$, then
        \begin{align}
            &\efix(P_0,P_1)=\esemione(P_0,P_1)\\
            &=\esemitwo(P_0,P_1)=\eseq(P_0,P_1)=0
        \end{align}
        \item If $\GJS{P_0}{P_1}{\alpha}>\lambda_0$, then $\kappa(P_0,P_1)$ is strictly increasing in $\beta$ and
        \begin{equation}
            0<\efix(P_0,P_1)<\esemione(P_0,P_1)=\eseq(P_0,P_1).
        \end{equation}
    \end{itemize}
\end{proposition}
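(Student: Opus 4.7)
The plan is as follows. First, I would rewrite $\efix$ and $\kappa$ in GJS form by evaluating the inner infimum in $g_1$ explicitly. With $\lambda \equiv \lambda_0$, the $(P_0', P_1')$-infimum decouples: taking $P_1'$ close to $Q_1$ drives $\beta\KLD{Q_1}{P_1'}$ to $0$ (the constraint $P_0' \neq P_1'$ is inactive at the infimum), while $\inf_{P_0'\in\mcal P_\varepsilon}[\KLD{Q}{P_0'} + \alpha\KLD{Q_0}{P_0'}]$ equals $\GJS{Q_0}{Q}{\alpha}$, attained at $P_0'^* = (\alpha Q_0 + Q)/(1+\alpha)$. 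Hence $g_1(Q,Q_0,Q_1)<0$ reduces to $\GJS{Q_0}{Q}{\alpha}<\lambda_0$. For $\efix$ the variable $Q_1$ is free in both the objective and the constraint, so $Q_1=P_1$ is optimal and kills $\beta\KLD{Q_1}{P_1}$; for $\kappa$ the equality $Q=Q_1$ turns the first term into $(1+\beta)\KLD{Q_1}{P_1}$. After passing from $<$ to $\leq$ via continuity of KL and GJS on the compact set $\mcal P_\varepsilon$, the stated GJS forms follow. Case A ($\GJS{P_0}{P_1}{\alpha}\leq\lambda_0$) is then immediate: $(Q_0,Q_1)=(P_0,P_1)$ is feasible with objective $0$, so $\kappa=0$, and the chain $0\leq\efix\leq\esemione=\eseq\leq\kappa$ (using \Cref{prop:constant_compare}) together with $\esemitwo\leq\eseq$ forces all four exponents to vanish.

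For Case B ($\GJS{P_0}{P_1}{\alpha}>\lambda_0$), I first show $\efix>0$ by contradiction: if $\efix=0$, compactness and lower semicontinuity yield a minimizer $(Q^*,Q_0^*)=(P_1,P_0)$; continuity of GJS then gives $\GJS{P_0}{P_1}{\alpha}\leq\lambda_0$, contradicting the case assumption. Next, for strict monotonicity of $\kappa$ in $\beta$, let $(Q_1^{\beta_2},Q_0^{\beta_2})$ be the $\kappa(\beta_2)$-minimizer and substitute it into the $\kappa(\beta_1)$-objective to obtain
\begin{equation}
    \kappa(\beta_2)\geq \kappa(\beta_1)+(\beta_2-\beta_1)\KLD{Q_1^{\beta_2}}{P_1},
\end{equation}
reducing the task to showing $\KLD{Q_1^{\beta_2}}{P_1}>0$. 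This is the main technical obstacle. Suppose $Q_1^{\beta_2}=P_1$; since $\GJS{P_0}{P_1}{\alpha}>\lambda_0$, the constraint is binding with $\GJS{Q_0^{\beta_2}}{P_1}{\alpha}=\lambda_0>0$, forcing $Q_0^{\beta_2}\neq P_1$, and consequently $\nabla_{Q_1}\GJS{Q_0^{\beta_2}}{Q_1}{\alpha}$ does not vanish at $Q_1=P_1$ (it vanishes only when $Q_1=Q_0^{\beta_2}$). Perturbing $Q_1=P_1+\epsilon u$ in a direction that decreases GJS to first order creates $O(\epsilon)$ slack in the constraint, which can be consumed by moving $Q_0$ toward $P_0$ and decreasing $\alpha\KLD{Q_0}{P_0}$ by $O(\epsilon)$; since $\KLD{Q_1}{P_1}$ grows only as $O(\epsilon^2)$ near $Q_1=P_1$, the net change in the $\kappa$-objective is strictly negative for small $\epsilon$, contradicting optimality.

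Finally, the strict gap $\efix<\eseq=\min\big(\RenyiD{P_0}{P_1}{\frac{\alpha}{1+\alpha}},\kappa\big)$ follows from two observations. First, the GJS representations show $\efix=\kappa\mid_{\beta=0}$, so the strict monotonicity just established gives $\efix<\kappa$ for $\beta>0$. Second, $\efix<\RenyiD{P_0}{P_1}{\frac{\alpha}{1+\alpha}}$: starting from the R\'enyi minimizer $V^*$ in \eqref{eq:Renyi_minimizer}, consider the perturbation $(Q,Q_0)=((1-\epsilon)V^*+\epsilon P_1,\,(1-\epsilon)V^*+\epsilon P_0)$. By convexity of KL in its first argument,
\begin{equation}
    \KLD{Q}{P_1}+\alpha\KLD{Q_0}{P_0}\leq (1-\epsilon)\big[\KLD{V^*}{P_1}+\alpha\KLD{V^*}{P_0}\big]=(1-\epsilon)\RenyiD{P_0}{P_1}{\frac{\alpha}{1+\alpha}},
\end{equation}
while $\GJS{Q_0}{Q}{\alpha}$ is continuous and vanishes at $(V^*,V^*)$, so the constraint is satisfied for $\epsilon$ small. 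Since $P_0\neq P_1$ implies $V^*\neq P_0,P_1$, the R\'enyi divergence is strictly positive and the bound is strict. Combined with $\esemione=\eseq$ from \Cref{prop:constant_compare}, this concludes $\efix<\esemione=\eseq$.
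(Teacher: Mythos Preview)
Your proof is correct and tracks the paper's argument closely. The GJS rewriting, Case A, the positivity of $\efix$ via compactness, and the convexity-based proof of $\efix < \RenyiD{P_0}{P_1}{\frac{\alpha}{1+\alpha}}$ are all essentially identical to the paper's treatment; your packaging of $\efix<\kappa$ via the identification $\efix=\kappa|_{\beta=0}$ together with strict monotonicity is a clean way to organize the same content.

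The one place where the execution genuinely differs is the proof that the $\kappa$-minimizer cannot have $Q_1 = P_1$ in Case B. You argue by a direct first/second-order perturbation: $\KLD{Q_1}{P_1}$ is $O(\epsilon^2)$ near $Q_1=P_1$ while the GJS constraint can be relaxed by $\Theta(\epsilon)$, and the slack is traded for a $\Theta(\epsilon)$ decrease in $\alpha\KLD{Q_0}{P_0}$. The paper instead introduces the mixture $V^* = (\alpha Q_0 + P_1)/(1+\alpha)$, moves $(Q_0, Q_1)$ along the tilted-distribution paths of the binary hypothesis testing tradeoff for the pairs $(V^*, P_0)$ and $(V^*, P_1)$, and invokes the endpoint-slope property $\frac{\mathrm d e_1}{\mathrm d e_0}\big|_{\rho=1} = 0$ from Lemma~\ref{lemma:BHT_tradeoff} to get the same conclusion. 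These are the same idea in different clothing: the zero-slope property is precisely the second-order vanishing of $\KLD{Q_1}{P_1}$ at $Q_1 = P_1$. Your version is more elementary but leaves two details implicit that you should make explicit: (i) $Q_0^{\beta_2}\neq P_0$ (else $(P_0,P_1)$ is feasible and Case B is violated), and (ii) the ``exchange rate'' from GJS slack to decrease in $\alpha\KLD{Q_0}{P_0}$ is finite and positive, which follows from the KKT condition $\nabla_{Q_0}[\alpha\KLD{Q_0}{P_0}]+\mu\nabla_{Q_0}\GJS{Q_0}{Q_1}{\alpha}=0$ with $\mu>0$ at an active constraint. The paper's route buys self-containment because all the derivative information is packaged once in Lemma~\ref{lemma:BHT_tradeoff}.
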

As a result, unlike in the \fixed\ setup, the training sequence $T_1$ can improve the error exponent in \semione\ and \fullseq\ setups, and the gain is more significant as $\beta$ gets larger. 
The proof of \Cref{prop:constant_strict_gain} can be found in Appendix~\ref{app:constant_strict_gain}. 

\subsubsection{Efficient Tests}\label{subsec:efficient_tests_lambda}
Following \cite{LevitanMerhav_02, HsuWang_22}, we consider a special class of tests.
A sequence of tests is said to be \emph{efficient} \cite{LevitanMerhav_02} or \emph{universally exponentially consistent} \cite{LiVeeravalli_17} if the error probabilities decay to zero exponentially for all the underlying distributions as the number of samples goes to infinity. 
Under our problem formulation, the formal definition of efficient tests is provided as follows. 
\begin{definition}[Efficient Tests]
    Given $\lambda:\distset\to (0,\infty)$ and let $\{\Phi_n\}$ be a sequence of tests such that for any underlying distributions $(P_0,P_1)\in\distset$,
    \begin{itemize}
        \item for each $\Phi_n = (\tau_n,\delta_n)$, 
        $\mbb E_\theta[\tau_n|P_0,P_1]\leq n\ \forall\,\theta\in\{0,1\}$,
        \item 
        $\eT{0}(P_0,P_1)\geq \lambda(P_0,P_1)$.
    \end{itemize}
    We say $\{\Phi_n\}$ is efficient if the type-II error exponent $e_1(P_0,P_1)>0$ for all $(P_0,P_1)\in\distset$.
\end{definition}

In order to have tests satisfying the universality constraints and being efficient, the constraint function $\lambda$ has to satisfy certain conditions, and this will lead to further simplification of the optimal error exponents.
In the following, we will introduce a necessary condition and the consequent results. Then we state a sufficient condition and show some numerical examples. 
\begin{proposition}\label{prop:kappa=infty}
    Given $\lambda:\distset\to (0,\infty)$, if there exists a sequence of tests satisfying the universality constraints on the expected stopping time and the type-I error exponent, 
    furthermore being efficient; then for all $(P_0,P_1)\in\distset$, $\lambda(P_0,P_1)\leq \RenyiD{P_1}{P_0}{\frac{\beta}{1+\beta}}$, and hence $\kappa(P_0,P_1)=\infty$. Thus  
    \begin{align}
        \eseq(P_0,P_1) &= \RenyiD{P_0}{P_1}{\frac{\alpha}{1+\alpha}},\\
        \esemione(P_0,P_1)&= \min\lbp \RenyiD{P_0}{P_1}{\frac{\alpha}{1+\alpha}},\ \mu(P_0,P_1)\rbp,\\
        \esemitwo(P_0,P_1)&= \min\lbp \RenyiD{P_0}{P_1}{\frac{\alpha}{1+\alpha}},\ \nu(P_0,P_1)\rbp.
    \end{align}
\end{proposition}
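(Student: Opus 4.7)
The proof has three stages: (i) derive the necessary condition $\lambda(P_0,P_1)\le \RenyiD{P_1}{P_0}{\frac{\beta}{1+\beta}}$ via a change-of-measure argument; (ii) show this implies the feasible set in the definition of $\kappa(P_0,P_1)$ is empty, so $\kappa(P_0,P_1)=\infty$; (iii) substitute $\kappa=\infty$ into the $\min$ expressions in \Cref{thm:complete_fully-seq,thm:complete_semi-seq,thm:complete_semi-seq-2} to obtain the three simplified exponents.

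For stage (i), I would fix $(P_0,P_1)\in\distset$, pick any $V\in\mcal P_\varepsilon\setminus\{P_0\}$ so that $(P_0,V)\in\distset$, and apply the same test $\{\Phi_n\}$ to this alternative pair. Efficiency forces $\pi_1(\Phi_n|P_0,V)\to 0$, so $\mbb P_1(\hat\theta=1|P_0,V)\to 1$. The data-processing inequality applied to the binary output $\hat\theta$ gives
\begin{equation*}
d\bigl(1-\pi_1(\Phi_n|P_0,V),\,\pi_0(\Phi_n|P_0,P_1)\bigr) \le \KLD{\mbb P_1(\cdot|P_0,V)|_{\mcal F_\tau}}{\mbb P_0(\cdot|P_0,P_1)|_{\mcal F_\tau}}.
\end{equation*}
The two measures agree on the $T_0$-samples, so the Radon-Nikodym derivative restricted to $\mcal F_\tau$ involves only the $X$- and $T_1$-samples. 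Wald's identity in the combined filtration then reduces the right-hand side to $\KLD{V}{P_0}\,\mbb E_1[\tau|P_0,V]+\KLD{V}{P_1}\,\mbb E_1[M_\tau|P_0,V]$, which is at most $n\bigl(\KLD{V}{P_0}+\beta\KLD{V}{P_1}\bigr)+O(1)$ by the expected stopping-time constraint and $M_\tau\le\beta\tau+1$. The elementary lower bound $d(1-\pi_1,\pi_0)\ge (1-\pi_1)(-\log\pi_0)-1$ on the left-hand side, after dividing by $n$, yields $\limsup_n (-\log\pi_0)/n\le \KLD{V}{P_0}+\beta\KLD{V}{P_1}$, and since $\lambda(P_0,P_1)\le e_0(P_0,P_1)=\liminf_n(-\log\pi_0)/n$, we obtain $\lambda(P_0,P_1)\le \KLD{V}{P_0}+\beta\KLD{V}{P_1}$. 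Minimizing the right-hand side over $V\in\mcal P_\varepsilon\setminus\{P_0\}$ and invoking \eqref{eq:Renyi} with $\alpha=\beta$, $P=P_1$, $Q=P_0$ produces $\lambda(P_0,P_1)\le \RenyiD{P_1}{P_0}{\frac{\beta}{1+\beta}}$; the R\'enyi minimizer lies in $\mcal P_\varepsilon\setminus\{P_0\}$ by H\"older as already remarked in the paper.

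For stage (ii), for any $(Q_0,Q_1),(P_0',P_1')\in\distset$, dropping the nonnegative term $\alpha\KLD{Q_0}{P_0'}$ and minimizing over $Q_1$ yields
\begin{equation*}
\KLD{Q_1}{P_0'}+\alpha\KLD{Q_0}{P_0'}+\beta\KLD{Q_1}{P_1'}\ge \RenyiD{P_1'}{P_0'}{\frac{\beta}{1+\beta}}\ge \lambda(P_0',P_1'),
\end{equation*}
where the last inequality applies stage (i) to $(P_0',P_1')$. Thus $g_1(Q_1,Q_0,Q_1)\ge 0$ for every $(Q_0,Q_1)\in\distset$, the feasible set defining $\kappa$ is empty, and $\kappa(P_0,P_1)=\infty$. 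Substituting $\kappa=\infty$ into the $\min$ expressions of \Cref{thm:complete_fully-seq,thm:complete_semi-seq,thm:complete_semi-seq-2} then produces the three stated formulas for $\eseq$, $\esemione$, $\esemitwo$.

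The main obstacle is justifying Wald's identity for the $T_1$-sum $\mbb E_1\bigl[\sum_{j=1}^{M_\tau}\log\frac{V(T_{1,j})}{P_1(T_{1,j})}\bigr]$, since $M_\tau=\lceil\beta\tau\rceil$ is not a priori a stopping time for the filtration generated by $T_1$ alone but only for the combined $\mcal F_k$. The resolution is to work with the centered log-likelihood partial sum as a martingale with respect to $\mcal F_k$ (each new $T_{1,j}$ is independent of $\mcal F_{k-1}$ at the time it is introduced) and invoke optional stopping, using $\mbb E_1[\tau|P_0,V]\le n<\infty$ together with the uniform boundedness of the per-step log-likelihood ensured by $V,P_0,P_1\in\mcal P_\varepsilon$. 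The same reasoning covers the \semione\ setup (with $M_\tau=M$ deterministic) and the \semitwo\ setup (with $\tau$ replaced by the fixed $n$ in the $X$-sum), so the $\lambda$-bound and hence $\kappa=\infty$ hold uniformly across the three sequential models.
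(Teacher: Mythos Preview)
Your proposal is correct and follows essentially the same approach as the paper: the paper cites \cite{HsuWang_22} for the change-of-measure/Wald argument giving $\lambda(P_0,P_1)\le e_0(P_0,P_1)\le \RenyiD{P_1}{P_0}{\frac{\beta}{1+\beta}}$, which you spell out explicitly, and then both you and the paper use the variational form \eqref{eq:Renyi} of the R\'enyi divergence (taking $V=Q_1$) to conclude $g_1(Q_1,Q_0,Q_1)\ge 0$ and hence $\kappa=\infty$. One small imprecision: H\"older only gives that the R\'enyi minimizer lies in $\mcal P_\varepsilon$; the fact that it differs from $P_0$ follows separately from $P_0\neq P_1$ and the form of \eqref{eq:Renyi_minimizer}.
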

Following the proof in \cite{HsuWang_22}, it can be shown that the error exponents of efficient tests are upper bounded by the R\'enyi divergence, which imposes upper bounds on $\lambda(P_0,P_1)$. 
As a result, $g_1(Q_1,Q_0,Q_1)\geq 0$ for all $(Q_0,Q_1)\in\distset$, and hence $\kappa(P_0,P_1)=\infty$.
Detailed proof is provided in Appendix~\ref{app:kappa=infty}.

Next, we proceed with a sufficient condition for efficient tests.

\begin{proposition}[Sufficient Condition for Efficient Tests \cite{LevitanMerhav_02}]\label{prop:efficient}
    Take $\xi\in(0,1)$ and let $\lambda(P_0,P_1)=\xi\RenyiD{P_1}{P_0}{\frac{\beta}{1+\beta}}$, then there exists efficient \fixed\ tests, i.e. $\efix(P_0,P_1)>0$ for all $(P_0,P_1)\in\distset$.
\end{proposition}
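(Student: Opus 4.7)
The plan is to show $\efix(P_0,P_1)>0$ for every $(P_0,P_1)\in\distset$ by demonstrating that the feasibility condition $g_1(Q,Q_0,Q_1)<0$ forces $(Q,Q_0,Q_1)$ to sit a positive distance away from the unique zero $(P_1,P_0,P_1)$ of the objective $\phi(Q,Q_0,Q_1):=\KLD{Q}{P_1}+\alpha\KLD{Q_0}{P_0}+\beta\KLD{Q_1}{P_1}$. Concretely, I will first prove $g_1(P_1,P_0,P_1)>0$, then invoke continuity of $g_1$ to obtain an open neighborhood of $(P_1,P_0,P_1)$ lying entirely in $\{g_1>0\}$, so that any feasible triple must sit outside that neighborhood, where the continuous objective $\phi$ is bounded below by a positive constant.

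The key step is an upper bound on $\lambda$ obtained by choosing $V=P_1$ in the variational representation \eqref{eq:Renyi} of the R\'enyi divergence, which yields $\RenyiD{P_1'}{P_0'}{\frac{\beta}{1+\beta}}\leq \beta\KLD{P_1}{P_1'}+\KLD{P_1}{P_0'}$. Substituting this into the definition of $g_1$ with $\lambda(P_0',P_1')=\xi\RenyiD{P_1'}{P_0'}{\frac{\beta}{1+\beta}}$ produces
\begin{equation*}
g_1(P_1,P_0,P_1)\geq \inf_{(P_0',P_1')\in\distset}\Big\{(1-\xi)\KLD{P_1}{P_0'}+\alpha\KLD{P_0}{P_0'}+\beta(1-\xi)\KLD{P_1}{P_1'}\Big\}.
\end{equation*}
The right-hand side decouples: the $P_1'$-component attains $0$ at $P_1'=P_1$, while a standard log-sum argument shows the $P_0'$-component is minimized at $P_0^{\ast}=\frac{(1-\xi)P_1+\alpha P_0}{(1-\xi)+\alpha}\in\mcal{P}_\varepsilon$. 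Because $\alpha>0$, $\xi\in(0,1)$, and $P_0\neq P_1$, the minimizer $P_0^{\ast}$ is distinct from $P_1$, so $(P_0^{\ast},P_1)\in\distset$ actually realizes the joint infimum; that infimum cannot be zero (otherwise $P_0=P_0^{\ast}=P_1$), hence $g_1(P_1,P_0,P_1)>0$.

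To close the argument, I will verify that $g_1$ is continuous on $\mcal{P}(\mcal X)^3$ via Berge's maximum theorem applied to its integrand, which is jointly continuous and finite on $\mcal{P}(\mcal X)^3\times\mcal{P}_\varepsilon^2$ (using that $\mcal{P}_\varepsilon$ keeps all denominators bounded away from zero). Density of $\distset$ in $\mcal{P}_\varepsilon\times\mcal{P}_\varepsilon$ together with continuity of the integrand shows that the infimum over $\distset$ equals the infimum over the compact set $\mcal{P}_\varepsilon\times\mcal{P}_\varepsilon$, so $g_1$ is continuous and $\{g_1>0\}$ is open. The continuous function $\phi$ then attains a strictly positive minimum on the closed set $\{g_1\leq 0\}\subset\mcal{P}(\mcal X)^3$, giving $\efix(P_0,P_1)\geq \inf_{g_1<0}\phi>0$. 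I expect the main obstacle to be verifying this continuity step carefully; once it is in place, the positivity bookkeeping above wraps up the argument.
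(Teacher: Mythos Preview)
Your proposal is correct. The paper does not actually prove this proposition: immediately after stating it, the authors write ``This proposition occurs in \cite{LevitanMerhav_02}, and the proof is omitted here.'' So there is no in-paper argument to compare against.

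Your route --- plugging $V=P_1$ into the variational form \eqref{eq:Renyi} to get $\lambda(P_0',P_1')\leq \xi\big(\beta\KLD{P_1}{P_1'}+\KLD{P_1}{P_0'}\big)$, deducing $g_1(P_1,P_0,P_1)>0$, and then using continuity of $g_1$ on the compact domain to separate the feasible set $\{g_1<0\}$ from the unique zero $(P_1,P_0,P_1)$ of the objective $\phi$ --- is a clean self-contained argument. The continuity step is sound: since the integrand extends continuously to the compact set $\mcal P_\varepsilon\times\mcal P_\varepsilon$ (the R\'enyi divergence is continuous there, including on the diagonal where it vanishes) and $\distset$ is dense in it, the infimum over $\distset$ coincides with that over $\mcal P_\varepsilon^2$, and Berge's theorem applies. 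One cosmetic point: when you say $\phi$ ``attains a strictly positive minimum on the closed set $\{g_1\leq 0\}$'' you are implicitly assuming this set is nonempty; it is (take $Q=Q_0=P_0'$, $Q_1=P_1'$ for any $(P_0',P_1')\in\distset$), but you could also simply note that if it were empty then $\efix=\infty$ trivially.
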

This proposition occurs in \cite{LevitanMerhav_02}, and the proof is omitted here.
Since the \fixed\ tests are a special case of any types of sequential tests, it is obvious that there also exist efficient sequential tests.

\begin{figure*}[ht]
    \centering
    \includegraphics[scale=0.57]{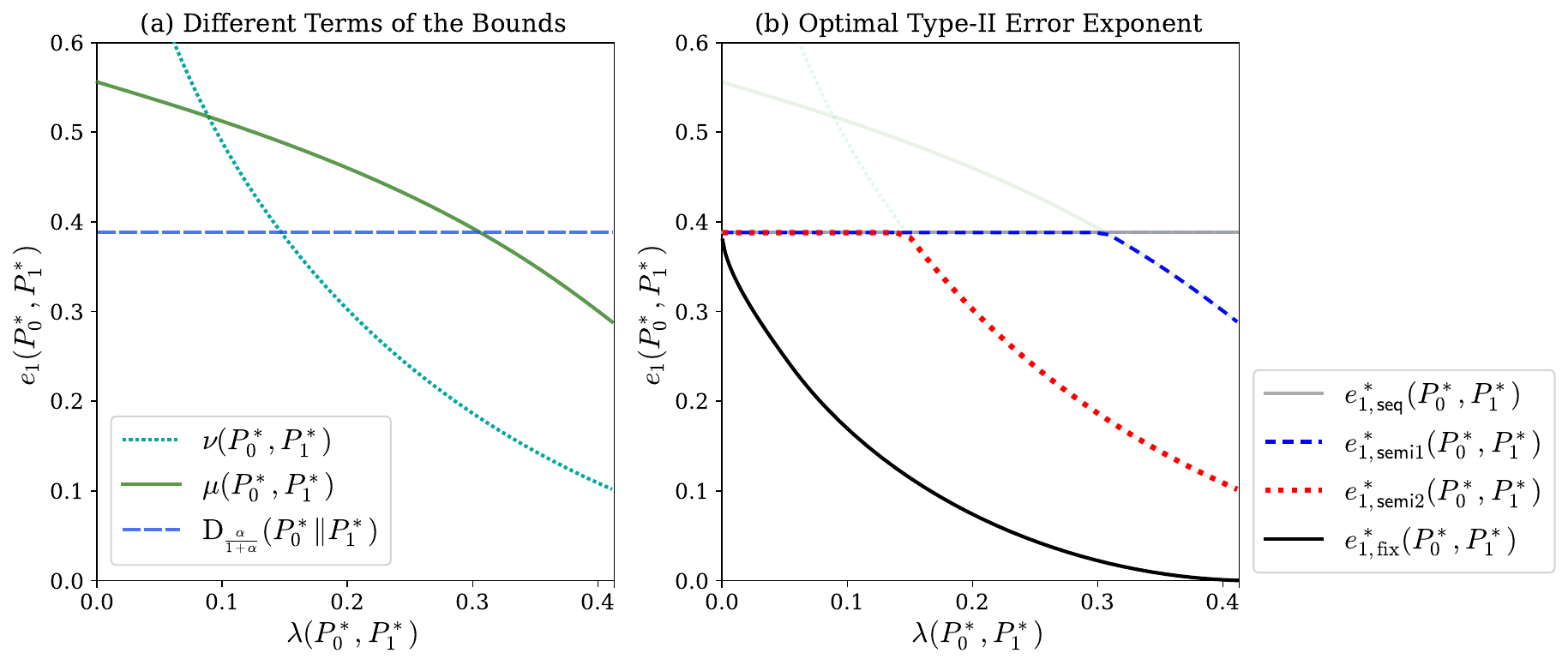}
    \caption{The optimal type-II error exponents when $\textstyle\lambda(P_0,P_1)=\xi\RenyiD{P_1}{P_0}{\frac{\beta}{1+\beta}}$. Fix $\mcal X = \{0,1\}$, $\varepsilon = 0.01$, $\alpha=\beta=0.7$, and choose $P_0^*=[0.6, 0.4]$, $P_1^*=[0.1, 0.9]$.
    Let $\xi$ increase from $0.001$ to $0.999$ to obtain a curve.
    } 
    \label{fig:compare_xi}
\end{figure*}

\Cref{fig:compare_xi} demonstrates a special class of $\lambda$ that permits efficient tests. 
As shown in \cite{LevitanMerhav_02}, there is a trade-off between the type-I and type-II error exponents of \fixed\ tests. Note that the trade-off curve depends on the function $\lambda(\cdot,\cdot)$. In particular, given two different constraint functions $\lambda(\cdot,\cdot)$ and $\tilde\lambda(\cdot,\cdot)$, even if $\lambda(P_0^*,P_1^*) = \tilde\lambda(P_0^*,P_1^*)$, the corresponding optimal $e_1(P_0^*,P_1^*)$ may be different.
On the other hand, \Cref{prop:kappa=infty} implies that \fullseq\ tests completely eradicate the trade-off, consistent with the result in \cite{HsuWang_22}.
Given the parameters in \Cref{fig:compare_xi}, there is a trade-off for \semione\ and \semitwo\ tests.

Finally, we 
provide a necessary and sufficient condition on $\alpha$ and $\beta$ such that \semione\ tests can achieve the same error exponents as \fullseq\ tests.
\begin{proposition}\label{prop:efficient_compare}
    Let $\lambda(P_0,P_1)=\RenyiD{P_1}{P_0}{\frac{\beta}{1+\beta}}$ for all $(P_0,P_1)\in\distset$, then \eqref{eq:renyi<mu} holds if and only if $\alpha\beta\geq1$.
    \begin{equation}
        \RenyiD{P_0}{P_1}{\frac{\alpha}{1+\alpha}} \leq \mu(P_0,P_1)\quad\forall\,(P_0,P_1)\in\distset\label{eq:renyi<mu}
    \end{equation}
\end{proposition}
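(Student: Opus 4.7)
The plan is to prove the two implications separately. For the forward direction, fix $\alpha\beta \ge 1$ and let $(Q_0, Q_1)$ be any pair with $g(Q_0, Q_1) < 0$, so there exists $P_1' \in \mcal P_\varepsilon \setminus \{P_1\}$ with $\alpha \KLD{Q_0}{P_1} + \beta \KLD{Q_1}{P_1'} < \RenyiD{P_1'}{P_1}{\frac{\beta}{1+\beta}}$. Applying \eqref{eq:Renyi} with test distribution $V = Q_1$ gives $\RenyiD{P_1'}{P_1}{\frac{\beta}{1+\beta}} \le \beta\KLD{Q_1}{P_1'} + \KLD{Q_1}{P_1}$, and subtracting cancels $\beta\KLD{Q_1}{P_1'}$ to yield $\alpha\KLD{Q_0}{P_1} < \KLD{Q_1}{P_1}$. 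Since $\alpha\beta \ge 1$, $\beta\KLD{Q_1}{P_1} \ge \tfrac{1}{\alpha}\KLD{Q_1}{P_1} > \KLD{Q_0}{P_1}$; combining with \eqref{eq:Renyi} applied with $V = Q_0$ then gives $\alpha\KLD{Q_0}{P_0} + \beta\KLD{Q_1}{P_1} > \alpha\KLD{Q_0}{P_0} + \KLD{Q_0}{P_1} \ge \RenyiD{P_0}{P_1}{\frac{\alpha}{1+\alpha}}$. Taking the infimum over feasible $(Q_0,Q_1)$ establishes $\mu(P_0,P_1) \ge \RenyiD{P_0}{P_1}{\frac{\alpha}{1+\alpha}}$.

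For the reverse direction, the plan is to produce a counterexample via a Fisher-information perturbation analysis. Fix $P_1 \in \interior{\mcal P_\varepsilon}$ and a nonzero direction $v \in \mbb R^{\mcal X}$ with $\sum_x v(x) = 0$, and set $P_0 = P_1 + \epsilon v$ for small $\epsilon > 0$ so that $(P_0,P_1) \in \distset$. Parameterize $Q_0 = P_1 + \epsilon\zeta_0$, $Q_1 = P_1 + \epsilon\zeta_1$, $P_1' = P_1 + \epsilon\eta'$, and define the Fisher norm $\|u\|^2 := \sum_x u(x)^2/P_1(x)$. Standard second-order expansions yield $\KLD{P_1+\epsilon a}{P_1+\epsilon b} = \tfrac{\epsilon^2}{2}\|a-b\|^2 + O(\epsilon^3)$, together with $\RenyiD{P_0}{P_1}{\frac{\alpha}{1+\alpha}} = \tfrac{\alpha\epsilon^2}{2(1+\alpha)}\|v\|^2 + O(\epsilon^3)$ and $\RenyiD{P_1'}{P_1}{\frac{\beta}{1+\beta}} = \tfrac{\beta\epsilon^2}{2(1+\beta)}\|\eta'\|^2 + O(\epsilon^3)$, the latter two following from \eqref{eq:Renyi} by expanding the inner divergences.

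Substituting into $g(Q_0,Q_1)$, the leading-order feasibility condition becomes $\inf_{\eta'}\bigl[\alpha\|\zeta_0\|^2 + \beta\|\zeta_1 - \eta'\|^2 - \tfrac{\beta}{1+\beta}\|\eta'\|^2\bigr] < 0$; the inner minimum is attained at $\eta' = \tfrac{1+\beta}{\beta}\zeta_1$, which causes the condition to collapse to $\|\zeta_1\|^2 > \alpha\|\zeta_0\|^2$. Minimizing the leading-order objective $\alpha\|\zeta_0 - v\|^2 + \beta\|\zeta_1\|^2$ subject to this (taking $\|\zeta_1\|^2 \downarrow \alpha\|\zeta_0\|^2$ in the closure and then optimizing over $\zeta_0$) gives $\zeta_0 = v/(1+\beta)$ and leading-order value $\mu(P_0,P_1) = \tfrac{\alpha\beta\epsilon^2}{2(1+\beta)}\|v\|^2 + O(\epsilon^3)$. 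Hence the ratio $\mu/\RenyiD{P_0}{P_1}{\frac{\alpha}{1+\alpha}}$ tends to $\tfrac{\beta(1+\alpha)}{1+\beta}$, which is strictly less than $1$ precisely when $\alpha\beta < 1$; this yields the counterexample.

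The main technical obstacle is ensuring that the leading-order strict inequality persists after the $O(\epsilon^3)$ remainders are included. One must bound these remainders uniformly over a neighborhood of the near-optimal parameters, and handle the openness of the feasibility set $\{g < 0\}$ by introducing a small slack (e.g., $\|\zeta_1\|^2 = \alpha\|\zeta_0\|^2 + \delta$ for small $\delta > 0$) before sending $\epsilon \to 0$. Because the leading-order gap $1 - \tfrac{\beta(1+\alpha)}{1+\beta}$ is a fixed positive constant whenever $\alpha\beta < 1$, both the slack $\delta$ and the remainders can be absorbed by taking $\epsilon$ sufficiently small.
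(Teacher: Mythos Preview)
Your forward direction is essentially the contrapositive of the paper's argument: you start from $g(Q_0,Q_1)<0$ and deduce the objective exceeds the R\'enyi divergence, while the paper starts from the objective being below the R\'enyi divergence and deduces $g(Q_0,Q_1)\ge 0$. The two key ingredients---applying the variational form \eqref{eq:Renyi} once with $V=Q_1$ and once with $V=Q_0$, and using $\alpha\beta\ge 1$ to flip one coefficient---are identical.

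Your reverse direction is correct but genuinely different from the paper's. The paper picks $(P_0,P_1)\in\distset$ arbitrarily, sets $Q_0=V_\alpha$ and $Q_1=V_\beta$ to be the R\'enyi minimizers associated with $\RenyiD{P_0}{P_1}{\frac{\alpha}{1+\alpha}}$ and $\RenyiD{P_1'}{P_1}{\frac{\beta}{1+\beta}}$, and then reduces the construction to finding $P_1'$ with $\KLD{V_\beta}{P_1}$ in the open interval $\bigl(\alpha\KLD{V_\alpha}{P_1},\tfrac{1}{\beta}\KLD{V_\alpha}{P_1}\bigr)$; this is done by the intermediate value theorem as $P_1'$ moves between $P_1$ and $P_0$. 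Your perturbation approach instead localizes everything near a single $P_1$ and computes the exact second-order ratio $\mu/\RenyiD{P_0}{P_1}{\frac{\alpha}{1+\alpha}}\to\tfrac{\beta(1+\alpha)}{1+\beta}$. The paper's route is cleaner in that it avoids any remainder bookkeeping and produces a counterexample for every $(P_0,P_1)$ rather than only for $P_0$ close to $P_1$; your route is more quantitative, yielding the precise asymptotic deficit, and your identification of the technical slack-and-remainder issue (and how the fixed leading-order gap absorbs it) is the right way to close it.
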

The proof can be found in Appendix~\ref{app:efficient_compare}. 
\Cref{prop:efficient_compare} implies that if $\alpha\beta\geq1$, then \semione\ tests and \fullseq\ tests both achieve the optimal error exponents of efficient tests $\lp\RenyiD{P_1}{P_0}{\frac{\beta}{1+\beta}}, \RenyiD{P_0}{P_1}{\frac{\alpha}{1+\alpha}}\rp$ for all $(P_0,P_1)\in\distset$. On the other hand, 
if $\alpha\beta<1$, then for any \semione\ tests, there exists some distributions $(P_0,P_1)\in\distset$ such that the point $\lp\RenyiD{P_1}{P_0}{\frac{\beta}{1+\beta}}, \RenyiD{P_0}{P_1}{\frac{\alpha}{1+\alpha}}\rp$ is not achievable, and there remains a trade-off between the two error exponents. As for \semitwo\ tests, we do not obtain a similar result as \Cref{prop:efficient_compare}. We discuss the challenges and present some comparisons with \fullseq\ tests in \Cref{subsec:efficient_semitwo}.

\section{Proof of \Cref{prop:composite_converse}}\label{sec:converse}
For $\bm P_1\in\mscr P_1$, we want to show \eqref{eq:Expt_CHT_converse}, that is
\begin{align}
    &e_1(\bm P_1) \leq \inf_{\bm Q\in\mscr P_0\cup \Gamma_0\cup\Omega(\bm P_1)} \sum_{i=1}^s \alpha_i \KLD{Q_i}{P_{1,i}}\\
    &= \min\Big\{\inf_{\bm Q\in\mscr P_0} \sum_{i=1}^s \alpha_i \KLD{Q_i}{P_{1,i}}, \inf_{\bm Q\in\Gamma_0} \sum_{i=1}^s \alpha_i \KLD{Q_i}{P_{1,i}}, \nonumber\\
    &\hspace{3.5cm}\inf_{\bm Q\in\Omega(\bm P_1)} \sum_{i=1}^s \alpha_i \KLD{Q_i}{P_{1,i}}\Big\},
\end{align}
Notice that the bound is decomposed into three parts, corresponding to the sets $\mscr P_0$, $\Gamma_0$ and $\Omega(\bm P_1)$. In the following, we prove them in order.

\subsubsection{$\eT{1}(\bm P_1) \leq \inf_{\bm Q\in\mscr P_0} \sum_{i=1}^s \alpha_i \KLD{Q_i}{P_{1,i}}$}
Let us introduce two lemmas.
\begin{lemma}\label{lemma: converse}
Let $\bm P_0\in\mscr P_0$, $\bm P_1\in\mscr P_1$, and $\Phi=(\tau,\delta)$ be a test. If $\mbb{E}_0\lb\tau|\bm P_0\rb<\infty$, then for any $\mcal{E}\in\mcal F_\tau$, 
\begin{align}
    &\quad\ \BKLD{\mbb P_0\{\mcal{E}\}}{\mbb{P}_{1}\{\mcal{E}\}}\\
    &\leq \sum_{i=1}^\ell N_i \KLD{P_{0,i}}{P_{1,i}} + \sum_{i=\ell+1}^s \mbb{E}_0\lb\lce\alpha_i\tau\rce|\bm P_0\rb \KLD{P_{0,i}}{P_{1,i}},
\end{align}
where $\BKLD{p}{q} = p\log\frac{p}{q} + (1-p)\log\frac{1-p}{1-q}$ is the binary KL divergence. 
\end{lemma}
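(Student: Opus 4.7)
The plan is to reduce the binary KL on the left-hand side to the full KL divergence between the two restricted probability measures $\mbb P_0|_{\mcal F_\tau}$ and $\mbb P_1|_{\mcal F_\tau}$ via the data processing inequality, and then evaluate the latter using Wald's identity (optional stopping) applied sequence-by-sequence to the log-likelihood ratio.

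First, since $\mcal E\in\mcal F_\tau$, the indicator $\mathbbm{1}_{\mcal E}$ is $\mcal F_\tau$-measurable, so the data processing inequality gives
\begin{equation*}
\BKLD{\mbb P_0\{\mcal E\}}{\mbb P_1\{\mcal E\}}\le \KLD{\mbb P_0|_{\mcal F_\tau}}{\mbb P_1|_{\mcal F_\tau}}=\mbb E_0\!\left[\log\frac{d\mbb P_0|_{\mcal F_\tau}}{d\mbb P_1|_{\mcal F_\tau}}\,\middle|\,\bm P_0\right].
\end{equation*}
By mutual independence of the $s$ sequences under both measures, the log-likelihood ratio observed up to $\mcal F_k$ factorizes as
\begin{equation*}
\sum_{i=1}^\ell \sum_{j=1}^{N_i} Z_{i,j}+ \sum_{i=\ell+1}^s\sum_{j=1}^{\lceil\alpha_i k\rceil} Z_{i,j},\qquad Z_{i,j}:=\log\tfrac{P_{0,i}(X_{i,j})}{P_{1,i}(X_{i,j})}.
\end{equation*}
Because $\bm P_0,\bm P_1\in\interior{\mscr D}$, each $Z_{i,j}$ is a bounded random variable with $\mbb E_0[Z_{i,j}|\bm P_0]=\KLD{P_{0,i}}{P_{1,i}}$.

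The fixed-length sequences $i\le\ell$ contribute a deterministic number of summands, so taking expectation under $\mbb P_0$ immediately yields $N_i\KLD{P_{0,i}}{P_{1,i}}$. For each sequential sequence $i>\ell$, the process $M_k^{(i)}:=\sum_{j=1}^{\lceil\alpha_i k\rceil}\big(Z_{i,j}-\KLD{P_{0,i}}{P_{1,i}}\big)$ is a zero-mean $\mcal F_k$-martingale with bounded increments under $\mbb P_0$, and optional stopping at $\tau$ gives $\mbb E_0[M_\tau^{(i)}|\bm P_0]=0$, equivalently the contribution $\mbb E_0[\lceil\alpha_i\tau\rceil|\bm P_0]\KLD{P_{0,i}}{P_{1,i}}$. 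The anticipated main obstacle is justifying optional stopping for the possibly unbounded $\tau$, but the hypothesis $\mbb E_0[\tau|\bm P_0]<\infty$ combined with the uniform bound on the $Z_{i,j}$'s suffices: one applies the theorem at $\tau\wedge k$ and passes $k\to\infty$ by dominated convergence (note $\mbb E_0[\lceil\alpha_i\tau\rceil|\bm P_0]\le\alpha_i\mbb E_0[\tau|\bm P_0]+1<\infty$). Summing the $s$ contributions and chaining back via data processing yields the claimed bound.
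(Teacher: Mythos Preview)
Your proposal is correct and follows essentially the same route as the paper: data processing inequality to pass from the binary KL to $\KLD{\mbb P_0}{\mbb P_1}\big|_{\mcal F_\tau}$, then Wald's identity (which you phrase equivalently as optional stopping of the centered log-likelihood martingale) to evaluate the expectation sequence-by-sequence. Your treatment of the optional-stopping justification via truncation and dominated convergence is in fact more explicit than the paper's, which simply notes that the hypotheses of Wald's identity are met because the summands are bounded and $\mbb E_0[\tau|\bm P_0]<\infty$.
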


Lemma~\ref{lemma: converse} is proved by the data processing inequality of KL divergences and Wald's identity.
Details can be found in Appendix~\ref{app:converse_lemma}.

\begin{lemma}\label{lemma: continuity}
Define the function $h:(0,1)\times(0, 1)\to\mbb{R}$ as $h(p,q)=\frac{\BKLD{1-q}{p}}{-\log p}$. Then, $\lim_{(p,q)\to(0,0)}h(p,q)=1$.
\end{lemma}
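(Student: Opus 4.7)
My plan is to prove the limit by expanding the binary KL divergence in the numerator, isolating the leading-order term, and showing that every remaining piece vanishes when divided by $-\log p$.

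First, I would write out
\begin{equation}
\BKLD{1-q}{p}=(1-q)\log(1-q)-(1-q)\log p + q\log q - q\log(1-p),
\end{equation}
and then divide by $-\log p$. The dominant contribution comes from the term $-(1-q)\log p$, which after division produces $1-q$, and clearly $1-q\to 1$ as $q\to 0$. So the proof reduces to showing that the remaining three terms, after division by $-\log p$, each tend to $0$ as $(p,q)\to(0,0)$.

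Since $(p,q)$ is restricted to $(0,1)\times(0,1)$ and $p\to 0$, the denominator $-\log p$ tends to $+\infty$. For the terms $(1-q)\log(1-q)$ and $-q\log(1-p)$, the numerators tend to $0$ (the first because $\log(1-q)\to 0$, the second because $q\to 0$ and $\log(1-p)\to 0$), so division by an unbounded quantity gives $0$. The one that deserves slightly more care is $\frac{q\log q}{-\log p}$: here $q\log q\to 0$ as $q\to 0$, but since the convergence is independent of $p$, I would invoke the uniform bound $|q\log q|\le e^{-1}$ on $(0,1)$ together with $\frac{1}{-\log p}\to 0$ to conclude that this ratio also vanishes. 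A cleaner alternative is to note that $q\log q\to 0$ and $-\log p\to\infty$ independently, so the two-variable limit of the ratio is $0$.

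I do not anticipate a serious obstacle here; the only subtlety is that the joint limit $(p,q)\to(0,0)$ requires the bounds on the vanishing pieces to hold uniformly in the other variable. Since each offending factor is controlled by a bound depending on only one of $p$ or $q$, the joint limit follows by a simple $\varepsilon$-$\delta$ argument or by combining the one-variable limits directly. Summing the contributions gives $h(p,q)\to 1-0+0-0-0=1$, which is the desired conclusion.
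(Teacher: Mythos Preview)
Your argument is correct and amounts to the same idea as the paper's: expand $\BKLD{1-q}{p}$, observe that $-(1-q)\log p$ is the dominant piece, and show the remaining terms are $o(-\log p)$ uniformly. The paper packages this as a sandwich---giving explicit upper and lower bounds on $\BKLD{1-q}{p}/(-\log p)$ that both tend to $1$---whereas you handle the four expanded terms one by one; the two presentations are equivalent in substance.
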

The proof is easy and stated in Appendix~\ref{app:continuity} for completeness.

Given $\Phi_n = (\tau_n,\delta_n)$, apply Lemma~\ref{lemma: converse} with $\mcal{E} = \{\hat\theta=0\}$. For any $\bm Q\in\mscr P_0$, by the assumption in Theorem~\ref{thm:complete_semi-seq}, $\mbb{E}_0\lb\tau_n|\bm Q\rb\leq n$ and thus
\begin{align}
    &\quad\ \BKLD{1-\pi_{0}(\Phi_n|\bm Q)}{\pi_{1}(\Phi_n|\bm P_1)}\\
    &\leq \sum_{i=1}^\ell N_i \KLD{Q_i}{P_{1,i}} + \sum_{i=\ell+1}^s \mbb{E}_0\lb\lce\alpha_i\tau\rce|\bm Q\rb \KLD{Q_i}{P_{1,i}}\\
    &\leq n\sum_{i=1}^s \alpha_i \KLD{Q_i}{P_{1,i}} + \sum_{i=\ell+1}^s \KLD{Q_i}{P_{1,i}}.\label{eq:64}
\end{align}
The last inequality is due to the fact that $\lce\alpha_i\tau\rce\leq \alpha_i\tau+1$. Notice that since $\bm P_1\in\mscr P_1\subseteq \interior{\mscr D}$, the KL divergence $\KLD{Q_i}{P_{1,i}}$ is bounded by some constant.
For the asymptotic behavior, observe that if $\pi_{1}(\Phi_n|\bm P_1)$ does not vanish as $n$ goes to infinity, we simply have $\eT{1}(\bm P_1)=0$. So we may assume that $\lim_{n\to\infty}\pi_{1}(\Phi_n|\bm P_1) =0$. Also by the constraint on the type-I error exponent, we know that $\pi_{0}(\Phi_n|\bm P_0)$ goes to $0$ exponentially. Using Lemma~\ref{lemma: continuity}, we have
\begin{equation}
    \lim_{n\to\infty} \frac{\BKLD{1-\pi_{0}(\Phi_n|\bm Q)}{\pi_{1}(\Phi_n|\bm P_1)}}{-\log \pi_1(\Phi_n|\bm P_1)} = 1.\label{eq:65}
\end{equation}
By \eqref{eq:64}, 
\begin{equation}
    \liminf_{n\to\infty} \frac{\BKLD{1-\pi_{0}(\Phi_n|\bm Q)}{\pi_{1}(\Phi_n|\bm P_1)}}{n} \leq \sum_{i=1}^s \alpha_i \KLD{Q_i}{P_{1,i}}.\label{eq:66}
\end{equation}
By definition,
\begin{align}
    \eT{1}(\bm P_1) &= \liminf_{n\to\infty}\frac{-\log \pi_1(\Phi_n|\bm P_1)}{n}\\
    &= \liminf_{n\to\infty} \frac{\BKLD{1-\pi_{0}(\Phi_n|\bm Q)}{\pi_{1}(\Phi_n|\bm P_1)}}{n} \nonumber\\    &\hspace{0.5cm}\times\lim_{n\to\infty}\frac{-\log \pi_1(\Phi_n|\bm P_1)}{\BKLD{1-\pi_{0}(\Phi_n|\bm Q)}{\pi_{1}(\Phi_n|\bm P_1)}}\\
    &\leq \sum_{i=1}^s \alpha_i \KLD{Q_i}{P_{1,i}},
\end{align}
where the last inequality follows from \eqref{eq:66} and \eqref{eq:65}.
Since this is true for all $\bm Q\in\mscr P_0$, the upper bound is obtained.

\subsubsection{$\eT{1}(\bm P_1) \leq \inf_{\bm Q\in\Gamma_0} \sum_{i=1}^s \alpha_i \KLD{Q_i}{P_{1,i}}$}
Denote the empirical distribution as $\hat{\bm P} = (\hat{P_1},\hat{P_2},\dots,\hat{P_s})$ where $\hat{P_i}$ corresponds to the $i$-th sequence for $i=1,2,\dots,s$.
Observe that when $\theta=1$ and the underlying distribution is $\bm Q\in\mscr P_1$, $\hat{\bm P}$ will be close to $\bm Q$ with high probability. So if the empirical distribution is close to $\bm Q$, the test must not stop too late in order to ensure $\mbb{E}_1\lb\tau_n|\bm Q\rb\leq n$. However, stopping around $n$, if $g_1(\bm Q)<0$, the test must output $0$ in order to satisfy the universality constraint on the type-I error exponent according to the fixed-length result \cite{LevitanMerhav_02}.
This leads to upper bounds on the type-II error exponent.

We proceed by showing the technical details. If $\Gamma_0\neq\emptyset$, take $\bm Q\in\Gamma_0= \lbp\bm Q\in\mscr P_1\mv g_1(\bm Q) < 0\rbp$. By definition, there exist some $\bm P_0'\in\mscr P_0$ such that $\sum_{i=1}^s \alpha_i \KLD{Q_i}{P_{0,i}'} < \lambda(\bm P_0')$. Given $\epsilon>0$, define 
\begin{align}
    \mcal B_\epsilon(\bm Q) &= \Big\{ \bm R\in\mscr D \,\Big\vert\, \sum_{i=1}^s \alpha_i \KLD{R_i}{Q_i} < \epsilon \Big\},\\
    \mcal B_\lambda^\epsilon(\bm P_0') &= \Big\{ \bm R\in\mscr D \,\Big\vert\, \sum_{i=1}^s \alpha_i \KLD{R_i}{P_{0,i}'} < (1-\epsilon)\lambda(\bm P_0')\Big\}.
\end{align}
By the choice of $\bm P_0'$, we can take $\epsilon$ small enough such that $\mcal B_\epsilon(\bm Q)\subseteq \mcal B_\lambda^\epsilon(\bm P_0')$. For each test $\Phi_n = (\tau_n, \delta_n)$, by assumption $\mbb{E}_1\lb\tau_n|\bm Q\rb\leq n$. Thus by Markov's inequality,
\begin{equation}
    \mbb{Q}\{\tau_n> n\} = \mbb{Q}\{\tau_n\geq n+1\}\leq \frac{\mbb{E}_1\lb\tau_n|\bm Q\rb}{n+1}\leq \frac{n}{n+1}.\label{eq:markov}
\end{equation}
Here $\mbb Q$ is the shorthand notation for the joint probability law of the sequences when the underlying distribution is $\bm Q$.

Now look at the empirical distribution $\hat{\bm P}$ at time $n$. 
Note that even if the test stops earlier, we can still consider the samples until $n$. Let $\bm{\mcal{P}}^n$ denote the collection of all possible empirical distributions corresponding to the sequences observed at time $n$. It is not hard to show that
\begin{equation}
    |\bm{\mcal{P}}^n|\leq \prod_{i=1}^s\big(\lce\alpha_i n\rce + 1\big)^{|\mcal X_i|},
\end{equation}
which is polynomial in $n$.
By the method of types,
\begin{align}
    \mbb{Q}\lbp \hat{\bm P}\notin\mcal B_\epsilon(\bm Q) \rbp
    &\leq \sum_{\hat{\bm P}\in \bm{\mcal{P}}^n\cap \comp{\mcal B_\epsilon(\bm Q)}} 2^{-\sum_{i=1}^s \lce\alpha_i n\rce \KLD{\hat P_{i}}{Q_{i}}}  \\
    &\leq \sum_{\hat{\bm P}\in \bm{\mcal{P}}^n\cap \comp{\mcal B_\epsilon(\bm Q)}} 2^{-n\sum_{i=1}^s \alpha_i \KLD{\hat P_{i}}{Q_{i}}}  \\
    &\leq |\bm{\mcal{P}}^n|2^{-n\epsilon}\\
    &\leq \frac{1}{2(n+1)} \text{ for all $n$ large enough.}\label{eq:too_far}
\end{align}
Combine \eqref{eq:markov} and \eqref{eq:too_far}, then for all $n$ large enough,
\begin{align}
    &\mbb{Q}\lbp \hat{\bm P}\in\mcal B_\epsilon(\bm Q) \text{ and } \tau_n\leq n\rbp \\
    &\geq 1 - \frac{n}{n+1} - \frac{1}{2(n+1)} = \frac{1}{2(n+1)}.\label{eq:78}
\end{align}
Notice that $\mbb{Q}\lbp \hat{\bm P}\in\mcal B_\epsilon(\bm Q) \text{ and } \tau_n\leq n\rbp =$
\begin{equation}
     \sum_{\bm P^n\in \bm{\mcal{P}}^n\cap \mcal B_\epsilon(\bm Q)} \mbb{Q}\lbp \hat{\bm P} = \bm P^n\rbp \mbb{Q}\lbp \tau_n\leq n\mv \hat{\bm P}=\bm P^n\rbp. 
\end{equation}
Hence for all $n$ large enough, there exist some empirical distribution $\bm P^n=(P_1^n,P_2^n,\dots,P_s^n)\in \bm{\mcal{P}}^n\cap \mcal B_\epsilon(\bm Q)$ such that
\begin{equation}
    \mbb{Q}\lbp \tau_n\leq n\mv \hat{\bm P} = \bm P^n\rbp \geq \frac{1}{2(n+1)},\label{eq:geq_1/poly}
\end{equation}
otherwise \eqref{eq:78} will be violated.
Since all the sequences in a given type class have the same probability, a key observation is that this conditional probability is independent of the underlying distributions and ground truth $\theta$. Now we consider the situation where the underlying distributions are $\bm P_0'$ and $\theta=0$. By the method of types and the fact that $\bm P^n\in \mcal B_\epsilon(\bm Q)\subseteq \mcal B_\lambda^\epsilon(\bm P_0')$, for $n$ large enough,
\begin{align}
    \mbb{P}_0'\lbp \hat{\bm P} = \bm P^n\rbp
    &\geq \frac{1}{|\bm{\mcal{P}}^n|}2^{-\sum_{i=1}^s \lce\alpha_i n\rce \KLD{P_{i}^n}{P_{0,i}'}} \\
    &\geq \frac{c}{|\bm{\mcal{P}}^n|}2^{-n(1-\epsilon)\lambda(\bm P_0')}, \label{eq:type_prob}
\end{align}
where $c>0$ is a constant emerging from using $\lce\alpha_i n\rce\leq \alpha_i n+1$ and the boundedness of KL divergence for fixed $\bm P_0'$.
Using the key observation and the inequalities \eqref{eq:geq_1/poly}, \eqref{eq:type_prob}, we know that for $n$ large enough,
\begin{align}
    &\quad\ \mbb{P}_0'\lbp \hat{\bm P} = \bm P^n \text{ and } \tau_n\leq n\rbp  \\
    &= \mbb{P}_0'\lbp \hat{\bm P} = \bm P^n\rbp \mbb{P}_0'\lbp \tau_n\leq n\mv \hat{\bm P} = \bm P^n\rbp\\
    &=\mbb{P}_0'\lbp \hat{\bm P} = \bm P^n\rbp \mbb{Q}\lbp \tau_n\leq n\mv \hat{\bm P} = \bm P^n\rbp\\
    &\geq \frac{c}{2(n+1)|\bm{\mcal{P}}^n|}2^{-n(1-\epsilon)\lambda(\bm P_0')}.
\end{align}
Consider the type-I error probability
\begin{align}
    \pi_0(\Phi_n|\bm P_0')
    &\geq \mbb{P}_0'\lbp \hat\theta=1,\ \hat{\bm P} = \bm P^n \text{ and } \tau_n\leq n\rbp\\
    &\geq \frac{c}{2(n+1)|\bm{\mcal{P}}^n|}2^{-n(1-\epsilon)\lambda(\bm P_0')}\nonumber\\
    &\hspace{0.5cm}\times \mbb{P}_0'\lbp \hat\theta=1\mv\hat{\bm P} = \bm P^n \text{ and } \tau_n\leq n\rbp .
\end{align}
Recall that $|\bm{\mcal{P}}^n|$ is polynomial in $n$. Since $\{\Phi_n\}$ satisfies the constraint on the type-I error exponent, that is $\pi_0(\Phi_n|\bm P_0')\dotleq 2^{-n\lambda(\bm P_0')}$, we must have
\begin{equation}
    \mbb{P}_0'\lbp \hat\theta=1\mv\hat{\bm P} = \bm P^n \text{ and } \tau_n\leq n\rbp = o(1).\label{eq:1-o(1)}
\end{equation}
Again, since all the sequences in a given type class have the same probability, the above conditional probability is independent of the underlying distributions and ground truth $\theta$. When the underlying distributions are $\bm P_1$ and $\theta=1$, the error probability can be lower bounded using \eqref{eq:geq_1/poly}, \eqref{eq:1-o(1)} and the method of types. Specifically, for $n$ large enough,
\begin{align}
    &\quad\ \pi_1(\Phi_n|\bm P_1)\\
    &\geq \mbb{P}_1\lbp \hat\theta=0,\ \hat{\bm P} = \bm P^n \text{ and } \tau_n\leq n\rbp\\
    &= \mbb{P}_1\lbp \hat{\bm P} = \bm P^n\rbp \mbb{P}_1\lbp \tau_n\leq n\mv \hat{\bm P} = \bm P^n\rbp \nonumber\\
    &\hspace{2.2cm}\times\mbb{P}_1\lbp \hat\theta=1\mv\hat{\bm P} = \bm P^n \text{ and } \tau_n\leq n\rbp\\
    &= \mbb{P}_1\lbp \hat{\bm P} = \bm P^n\rbp \mbb{Q}\lbp \tau_n\leq n\mv \hat{\bm P} = \bm P^n\rbp \nonumber\\
    &\hspace{2.2cm}\times\mbb{P}_0'\lbp \hat\theta=1\mv\hat{\bm P} = \bm P^n \text{ and } \tau_n\leq n\rbp\\
    &= \frac{c'}{|\bm{\mcal{P}}^n|}2^{-n\sum_{i=1}^s \alpha_i \KLD{P_{i}^n}{P_{1,i}}}\times\frac{1}{2(n+1)}\times(1-o(1))\\
    &\geq \frac{1}{\poly{n}} 2^{-n\sum_{i=1}^s \alpha_i \KLD{P_{i}^n}{P_{1,i}}}.
\end{align}
Hence
\begin{equation}
    e_1(\bm P_1)\leq \liminf_{n\to\infty} \sum_{i=1}^s \alpha_i \KLD{P_{i}^n}{P_{1,i}}.
\end{equation}

Let $\bm P^\epsilon = (P_1^\epsilon,P_2^\epsilon,\dots,P_s^\epsilon)$ be a limit point of $\{\bm P^n\}$, then by the continuity of KL divergence,
\begin{equation}
    e_1(\bm P_1)\leq \sum_{i=1}^s \alpha_i \KLD{P_{i}^\epsilon}{P_{1,i}}.
\end{equation}
Also by the continuity of KL divergence, $\sum_{i=1}^s \alpha_i \KLD{P_{i}^\epsilon}{Q_{i}} \leq \epsilon$. As $\epsilon$ goes to $0$, $\bm P^\epsilon$ converge to $\bm Q$, and
\begin{align}
    e_1(\bm P_1)&\leq \liminf_{\epsilon\to0} \sum_{i=1}^s \alpha_i \KLD{P_{i}^\epsilon}{P_{1,i}}= \sum_{i=1}^s \alpha_i \KLD{Q_{i}}{P_{1,i}}.
\end{align}
Since this hold for all $\bm Q\in\Gamma_0$, we get the desired result.

\subsubsection{$\eT{1}(\bm P_1) \leq \inf_{\bm Q\in\Omega(\bm P_1)} \sum_{i=1}^s \alpha_i \KLD{Q_i}{P_{1,i}}$}
Observe that the set $\Omega(\bm P_1)$ corresponds to the limitation imposed by the fixed-length training sequences. 
The bound is obtained by a reduction from a fixed-length composite hypothesis testing problem. Intuitively, if we are allowed to drop the constraint on the expected stopping time and take infinitely many samples of the $(\ell+1)$-th to $s$-th sequences, then their underlying distributions can be fully known. Specifically, when $\theta=1$ and the underlying distribution is $\bm P_1$, this means $(P_{1,\ell+1},\dots,P_{1,s})$ are known.
In this situation, consider the following equivalent problem. Suppose $(P_{1,\ell+1},\dots,P_{1,s})$ is \emph{fixed} and \emph{known}. 
The decision maker observes $\ell$ independent fixed-length sequences $X_1^{N_1},\dots,X_\ell^{N_\ell}$, and the objective is to decide between the following two hypotheses:
\begin{align*}
    &\mcal H_0:X_{i,k}\diid P_{0,i}\ \forall\, 1\leq i\leq \ell\text{ for some }\bm P_0\in\mscr P_0\cap \mcal{S}(\bm P_1)\\ 
    &\mcal H_1:X_{i,k}\diid P_{1,i}'\ \forall\, 1\leq i\leq \ell\text{ for some }\bm P_1'\in\mscr P_1\cap \mcal{S}(\bm P_1) 
\end{align*}

Recall that $ \mcal{S}(\bm P_1) = \lbp \bm Q\in\mscr D \mv Q_i=P_{1,i}\ \forall\,i=\ell+1,\dots,s\rbp$ represents the collection of distributions with the $(\ell+1)$-th to $s$-th elements equal to $P_{1,\ell+1},\dots,P_{1,s}$.
Observe that this is a fixed-length composite hypothesis testing problem\footnote{The case $\mscr P_0\cap \mcal{S}(\bm P_1)=\emptyset$ is trivial, so here we focus on the case when it is nonempty.}.
Given a sequence of tests $\{\Phi_n\}$ for the original problem, we can apply it to this new problem. Specifically, generate the $(\ell+1)$-th to $s$-th sequences using the knowledge of $(P_{1,\ell+1},\dots,P_{1,s})$. Along with the observed fixed-length sequences $X_1^{N_1},\dots,X_\ell^{N_\ell}$, the test $\Phi_n$ will output a decision. The above method gives a randomized test, yet theoretically it can be easily derandomized without affecting the exponential rate. One can observe a clear correspondence between the error probabilities of the two problems, and hence a correspondence between the exponential rates of error probabilities.
Since $\{\Phi_n\}$ satisfies the universality constraints on the type-I error exponent, we can apply the result for fixed-length composite hypothesis testing \eqref{eq:optimal_fixed-length_e1} and get the desired bound.

\section{Proof of \Cref{prop:composite_achievability}}\label{sec:achievability}
In this section we prove \Cref{prop:composite_achievability} by proposing a sequence of tests satisfying the universality constraints and achieve the desired error exponents.

\subsection{Proposed Test}
When $\ell<s$, the decision maker has the flexibility to decide when to stop, so intuitively we would like to take more samples before making the decision.
Nevertheless, the expected stopping time should not exceed $n$, meaning the probability of taking more samples should be kept small. 
First consider the empirical distributions at time $n-1$, written as $\hat {\bm P}^{n-1}$ for simplicity. 
Observe that with high probability, the empirical distributions are close to the true underlying distributions.
Define two subsets of $\mscr D$, for $\theta=0,1$,
\begin{align}
    &\Lambda_\theta^n = \bigcup_{\bm P_\theta'\in\mscr P_\theta} \Big\{ \bm Q\in\mscr D\,\Big\vert\,\sum_{i=1}^s \alpha_i \KLD{Q_{i}}{P_{\theta,i}'} <  \eta_n\Big\}, \label{eq:def_Lambda}
\end{align}
where $\eta_n = \big[2\log n + \sum_{i=1}^s |\mcal X_i|\log(\lce\alpha_i n\rce+1)\big]/(n-1)$ is a margin vanishing in $n$. By the method of types, the empirical distributions lie in these sets with high probability. Hence the universality constraint on the expected stopping time can be satisfied with the following stopping time
\begin{equation}\label{eq:stopping_time}
    \tau_n = \begin{cases}
        n-1&\quad\text{if }\hat {\bm P}^{n-1}\in\Lambda_0^n \cup \Lambda_1^n,\\
        n^2&\quad\text{if }\hat {\bm P}^{n-1}\in\lp\Lambda_0^n \cup \Lambda_1^n\rp^{\mathrm{c}}.
    \end{cases}
\end{equation}
Here, $\hat {\bm P}^{n-1}\in\lp\Lambda_0^n \cup \Lambda_1^n\rp^{\mathrm{c}}$ implies that the observed samples are not a good representation of the real underlying distribution, and thus more samples are needed.
Stopping at time $n^2$ can be viewed as approaching ``having infinitely many samples''.

To specify the decision rule, in addition to the set $\Gamma_0= \lbp\bm Q\in\mscr P_1\mv g_1(\bm Q) < 0\rbp$, let $\Gamma_1= \lbp\bm Q\in\mscr P_1\mv g_1(\bm Q) \geq 0\rbp$.
Note that $\Gamma_0$ consists of the distributions under ground truth $1$ that are ``$\lambda$-close'' to some possible distributions $\bm P_0\in\mscr P_0$ under ground truth $0$. On the other hand, $\Gamma_1$ consists of those that are \emph{not} too close to $\mscr P_0$.
For $\theta=0,1$, let
\begin{align}
    \mcal B_n(\Gamma_\theta) = &\bigcup_{\bm Q\in\Gamma_\theta} \Big\{ \bm R\in\mscr D\,\Big\vert\,
    \sum_{i=1}^s \alpha_i \KLD{R_{i}}{Q_{i}} < \eta_n \Big\}
\end{align}
be a subset of $\mcal P(\mcal X)^3$ that slightly extends $\Gamma_\theta$.
When $\tau_n=n-1$, the decision rule maps the observed samples to $\hat\theta$, where
\begin{equation}\label{eq:decision_rule_n-1}
    \hat\theta=\begin{cases}
        0\quad\text{if } \hat {\bm P}^{n-1}\in \Lambda_0^n \cup \mcal B_n(\Gamma_0),\\
        1\quad\text{if } \hat {\bm P}^{n-1}\in \mcal B_n(\Gamma_1).
    \end{cases}
\end{equation}
Notice that $\mcal B_n(\Gamma_0)\cup\mcal B_n(\Gamma_1)=\Lambda_1^n$. 

When $\tau_n=n^2$, the decision maker observes $(X_1^{N_1}, \dots,X_\ell^{N_\ell},X_{\ell+1}^{\lce\alpha_{\ell+1}n^2\rce},\dots,X_s^{\lce\alpha_{s}n^2\rce})$.
Denote the corresponding empirical distribution as $\hat {\bm P}^{n^2}$, and use the fixed-length test in \cite{LevitanMerhav_02}. 
As a result, we can ensure that the type-I error probability is of the same order as when $\tau_n=n-1$. Specifically, the decision rules is 
\begin{equation}
    \hat\theta = \begin{cases}
        0\quad\text{if } g_n(\hat {\bm P}^{n^2}) < 0,\\
        1\quad\text{otherwise},
    \end{cases}
\end{equation}
where $g_n(\bm Q) =$
\begin{equation}
     \inf_{\bm P_0'\in\mscr P_0} \Big( \sum_{i=1}^\ell \alpha_i \KLD{Q_i}{P_{0,i}'} + n\hspace{-0.2cm}\sum_{i=\ell+1}^s \alpha_i \KLD{Q_i}{P_{0,i}'} - \lambda(\bm P_0') \Big).
\end{equation}

\subsection{Performance Analysis}

Now the test is clearly defined, we are ready to analyze its performance. 

\subsubsection{Expected Stopping Time}
First we show that the proposed test satisfies the universality constraint on the expected stopping time. Given $\theta\in\{0,1\}$ and $\bm P_\theta\in\mscr P_\theta$, we have
\begin{align}
    &\quad\ \mbb E_\theta[\tau_n|\bm P_\theta]\\ 
    &\leq n-1 + \mbb P_\theta\lbp\hat{\bm P}^{n-1}\in(\Lambda_\theta^n)^{\mathrm c}\rbp\times n^2\\
    &\leq n-1 +\hspace{-0.2cm}\sum_{\hat{\bm P}\in \bm{\mcal{P}}^{n-1}\cap (\Lambda_\theta^n)^{\mathrm c}} \hspace{-0.2cm}\Big(2^{-\sum_{i=1}^\ell \lce\alpha_i n\rce \KLD{\hat P_{i}}{P_{\theta,i}}} \nonumber\\
    &\hspace{3cm}\times2^{-\sum_{i=\ell+1}^s \lce\alpha_i (n-1)\rce \KLD{\hat P_{i}}{P_{\theta,i}}}\Big) \times n^2\label{eq:106}\\
    &\leq n-1 + \prod_{i=1}^s (\lce\alpha_i n\rce+1)^{|\mcal X_i|}\times 2^{-(n-1)\eta_n}\times n^2 =n.
\end{align}
Note that \eqref{eq:106} follows from the method of types. Recall that $\bm{\mcal{P}}^{n-1}$ is the collection of all possible empirical distributions corresponding to the sequences observed at time $n-1$, and $|\bm{\mcal{P}}^{n-1}|\leq \prod_{i=1}^s (\lce\alpha_i n\rce+1)^{|\mcal X_i|}$.

\subsubsection{Error exponents}
To calculate the error exponents, we introduce a useful lemma and define some sets.
\begin{lemma}\label{lemma:inf_continuity}
    Let $S\subseteq \mbb R^m$ be a compact set, $f:S\to \mbb R$ be a continuous function and $A\subseteq S$. Given $\epsilon>0$ and $x\in\mbb R^m$, let $\mcal B'_\epsilon(x) = \lbp y\in\mbb R^m : \lnorm y-x\rnorm_1<\epsilon\rbp$ be the $\epsilon$-ball centered at $x$. Let $\mcal B'_\epsilon(A) = \cup_{x\in A}\mcal B'_\epsilon(x)$. Then
    \begin{enumerate}
        \item $\inf_{x\in A} f(x) = \inf_{x\in\closure{A}} f(x)$,
        \item $\lim_{\epsilon\to0}\inf_{x\in \mcal B'_\epsilon(A)\cap S} f(x) = \inf_{x\in\closure{A}} f(x)$
    \end{enumerate}
\end{lemma}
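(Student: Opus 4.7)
The plan is to handle the two parts in order, with part 1 being a direct density-plus-continuity argument and part 2 reducing to part 1 via a compactness-based subsequence extraction.

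For part 1, the inequality $\inf_{x\in\closure{A}} f(x) \leq \inf_{x\in A} f(x)$ is immediate from $A\subseteq\closure{A}$. For the reverse, I would take any $y\in\closure{A}$ and use the sequential characterization of closure to pick $x_n\in A$ with $x_n\to y$; continuity of $f$ gives $f(x_n)\to f(y)$, so $\inf_{x\in A} f(x)\leq f(y)$, and taking the infimum over $y\in\closure{A}$ closes the gap. No compactness is needed here beyond the continuity of $f$.

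For part 2, I would first observe that the map $\epsilon\mapsto \inf_{x\in\mcal B'_\epsilon(A)\cap S} f(x)$ is monotone non-decreasing as $\epsilon\downarrow 0$ (smaller neighborhoods give a smaller feasible set and hence a larger infimum), so the limit exists and equals $\sup_{\epsilon>0}\inf_{x\in\mcal B'_\epsilon(A)\cap S} f(x)$. Since $A\subseteq \mcal B'_\epsilon(A)\cap S$ for every $\epsilon>0$, we immediately get $\inf_{x\in\mcal B'_\epsilon(A)\cap S} f(x) \leq \inf_{x\in A} f(x) = \inf_{x\in\closure{A}} f(x)$, where the last equality uses part 1. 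This yields one direction of part 2.

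For the reverse direction, I would pick any sequence $\epsilon_n\downarrow 0$ and take $x_n\in\mcal B'_{\epsilon_n}(A)\cap S$ with $f(x_n)\leq \inf_{x\in\mcal B'_{\epsilon_n}(A)\cap S} f(x) + 1/n$. Compactness of $S$ gives a subsequence $x_{n_k}\to x^*\in S$. For each $n_k$, pick $a_{n_k}\in A$ with $\|x_{n_k}-a_{n_k}\|_1<\epsilon_{n_k}$; then $a_{n_k}\to x^*$ as well, so $x^*\in\closure{A}$. Continuity of $f$ gives $f(x_{n_k})\to f(x^*)\geq \inf_{x\in\closure{A}} f(x)$, and combined with the monotone limit this proves the matching lower bound. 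The main (mild) obstacle is simply keeping track of the subsequence extraction and confirming that $x^*$ lands in $\closure{A}$ rather than merely in $S$; everything else is routine. This lemma is then what powers the manipulation of $g^{\bm P_1}$ in the main text, in particular the observation immediately preceding \Cref{prop:match}.
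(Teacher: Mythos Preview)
Your proposal is correct and essentially identical to the paper's proof: part 1 uses the same density-plus-continuity argument, and part 2 uses the same monotonicity observation followed by extracting a convergent subsequence of near-infimizers via compactness of $S$ and showing the limit lies in $\closure{A}$. The only cosmetic difference is that the paper phrases the reverse inequality in part 2 as a proof by contradiction rather than a direct argument, but the underlying mechanics are the same.
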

The proof is easy and the details can be found in Appendix~\ref{app:inf_continuity}.
Since $\mscr D$ can be embedded into $\mbb R^{\sum_{i=1}^s (|\mcal X_i|-1)}$, consider for $\theta=0,1$, 
\begin{equation}
    \mcal B'_n(\Gamma_\theta) = \bigcup_{\bm Q\in\Gamma_\theta} \Big\{ \bm R\in\mscr D \,\Big\vert\, \lnorm \bm R-\bm Q \rnorm_1 < \sqrt{\eta_n}\sum_{i=1}^s \sqrt{2/\alpha_i} \Big\},
\end{equation}
and
\begin{equation}
    \tilde\Lambda_0^n = \bigcup_{\bm P_0'\in\mscr P_0} \Big\{ \bm Q\in\mscr D\,\Big\vert\,\lnorm \bm Q-\bm P_0' \rnorm_1 < \sqrt{\eta_n}\sum_{i=1}^s \sqrt{2/\alpha_i}\Big\}.
\end{equation}
By Pinsker's inequality, we know that $\mcal B_n(\Gamma_\theta)\subseteq\mcal B'_n(\Gamma_\theta)$ and $\Lambda_0^n\subseteq\tilde\Lambda_0^n$. 
Specifically, for $\bm Q\in\Gamma_\theta$ and $\bm R\in\mscr D$, if $\sum_{i=1}^s \alpha_i \KLD{R_{i}}{Q_{i}} < \eta_n$, then
\begin{align}
    \lnorm \bm R-\bm Q \rnorm_1 &= \sum_{i=1}^s \lnorm  R_i- Q_i \rnorm_1 \\
    &\leq \sum_{i=1}^s \sqrt{2\KLD{R_i}{Q_i}}<  \sum_{i=1}^s \sqrt{2\eta_n/\alpha_i}.
\end{align}
Similar argument works for $\lnorm \bm Q-\bm P_0' \rnorm_1$. Notice that $\sum_{i=1}^s \sqrt{2/\alpha_i}$ is constant, so $\sqrt{\eta_n}\sum_{i=1}^s \sqrt{2/\alpha_i}$ goes to $0$ as $n$ goes to infinity.

For the type-I error exponent, let $\theta=0$ and $\bm P_0\in\mscr P_0$. Based on the stopping time, the error events can be divided into two parts. When $\tau_n=n-1$, there is an error only if $\hat {\bm P}^{n-1}\in \mcal B_n(\Gamma_1)$. 
If $\Gamma_1\neq\emptyset$, using the method of types,
\begin{align}
    &\quad\ \mbb P_0\lbp\hat{\bm P}^{n-1}\in\mcal B_n(\Gamma_1)\rbp\\
    &\leq\mbb P_0\lbp\hat {\bm P}^{n-1}\in\mcal B'_n(\Gamma_1)\rbp\\
    &\leq \sum_{\hat{\bm P}\in \bm{\mcal{P}}^{n-1}\cap \mcal B'_n(\Gamma_1)} \Big(2^{-\sum_{i=1}^\ell \lce\alpha_i n\rce \KLD{\hat P_{i}}{P_{0,i}}}\nonumber\\
    &\hspace{2.8cm} \times2^{-\sum_{i=\ell+1}^s \lce\alpha_i (n-1)\rce \KLD{\hat P_{i}}{P_{0,i}}}\Big)\\
    &\leq \prod_{i=1}^s (\lce\alpha_i n\rce+1)^{|\mcal X_i|}\times2^{-(n-1)\tilde\lambda_n(\bm P_0)},\label{eq:113}
\end{align}
where
\begin{equation}
    \tilde\lambda_n(P_0,P_1) = \inf_{\bm R\in\mcal B'_n(\Gamma_1)} \sum_{i=1}^s \alpha_i \KLD{R_{i}}{P_{0,i}}.
\end{equation}
Since $\sqrt{\eta_n}\sum_{i=1}^s \sqrt{2/\alpha_i}$ vanishes as $n$ goes to infinity, using Lemma~\ref{lemma:inf_continuity} and the definition of $\Gamma_1$,
\begin{equation}
    \lim_{n\to\infty}\tilde\lambda_n(\bm P_0) = \inf_{\bm R\in\Gamma_1} \sum_{i=1}^s \alpha_i \KLD{R_{i}}{P_{0,i}} \geq \lambda(\bm P_0).\label{eq:115}
\end{equation}
Combining \eqref{eq:113} and \eqref{eq:115}, we can see that the error when stopping at $n-1$ has exponential rate at least $\lambda(\bm P_0)$.
When $\tau_n=n^2$, there is an error if $g_n(\hat {\bm P}^{n^2}) \geq 0$, by the method of types, 
\begin{align}
    &\quad\ \mbb P_0\lbp \tau_n=n^2,\ \hat\theta=1 \rbp \\
    &= \sum_{\substack{\hat{\bm P}\in \bm{\mcal{P}}^{n^2}\\g_n(\hat{\bm P})\geq0}} 2^{-\sum_{i=1}^\ell \lce\alpha_i n\rce \KLD{\hat P_{i}}{P_{0,i}}-\sum_{i=\ell+1}^s \lce\alpha_i n^2\rce \KLD{\hat P_{i}}{P_{0,i}}}\\
    &\leq \prod_{i=1}^\ell (\lce\alpha_i n\rce+1)^{|\mcal X_i|}\prod_{i=\ell+1}^s (\lce\alpha_i n^2\rce+1)^{|\mcal X_i|}\times 2^{-n\lambda(\bm P_0)}\\
    &\doteq 2^{-n\lambda(\bm P_0)}.
\end{align}
As a result, the total type-I error probability has exponential rate $\lambda(\bm P_0)$, which satisfies the universality constraint on the type-I error exponent.

For the type-II error exponent, let $\theta=1$ and $\bm P_1\in\mscr P_1$. When $\tau_n=n-1$, the error probability is upper bounded by
\begin{align}
    &\mbb P_1\lbp\hat{\bm P}^{n-1}\in\Lambda_0^n \cup \mcal B_n(\Gamma_0)\rbp
    \leq \mbb P_1\lbp\hat{\bm P}^{n-1}\in\tilde\Lambda_0^n \cup \mcal B_n'(\Gamma_0)\rbp \label{eq:error_n-1}
\end{align}
Similar as above, since $\sqrt{\eta_n}\sum_{i=1}^s \sqrt{2/\alpha_i}$ vanishes as $n$ goes to infinity, using Lemma~\ref{lemma:inf_continuity} and method of types, it can be shown that the resulting error exponent is at least $\inf_{\bm Q\in\mscr P_0\cup \Gamma_0} \sum_{i=1}^s \alpha_i \KLD{Q_i}{P_{1,i}}$.
On the other hand, when $\tau_n=n^2$, 
\begin{align}
    &\quad\ \mbb P_1\lbp\tau_n=n^2,\  \hat\theta =0 \rbp\\
    &= \sum_{\substack{\hat{\bm P}\in \bm{\mcal{P}}^{n^2}\\g_n(\hat{\bm P})<0}} 2^{-\sum_{i=1}^\ell \lce\alpha_i n\rce \KLD{\hat P_{i}}{P_{1,i}}-\sum_{i=\ell+1}^s \lce\alpha_i n^2\rce \KLD{\hat P_{i}}{P_{1,i}}}\\
    &\leq \prod_{i=1}^\ell (\lce\alpha_i n\rce+1)^{|\mcal X_i|}\prod_{i=\ell+1}^s (\lce\alpha_i n^2\rce+1)^{|\mcal X_i|}\times 2^{-n\mu_n(\bm P_1)},
\end{align}
where $\mu_n(\bm P_1) =$
\begin{equation}
     \inf_{\substack{\bm Q\in\mscr D\\g_n(\bm Q) < 0}} \Big( \sum_{i=1}^\ell \alpha_i \KLD{Q_i}{P_{1,i}} + n\sum_{i=\ell+1}^s \alpha_i \KLD{Q_i}{P_{1,i}} \Big).
\end{equation}
\begin{lemma}\label{lemma:limit}
Assume $\mscr P_0$ is bounded away from the boundary of $\mscr D$, and $\lambda:\mscr P_0 \to (0,\infty)$ can be extended to a continuous function $\bar\lambda:\closure{\mscr P_0}\to[0,\infty)$. Then for $\bm P_1\in\mscr P_1$, the sequence $\{\mu_n(\bm P_1)\}$ is non-decreasing in $n$, and
\begin{equation}
    \lim_{n\to\infty} \mu_n(\bm P_1) \geq \inf_{\bm Q\in\Omega'(\bm P_1)} \sum_{i=1}^s \alpha_i \KLD{Q_i}{P_{1,i}}.
\end{equation}
\end{lemma}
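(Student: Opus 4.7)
The plan is to prove monotonicity and the liminf bound separately, combining compactness with the continuity of $\bar\lambda$ and of KL divergence. For monotonicity I would observe that every KL divergence appearing in $f_n$ and $g_n$ is non-negative, so for each fixed $\bm Q$ and each $\bm P_0'\in\mscr P_0$ the integrand defining $g_n$ is non-decreasing in $n$; taking the infimum gives $g_n(\bm Q) \leq g_{n+1}(\bm Q)$, and similarly for $f_n$. Hence the feasible region $\{g_{n+1}<0\}$ is contained in $\{g_n<0\}$ while the objective is only larger, yielding $\mu_{n+1}\geq \mu_n$.

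For the liminf bound, let $L := \lim_n \mu_n$; the case $L=\infty$ is vacuous, so assume $L<\infty$. For each large $n$ I would pick $\bm Q^{(n)}$ with $g_n(\bm Q^{(n)})<0$ and $f_n(\bm Q^{(n)})\leq \mu_n+1/n \leq L+1$, together with a near-minimizing witness $\bm P_0'^{(n)}\in\mscr P_0$ that attains $g_n(\bm Q^{(n)})$ to accuracy $1/n$. The bound on $f_n(\bm Q^{(n)})$ squeezes $n\sum_{i>\ell}\alpha_i\KLD{Q^{(n)}_i}{P_{1,i}} = O(1)$, so $Q^{(n)}_i\to P_{1,i}$ for $i>\ell$. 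Since $\bar\lambda$ is bounded on the compact set $\closure{\mscr P_0}$, the analogous estimate applied to the witness inequality yields $n\sum_{i>\ell}\alpha_i\KLD{Q^{(n)}_i}{P_{0,i}'^{(n)}} = O(1)$, and combined with the previous display this forces $P_{0,i}'^{(n)}\to P_{1,i}$ for $i>\ell$. Compactness of $\closure{\mscr P_0}$ in $\interior{\mscr D}$ (by the first hypothesis) and of the probability simplex allow extracting a joint subsequence $\{n_k\}$ with $\bm Q^{(n_k)}\to \bm Q^*\in\mcal S(\bm P_1)$ and $\bm P_0'^{(n_k)}\to \bm P_0^*\in \closure{\mscr P_0}\cap\mcal S(\bm P_1)$. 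Passing to the limit in
\[
\sum_{i=1}^{\ell}\alpha_i\KLD{Q^{(n_k)}_i}{P_{0,i}'^{(n_k)}} - \lambda(\bm P_0'^{(n_k)}) \leq 1/n_k,
\]
by joint continuity of KL on the compact interior region and continuity of $\bar\lambda$, gives $\sum_{i=1}^{\ell}\alpha_i\KLD{Q^*_i}{P_{0,i}^*}\leq \bar\lambda(\bm P_0^*)$, placing $\bm Q^*$ in $\bar\Omega(\bm P_1)$; lower-semicontinuity of $f$ then delivers $L\geq f(\bm Q^*)$. The final step is to identify $f(\bm Q^*)$ with the infimum over $\Omega'(\bm P_1)$, which I would do by invoking Lemma~\ref{lemma:inf_continuity} on the continuous map $\bm P_0'\mapsto \sum_{i\leq\ell}\alpha_i\KLD{Q^*_i}{P_{0,i}'} - \bar\lambda(\bm P_0')$ to replace $\closure{\mscr P_0}\cap\mcal S(\bm P_1)$ back by $\mscr P_0\cap\mcal S(\bm P_1)$ when computing $g^{\bm P_1}(\bm Q^*)$.

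The hard part will be this last identification: compactness naturally delivers only a witness $\bm P_0^* \in \closure{\mscr P_0}\cap\mcal S(\bm P_1)$, i.e., membership in $\bar\Omega(\bm P_1)$, whereas $\Omega'(\bm P_1)$ requires a witness drawn from $\mscr P_0\cap\mcal S(\bm P_1)$. Closing this gap calls for either the closure-interchange property $\closure{\mscr P_0\cap\mcal S(\bm P_1)} = \closure{\mscr P_0}\cap\mcal S(\bm P_1)$ that appears in Proposition~\ref{prop:match}, or a separate limit-interchange based on Lemma~\ref{lemma:inf_continuity}. Some additional bookkeeping is also needed to secure the joint continuity of KL when passing to the limit, which is exactly why the hypothesis that $\mscr P_0$ is bounded away from $\partial\mscr D$ (ensuring $P_{0,i}^*$ has full support) and the extension of $\lambda$ to a continuous $\bar\lambda$ both enter decisively.
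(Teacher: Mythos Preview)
Your approach is essentially the paper's: monotonicity via nonnegativity of the KL terms so that both $g_n$ and the objective are nondecreasing in $n$; then, for the limit, select near-minimizers $(\bm Q^{(n)},\bm P_0'^{(n)})$, use the $O(1)$ bounds on $n\sum_{i>\ell}\alpha_i\KLD{\cdot}{\cdot}$ to force the last $s-\ell$ coordinates of both sequences to $P_{1,i}$, extract a convergent subsequence by compactness of $\closure{\mscr P_0}$ and of $\mscr D$, and pass to the limit using continuity of KL (guaranteed by the full-support hypothesis) and of $\bar\lambda$.

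Your caution about the ``hard part'' is well-placed but unnecessary here. The paper's own proof also terminates at $\bar{\bm Q}\in\bar\Omega(\bm P_1)$, i.e., at $\bar g^{\bm P_1}(\bar{\bm Q})\le 0$ with a witness $\bar{\bm P}_0\in\closure{\mscr P_0}\cap\mcal S(\bm P_1)$, and concludes $L\ge \inf_{\bm Q\in\bar\Omega(\bm P_1)}\sum_i\alpha_i\KLD{Q_i}{P_{1,i}}$. It does \emph{not} attempt to replace $\bar\Omega$ by $\Omega'$ inside this lemma, and only the $\bar\Omega$ bound is consumed downstream in Proposition~\ref{prop:composite_achievability}. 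So you should not invoke the closure-interchange condition of Proposition~\ref{prop:match} or Lemma~\ref{lemma:inf_continuity} here; that bridging happens elsewhere (Lemma~\ref{lemma:rewrite_Omega} and the discussion preceding Proposition~\ref{prop:match}), under the extra hypothesis $\closure{\mscr P_0\cap\mcal S(\bm P_1)}=\closure{\mscr P_0}\cap\mcal S(\bm P_1)$, which is not assumed in this lemma.
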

Intuitively, as $n$ grows, to minimize $\sum_{i=1}^\ell \alpha_i \KLD{Q_i}{P_{1,i}} + n\sum_{i=\ell+1}^s \alpha_i \KLD{Q_i}{P_{1,i}}$, $Q_i$ should be close to $P_{1,i}$ for $i=\ell+1,\dots,s$, otherwise $n\KLD{Q_i}{P_{1,i}}$ gets too large. Also, to have $g_n(\bm Q)$ less than $0$, $P_{0,i}'$ should be close to $Q_i$ for $i=\ell+1,\dots,s$. So the result should approach restricting $P_{0,i}'=Q_i=P_{1,i}$ for $i=\ell+1,\dots,s$.
The details are given in Appendix~\ref{app:limit}. 
    
By Lemma~\ref{lemma:limit}, we know the exponential type-II error rate when $\tau_n=n^2$. 
Combining the above results, \Cref{prop:composite_achievability} is proved.

\section{Discussion}\label{sec:discussion}
\subsection{Assumptions on $\lambda$}
\subsubsection{Continuity}
In our achievability part, \Cref{prop:composite_achievability}, it is assumed that $\lambda:\mscr P_0 \to (0,\infty)$ can be extended to a continuous function $\bar\lambda:\closure{\mscr P_0}\to[0,\infty)$, and this assumption is crucial when proving the limit of $\{\mu_n(\bm P_1)\}$ in \Cref{lemma:limit}.
An interesting question is what will happen for general $\lambda$. 
In Section~\ref{sec:achievability}, it is shown that our proposed test can achieve type-II error exponent
\begin{equation}
    \min\lbp \inf_{\bm Q\in\mscr P_0\cup \Gamma_0} \sum_{i=1}^s \alpha_i \KLD{Q_i}{P_{1,i}},\ \lim_{n\to\infty}\mu_n(\bm P_1)\rbp.
\end{equation}
When $\lambda$ does not satisfy the continuity assumption, there might be a gap between $\lim_{n\to\infty}\mu_n(\bm P_1)$ and the desired quantity. In the following, we would give a lower bound on $\lim_{n\to\infty}\mu_n(\bm P_1)$. Also, though we currently cannot improve the type-II error exponent to close the gap, we could in fact show the asymptotic optimality by increasing the type-I error exponent.

For simplicity, we introduce some notations.
Define the divergence ball centered at $\bm P_0\in\mscr P_0$ with radius $r$ as
\begin{equation}
    \mcal B_r(\bm P_0) = \big\{ \bm Q\in\mscr D \,\big\vert\, \sum_{i=1}^s \alpha_i \KLD{Q_i}{P_{0,i}} < r \big\}.
\end{equation}
We use $\mcal B_\lambda(\bm P_0)$ to denote $\mcal B_{\lambda(\bm P_0)}(\bm P_0)$ and let $\Omega_0=\bigcup_{\bm P_0'\in\mscr P_0}\mcal B_\lambda(\bm P_0')$. Notice that $\Gamma_0= \lbp\bm Q\in\mscr P_1\mv g_1(\bm Q) < 0\rbp=\mscr P_1\cap \Omega_0$. 

\begin{proposition}\label{prop:trivial}
    If $\mscr P_0$ is bounded away from the boundary of $\mscr D$ and $\lambda$ is unbounded, then $\mscr P_1\subseteq\Gamma_0$ and hence $e_1^*(\bm P_1)=0$ for all $\bm P_1\in\mscr P_1$.
\end{proposition}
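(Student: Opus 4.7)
The plan is to show that under the stated hypotheses, every $\bm Q\in\mscr P_1$ satisfies $g_1(\bm Q)<0$, so that $\mscr P_1\subseteq\Gamma_0$; then the infimum defining $e_1^*(\bm P_1)$ trivially evaluates to $0$ by plugging in $\bm Q=\bm P_1$.

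First, I would use the boundedness-away-from-the-boundary hypothesis to get a \emph{uniform} upper bound on the divergence. Since $\mscr P_0$ is bounded away from $\partial\mscr D$, there exists $\varepsilon>0$ such that every coordinate of every $\bm P_0'\in\mscr P_0$ satisfies $P_{0,i}'(x)\geq\varepsilon$. This implies an alphabet-dependent constant $C>0$ with
\begin{equation}
\sum_{i=1}^s \alpha_i \KLD{Q_i}{P_{0,i}'} \;\leq\; C \sum_{i=1}^s \alpha_i =: M
\end{equation}
for every $\bm Q\in\mscr D$ and every $\bm P_0'\in\mscr P_0$, using the standard bound $\KLD{Q_i}{P_{0,i}'}\leq\log(1/\varepsilon)$.

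Second, since $\lambda$ is unbounded, I can pick $\bm P_0^\star\in\mscr P_0$ with $\lambda(\bm P_0^\star)>M$. For an arbitrary $\bm Q\in\mscr P_1$ this particular choice of $\bm P_0'$ in the infimum defining $g_1$ already gives
\begin{equation}
g_1(\bm Q) \;\leq\; \sum_{i=1}^s \alpha_i \KLD{Q_i}{P_{0,i}^\star} - \lambda(\bm P_0^\star) \;\leq\; M - \lambda(\bm P_0^\star) \;<\; 0,
\end{equation}
which shows $\bm Q\in\Gamma_0$, i.e.\ $\mscr P_1\subseteq\Gamma_0$.

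Finally, for any $\bm P_1\in\mscr P_1$ we have $\bm P_1\in\Gamma_0\subseteq \mscr P_0\cup\Gamma_0\cup\Omega(\bm P_1)$, so the infimum in \eqref{eq:Expt_CHT_converse} is upper bounded by $\sum_{i=1}^s\alpha_i \KLD{P_{1,i}}{P_{1,i}}=0$, yielding $e_1^*(\bm P_1)=0$. There is no real obstacle here; the only subtlety is to spell out why $\KLD{Q_i}{P_{0,i}'}$ can be bounded uniformly, which is exactly the purpose of the ``bounded away from the boundary'' assumption (without it, the divergence could blow up for $Q_i$ with support larger than $P_{0,i}'$ and the argument would fail). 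The unboundedness of $\lambda$ is used merely to dominate this uniform upper bound.
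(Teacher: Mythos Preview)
Your proposal is correct and follows essentially the same approach as the paper: both use the bounded-away-from-boundary assumption to obtain a uniform upper bound on $\sum_{i=1}^s \alpha_i \KLD{Q_i}{P_{0,i}'}$, then invoke unboundedness of $\lambda$ to pick $\bm P_0^\star$ with $\lambda(\bm P_0^\star)$ exceeding that bound, forcing $g_1(\bm Q)<0$ for every $\bm Q\in\mscr P_1$. Your write-up is slightly more explicit (spelling out the $\log(1/\varepsilon)$ bound and the final plug-in $\bm Q=\bm P_1$), but there is no substantive difference.
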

\begin{proof}
    For any $\bm Q\in\mscr D$ and $\bm P_0\in\mscr P_0$, $\sum_{i=1}^s \alpha_i \KLD{Q_i}{P_{0,i}}\leq C$ for some $C>0$.
    Since we can find $\bm P_0\in\mscr P_0$ such that $\lambda(\bm P_0)$ is arbitrarily large, we have $\mscr D\subseteq\mcal B_\lambda(\bm P_0)$ and thus $\mscr P_1\subseteq\Gamma_0$.
\end{proof}

By \Cref{prop:trivial}, unbounded constraint functions are trivial. On the other hand, given any bounded $\lambda$, we can find a substituting $\tilde\lambda$ that can be extended to a continuous function $\bar\lambda:\closure{\mscr P_0}\to[0,\infty)$ and does not affect $\Omega_0$. 
\begin{proposition}\label{prop:substituting_lambda}
    Assume $\mscr P_0$ is bounded away from the boundary of $\mscr D$, then given any bounded $\lambda:\mscr P_0\to (0,\infty)$, there exist a function $\tilde\lambda:\mscr P_0\to (0,\infty)$ such that 
    \begin{itemize}
        \item $\tilde\lambda$ can be extended to a continuous function $\bar\lambda:\closure{\mscr P_0}\to[0,\infty)$,
        \item $\tilde\lambda(\bm P_0) \geq \lambda(\bm P_0)$ for all $\bm P_0\in\mscr P_0$,
        \item $\bigcup_{\bm P_0'\in\mscr P_0}\mcal B_{\tilde\lambda}(\bm P_0') = \Omega_0 = \bigcup_{\bm P_0'\in\mscr P_0}\mcal B_\lambda(\bm P_0')$.
    \end{itemize}
\end{proposition}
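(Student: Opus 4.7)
The plan is to define $\tilde\lambda(\bm P_0)$ as the KL-divergence distance from $\bm P_0$ to the complement of $\Omega_0$, which is the largest threshold one may assign without enlarging $\Omega_0$. Concretely, since $\closure{\mscr P_0}$ is bounded away from the boundary of $\mscr D$, the supremum $M := \sup_{\bm Q \in \mscr D,\, \bm P_0 \in \closure{\mscr P_0}} \sum_{i=1}^s \alpha_i \KLD{Q_i}{P_{0,i}}$ is finite, and I would set
\begin{equation*}
\bar\lambda(\bm P_0) \ := \ \inf_{\bm Q \in \mscr D \setminus \Omega_0} \sum_{i=1}^s \alpha_i \KLD{Q_i}{P_{0,i}}, \qquad \bm P_0 \in \closure{\mscr P_0},
\end{equation*}
with the convention that the infimum over the empty set is $M+1$, and let $\tilde\lambda := \bar\lambda\big|_{\mscr P_0}$.

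The algebraic bullets then fall out directly. Because $\mcal B_\lambda(\bm P_0) \subseteq \Omega_0$, every $\bm Q \in \mscr D \setminus \Omega_0$ satisfies $\sum_i \alpha_i \KLD{Q_i}{P_{0,i}} \geq \lambda(\bm P_0)$, so taking the infimum gives $\tilde\lambda(\bm P_0) \geq \lambda(\bm P_0) > 0$. The inclusion $\Omega_0 \subseteq \bigcup_{\bm P_0'} \mcal B_{\tilde\lambda}(\bm P_0')$ is immediate from $\tilde\lambda \geq \lambda$. Conversely, if $\bm Q \in \mcal B_{\tilde\lambda}(\bm P_0')$ then $\sum_i \alpha_i \KLD{Q_i}{P_{0,i}'} < \tilde\lambda(\bm P_0')$, so by the infimum definition $\bm Q$ cannot lie in $\mscr D \setminus \Omega_0$; hence $\bm Q \in \Omega_0$. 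Note that for $\bm P_0 \in \mscr P_0$ we have $\bm P_0 \in \mcal B_{\lambda(\bm P_0)}(\bm P_0) \subseteq \Omega_0$, so $\bm P_0 \notin K := \mscr D \setminus \Omega_0$ and strict positivity $\tilde\lambda(\bm P_0) > 0$ in fact holds even without invoking $\lambda(\bm P_0) > 0$.

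For continuity, observe that $\Omega_0$ is open as a union of open KL-divergence balls, so $K$ is closed inside the compact $\mscr D$, hence compact. Since $\closure{\mscr P_0}$ is bounded away from the boundary of $\mscr D$, the map $(\bm Q, \bm P_0) \mapsto \sum_i \alpha_i \KLD{Q_i}{P_{0,i}}$ is jointly continuous on $\mscr D \times \closure{\mscr P_0}$; by Berge's maximum theorem applied to the constant (and therefore trivially continuous) correspondence $\bm P_0 \mapsto K$, the infimum $\bar\lambda$ is continuous on $\closure{\mscr P_0}$, taking values in $[0, \infty)$, and the edge case $K = \emptyset$ is covered by the constant extension $\bar\lambda \equiv M+1$. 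The main technical obstacle I anticipate is the joint continuity of KL divergence together with the finiteness of $M$: both rest squarely on the standing assumption that $\mscr P_0$ (and hence $\closure{\mscr P_0}$) is bounded away from the boundary of $\mscr D$, without which $\log(1/P_{0,i}(x))$ could blow up and the argument would fail.
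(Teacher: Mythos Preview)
Your construction is essentially the paper's: defining $\bar\lambda(\bm P_0)$ as the KL-distance from $\bm P_0$ to $K=\mscr D\setminus\Omega_0$ is exactly the same as the paper's $\sup\{r\ge 0\mid \mcal B_r(\bm P_0)\subseteq\Omega_0\}$, just written in dual form. Your verification of the three bullets matches the paper's. Where you differ is in the continuity argument: the paper gives a direct $\epsilon$--$\delta$ proof by contradiction with two cases, whereas you observe that $K$ is compact and the weighted KL divergence is jointly continuous on $\mscr D\times\closure{\mscr P_0}$, so the infimum over $K$ is continuous in $\bm P_0$ (Berge, or the elementary compact-parameter lemma). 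Your route is shorter and more conceptual; the paper's has the virtue of being fully self-contained.

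One small fix: in the edge case $K=\emptyset$ your constant $M+1$ need not dominate $\lambda$, since $M$ bounds the KL divergence, not $\lambda$. Take instead any constant at least $\sup_{\bm P_0'\in\mscr P_0}\lambda(\bm P_0')$ (this is what the paper does), or simply $\max\{M,\sup\lambda\}+1$; then $\tilde\lambda\ge\lambda$ holds and $\mcal B_{\tilde\lambda}(\bm P_0)=\mscr D=\Omega_0$ for every $\bm P_0$.
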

The idea is to take $\tilde\lambda(\bm P_0)=\sup\big\{ r>0 \,\big\vert\, \mcal B_r(\bm P_0) \subseteq \Omega_0\big\}$.
Detailed proof can be found in Appendix~\ref{app:substituting_lambda}. 
For the fixed-length setup ($\ell=s$), the optimal type-II error exponents \eqref{eq:optimal_fixed-length_e1} only have dependency on $\lambda$ through $\Omega_0$. 
This implies that substituting $\lambda$ with $\tilde\lambda$ does not affect the type-II error exponents.
Similarly, for the fully-sequential setup ($\ell=0$), 
notice that substituting $\lambda$ with $\tilde\lambda$ decreases the type-II error exponents, but \Cref{cor:fully-seq_CHT} shows 
the optimal type-II error exponents under $\tilde\lambda$ only have dependency on $\tilde\lambda$ through $\Gamma_0=\mscr P_1\cap\Omega_0$. As a result, the optimal type-II error exponents under $\lambda$ is sandwiched and we can obtain a slightly stronger version of \Cref{cor:fully-seq_CHT}. Specifically, we can drop the continuity assumption on $\lambda$.
\begin{corollary}[Fully-sequential composite hypothesis testing - relaxed]\label{cor:fully-seq_CHT_stronger}
    Suppose $\mscr P_0$ is bounded away from the boundary of $\mscr D$ and $\ell=0$, then the  optimal type-II error exponent of all tests that satisfy the universality constraints in \Cref{prop:composite_converse} and \Cref{prop:composite_achievability} is
    \begin{equation}
        e_1^*(\bm P_1)=\inf_{\bm Q\in\mscr P_0\cup \Gamma_0} \sum_{i=1}^s \alpha_i \KLD{Q_i}{P_{1,i}},
    \end{equation}
    and it can be achieved by a sequence of tests that do not depend on the actual $\bm{P}_1$.
\end{corollary}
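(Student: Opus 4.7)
The plan is to derive \Cref{cor:fully-seq_CHT_stronger} by sandwiching the achievable type-II exponent under $\lambda$ between two bounds, using $\tilde\lambda$ from \Cref{prop:substituting_lambda} as a surrogate that does satisfy the continuity assumption needed by \Cref{cor:fully-seq_CHT}. The converse will come directly from \Cref{prop:composite_converse}, exactly as in the proof of \Cref{cor:fully-seq_CHT}; the achievability is where all the work lies.

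First, I would dispatch the trivial case: if $\lambda$ is unbounded on $\mscr P_0$, \Cref{prop:trivial} gives $e_1^*(\bm P_1)=0$ and there is nothing further to show (any test trivially achieves $0$). So assume $\lambda$ is bounded. Next, since $\mscr P_0$ is bounded away from the boundary of $\mscr D$, invoke \Cref{prop:substituting_lambda} to produce $\tilde\lambda:\mscr P_0\to(0,\infty)$ that (i) extends continuously to $\closure{\mscr P_0}$, (ii) dominates $\lambda$ pointwise, and (iii) generates the same sub-level set $\Omega_0=\bigcup_{\bm P_0'\in\mscr P_0}\mcal B_{\tilde\lambda}(\bm P_0')=\bigcup_{\bm P_0'\in\mscr P_0}\mcal B_\lambda(\bm P_0')$.

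For the achievability, apply \Cref{cor:fully-seq_CHT} to the problem with threshold function $\tilde\lambda$ in place of $\lambda$ (both conditions of \Cref{prop:composite_achievability} now hold). This yields a sequence of tests $\{\Phi_n\}$, independent of the actual $\bm P_1$, with $\mbb E_\theta[\tau_n|\bm P_\theta]\le n$, $e_0(\bm P_0)\ge\tilde\lambda(\bm P_0)\ge\lambda(\bm P_0)$ for all $\bm P_0\in\mscr P_0$, and
\begin{equation}
e_1(\bm P_1)\ge\inf_{\bm Q\in\mscr P_0\cup\tilde\Gamma_0}\sum_{i=1}^s\alpha_i\KLD{Q_i}{P_{1,i}},\qquad\forall\,\bm P_1\in\mscr P_1,
\end{equation}
where $\tilde\Gamma_0=\{\bm Q\in\mscr P_1\mid \tilde g_1(\bm Q)<0\}$ is built with $\tilde\lambda$. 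Because $\tilde\lambda\ge\lambda$, these tests a fortiori satisfy the type-I universality constraint with threshold $\lambda$. Crucially, $\tilde\Gamma_0=\mscr P_1\cap\Omega_0=\Gamma_0$ by item (iii) of \Cref{prop:substituting_lambda}, so the achievable exponent is exactly $\inf_{\bm Q\in\mscr P_0\cup\Gamma_0}\sum_i\alpha_i\KLD{Q_i}{P_{1,i}}$.

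For the converse, apply \Cref{prop:composite_converse} with the original $\lambda$. Since $\ell=0$ and $\mscr P_0,\mscr P_1$ are disjoint, the slice reduces to $\mcal S(\bm P_1)=\{\bm P_1\}$ and $\mscr P_0\cap\mcal S(\bm P_1)=\emptyset$, so $\Omega(\bm P_1)=\emptyset$ exactly as in the proof of \Cref{cor:fully-seq_CHT}; hence $e_1(\bm P_1)\le\inf_{\bm Q\in\mscr P_0\cup\Gamma_0}\sum_i\alpha_i\KLD{Q_i}{P_{1,i}}$, matching the achievability.

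The only non-routine step is verifying that the substitution $\lambda\mapsto\tilde\lambda$ preserves $\Gamma_0$ exactly (not merely up to closure), which is where the main obstacle would lie had \Cref{prop:substituting_lambda} not already been established. Given that proposition, the argument is essentially a clean application of monotonicity ($\tilde\lambda\ge\lambda$ implies a stricter type-I constraint) combined with the observation that the achievable exponent in the fully-sequential setup depends on $\lambda$ only through $\Omega_0$, which is invariant under the substitution.
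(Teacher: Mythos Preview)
Your proposal is correct and follows essentially the same route as the paper: dispose of unbounded $\lambda$ via \Cref{prop:trivial}, then in the bounded case replace $\lambda$ by the continuous surrogate $\tilde\lambda$ from \Cref{prop:substituting_lambda}, apply \Cref{cor:fully-seq_CHT} under $\tilde\lambda$ for achievability, and observe that the resulting exponent depends on $\tilde\lambda$ only through $\Omega_0$ (hence $\Gamma_0$), which is invariant under the substitution; the converse is the $\ell=0$ specialization of \Cref{prop:composite_converse}. The only imprecision is the phrase ``any test trivially achieves $0$'' in the unbounded case---you should exhibit one test satisfying all constraints (e.g., always output $0$), but this is cosmetic.
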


In the case of binary classification, by \Cref{prop:substituting_lambda}, for \fixed\ and \fullseq\ setups, the type-I and optimal type-II error exponents under $\tilde\lambda$ is at least the optimal error exponents under $\lambda$. For \semione\ tests, let $\tilde\mu(P_0,P_1)$ denote the $\mu(P_0,P_1)$ under $\tilde\lambda$. It is clear that
\begin{equation}
    \tilde\mu(P_0,P_1)\leq\lim_{n\to\infty}\mu_n(P_0,P_1)\leq\mu(P_0,P_1).
\end{equation}
In other words, we cannot close the potential gap between $\lim_{n\to\infty}\mu_n(P_0,P_1)$ and $\mu(P_0,P_1)$. However we can instead increase the type-I error exponent without affecting the optimal type-II error exponents of \fixed\ and \fullseq\ tests, and in this case show the optimality of the corresponding type-II error exponent of \semione\ tests. The same argument works for \semitwo\ tests too.

Note that when there is a gap, it might be the converse bound that is not tight enough. In the proof of converse, it is assumed that we have perfect knowledge about the distribution of the testing sequence. If $\lambda$ is not continuous, any slight estimation error may cause significant difference. 

\subsubsection{Range}
Previously we assume that $\lambda$ is a strictly positive function. Now we discuss the case where $\lambda:\mscr D\to [0,\infty)$ can sometimes be $0$. Let $\mscr Q=\lbp\bm P_0\in\mscr P_0\mv\lambda(\bm P_0)=0\rbp$. Then simply consider the composite hypothesis testing problem with $\mscr P_0'=\mscr P_0\setminus \mscr Q$ and $\mscr P_1'=\mscr P_1\cup \mscr Q$. The idea is as follows. Since we only care about the error exponent, for any distribution $\bm P_0\in\mscr Q$, we do not have any requirement on error probability $\pi_0(\Phi_n|\bm P_0)$. The only constraint to be satisfied is that $\mbb E_0[\tau_n|\bm P_0]\leq n$. Putting these distributions into the alternative hypothesis does exactly what we need.

\subsection{Assumptions on the Underlying Distributions}

Apart from the constraint function, another possible relaxation is about the assumption on the underlying distributions. In the achievability part, it is assumed that $\mscr P_0$ is bounded away from the boundary of $\mscr D$. If we allow distributions to come from $\interior{\mscr D}$, it is possible that the sequence we construct in the proof converges to some distributions that do not have full support and that causes problems such as infinite KL divergences.

In \Cref{prop:composite_achievability}, additional assumptions are made in order to match the result with the converse bound in \Cref{prop:composite_converse}. Namely, a sufficient condition is that $\closure{\mscr P_0}\cap \mcal{S}(\bm P_1) = \closure{\mscr P_0\cap \mcal{S}(\bm P_1)}$. If there exist some $\bm Q\in\closure{\mscr P_0}\cap \mcal{S}(\bm P_1)$ that cannot be approximated by points in $\mscr P_0\cap \mcal{S}(\bm P_1)$, then our current proposed tests may result in a sequence converging to that point and lead to smaller type-II error exponents. It is unclear whether it is the converse or achievability bound that is too loose. As mentioned before, as long as the number of samples are finite, the underlying distributions cannot be perfectly known. Since the error exponent is an asymptotic notion, attempts to universally achieve the converse bound may further jeopardize the performance when the sample size is small. Note that it is also possible that the assumption of perfectly known distributions in the proof of converse is too strong. 

\subsection{Comparison with previous work}
Here we compare our test specialized to the \semione\ and \fullseq\ setups with existing tests in \cite{HaghifamTan_21,BaiZhou22,HsuWang_22}. 
For these two setups, given universality constraint on the expected stopping time $n$, the sets defined in \eqref{eq:def_Lambda} can be written as
\begin{align}
    &\Lambda_\theta^n = \Big\{ (Q,Q_0,Q_1)\,\Big\vert\nonumber\\
    &\inf_{(P_0',P_1')\in\distset} \KLD{Q}{P_\theta'} + \alpha\KLD{Q_0}{P_0'}+ \beta\KLD{Q_1}{P_1'} <  \eta_n\Big\}
\end{align}
for $\theta = 0,1$, 
where $\eta_n = \big[(d+2)\log n + d\log(\lce\alpha n\rce+1) + d\log(\lce\beta n\rce+1)\big]/(n-1)$.
Since the GJS divergence can be written as the following minimization problem $\GJS{P}{Q}{\alpha} = \min_{V\in\mcal{P}(\mcal{X})} \{\alpha\KLD{P}{V} + \KLD{Q}{V}\}$, we can further simplify the sets as
\begin{align}
    \Lambda_0 &= \big\{ (Q,Q_0,Q_1)\,\big\vert\,\GJS{Q_0}{Q}{\alpha} <  \eta_n\big\} \\
    \text{and }\Lambda_1 &= \big\{ (Q,Q_0,Q_1)\,\big\vert\,\GJS{Q_1}{Q}{\beta} <  \eta_n\big\}. \label{eq:Lambda_as_GJS}
\end{align}

The stopping time of our test is either $n-1$ or $n^2$, depending on whether the observed samples at time $n-1$ are typical with respect to any possible underlying distributions. With the expression in \eqref{eq:Lambda_as_GJS}, it can also be viewed as the test stops at time $n-1$ if the testing samples are close to at least one of the $P_0/P_1$-training samples, measured with GJS divergence.
This ensures the expected stopping time to be bounded by $n$ and allows more samples when the already observed ones cannot well represent the true underlying distributions.
As for the decision rule, the idea is to output $1$ whenever possible, without violating the universality constraint on the type-I error exponent. So we simply apply the optimal fixed-length decision rule in \cite{LevitanMerhav_02}. 

Our test is closely related to the \fullseq\ test in \cite{HsuWang_22}, with two main differences. 
In \cite{HsuWang_22}, after observing $n-1$ samples, the test stops if the testing samples are close enough to at least one of the $P_0/P_1$-training samples, measured with GJS divergence.
Hence the stopping time can be anything larger or equal than $n-1$. However, notice that when all the sequences are observed sequentially, the error is dominated by the error when stopping at time around $n$. So we can simplify the test to a two-phase one, similar to the almost fixed length test in \cite{BaiZhou22}. Another difference is that in \cite{HsuWang_22}, it is require that the test is efficient, that is, the error vanishes exponentially for any underlying distributions. This makes it incomparable to the \fixed\ setup with constant constraint on the type-I error exponent in \cite{Gutman_89}. In this work, the problem is resolved and we can show the benefit of sequentiality in a fair manner.

It is clear that our test is customized to satisfy the universality constraints. This also marks the main difference with the \semione\ tests in \cite{HaghifamTan_21,BaiZhou22}. In \cite{BaiZhou22}, they also have a two-phase test, but there are two differences regarding the stopping time. First, their stopping time can be $n$ or $kn$ with some constant $k$. In comparison, our second phase observes subsequently more samples, resulting in possibly lower error. Second, their test stops at $n$ if the testing samples are \emph{far} from at least one of the training samples, measured with GJS divergence. The problem is that when $P_0$ and $P_1$ are really close to each other, the testing samples will be close to both of the $P_0/P_1$-training samples with high probability, which makes the test stop at $kn>n$. So their test cannot satisfy the universality constraint on the expected stopping time.
In \cite{HaghifamTan_21}, the test stops when the testing samples are \emph{far enough} from at least one of the $P_0/P_1$-training samples, measured in GJS divergence. Similarly, the expected stopping time also depends on the underlying distributions.

\subsection{Efficient \semitwo\ Tests}\label{subsec:efficient_semitwo}
In Section~\ref{sec:results}, we compare the error exponent of efficient \semione\ and \fullseq\ tests. Specifically, \Cref{prop:efficient_compare} provides a necessary and sufficient condition on $\alpha$ and $\beta$ such that \semione\ tests can achieve the same error exponents as \fullseq\ tests, and this condition is universal for all the underlying distributions. In the example in Figure~\ref{fig:compare_xi}, there is a strict gap between \semitwo\ and \fullseq\ tests. Yet this is not always the case. 
We know that $\nu(P_0,P_1)$ does not depend on $\alpha,\beta$, while $\RenyiD{P_1}{P_0}{\frac{\beta}{1+\beta}}$ and $\RenyiD{P_0}{P_1}{\frac{\alpha}{1+\alpha}}$ go to $0$ as $\alpha,\beta\to0$. As a result, for any $(P_0,P_1)$, when $\alpha$ or $\beta$ is small enough, the \semitwo\ tests achieves $\lp\RenyiD{P_1}{P_0}{\frac{\beta}{1+\beta}}, \RenyiD{P_0}{P_1}{\frac{\alpha}{1+\alpha}}\rp$, the same as \fullseq\ tests. 

Next we provide a sufficient condition on $\alpha$ and $\beta$ such that there is a strict gap between \semitwo\ and \fullseq\ tests. First notice that for a specific pair of distributions $(P_0,P_1)\in\distset$, $\nu(P_0,P_1)$ only depends on $\lambda(P_0,P_1)$. Moreover, recall that the pair $\big(\lambda(P_0,P_1),\nu(P_0,P_1)\big)$ lies on the trade-off curve of error exponents for the binary hypothesis testing problem. Classical result shows that the curve can be characterized by the parametric curve
\begin{equation}
    \lbp \big( \KLD{P_\rho}{P_0}, \KLD{P_\rho}{P_1} \big)\mv \rho\in[0,1]\rbp,
\end{equation}
where $P_\rho$ is a tilted distribution of $P_0$ towards $P_1$, described by
\begin{equation}
    P_\rho(x) = \frac{P_0(x)^{1-\rho} P_1(x)^{\rho}}{\sum_{x'\in\mcal X}P_0(x')^{1-\rho} P_1(x')^{\rho}} \quad\text{for } x\in\mcal X.
\end{equation}
Also, note that by \eqref{eq:Renyi_minimizer}, the R\'enyi divergence can be expressed by
\begin{align}
    \RenyiD{P_1}{P_0}{\frac{\beta}{1+\beta}} &= \beta\KLD{P_{\frac{\beta}{1+\beta}}}{P_1} + \KLD{P_{\frac{\beta}{1+\beta}}}{P_0},\\
    \RenyiD{P_0}{P_1}{\frac{\alpha}{1+\alpha}} &= \alpha\KLD{P_{\frac{1}{1+\alpha}}}{P_0} + \KLD{P_{\frac{1}{1+\alpha}}}{P_1}.
\end{align}
When $\alpha=\beta=1$, we have
\begin{equation}
    \RenyiD{P_1}{P_0}{\frac{\beta}{1+\beta}}=\RenyiD{P_0}{P_1}{\frac{\alpha}{1+\alpha}}=\KLD{P_{\frac{1}{2}}}{P_1} + \KLD{P_{\frac{1}{2}}}{P_0}.
\end{equation}
Since $\lp \KLD{P_{\frac{1}{2}}}{P_0}, \KLD{P_{\frac{1}{2}}}{P_1} \rp$ is a point on the trade-off curve, we know that in this case the optimal error exponents of \semitwo\ tests are strictly smaller than \fullseq\ tests.

In general, to provide a necessary and sufficient condition on $\alpha,\beta$ such that there is a strict gap between \semitwo\ and \fullseq\ tests, one have to solve the equations
\begin{equation}
    \begin{cases}
        \KLD{P_\rho}{P_0} = \RenyiD{P_1}{P_0}{\frac{\beta}{1+\beta}}\\
        \KLD{P_\rho}{P_1} = \RenyiD{P_0}{P_1}{\frac{\alpha}{1+\alpha}}
    \end{cases}
\end{equation}
to obtain a relation between $\alpha$ and $\beta$. It is unclear whether the solution depends on $(P_0, P_1)$. So there may not exist a necessary and sufficient condition that does not depend on $(P_0,P_1)$ like \Cref{prop:efficient_compare}.

\subsection{Different Sequential Setups for Binary Classification}
Previously, we focus on \fixed, \semione, \semitwo, and \fullseq\ setups. Nevertheless, there are more possibilities, for example $X, T_0$ are sequentially observed and $T_1$ has fixed length.
In this case, the optimal type-II error exponent can be shown to be the same as that in the \fullseq\ setup, $\eseq(P_0,P_1)=\min\lbp \RenyiD{P_0}{P_1}{\frac{\alpha}{1+\alpha}},\ \kappa(P_0,P_1)\rbp$. This means the sequentiality in taking $P_0$-training samples is enough, and sequentially taking $P_1$-training samples does not further improve the error exponents.

However, when $X, T_1$ are sequentially observed and $T_0$ has fixed length, things are a bit different. Arrange the three sequences by $T_0,T_1,X$, then for $\bm P_1=(P_0,P_1,P_1)$, we have $\closure{\mscr P_0}\cap \mcal{S}(\bm P_1)=\{(P_1,P_1,P_1)\}$ and $\closure{\mscr P_0\cap \mcal{S}(\bm P_1)}=\emptyset$. Hence the condition in \Cref{prop:composite_achievability} is not satisfied. As a result, the upper bound is $\eseq$, while our test is only proved to achieve
\begin{align}
    &\min\Big\{ \RenyiD{P_0}{P_1}{\frac{\alpha}{1+\alpha}},\ \kappa(P_0,P_1),\nonumber\\
    &\hspace{1cm}\inf_{Q\in\mcal P(\mcal X):\alpha\KLD{Q}{P_1}\leq\bar\lambda(P_1,P_1)}\alpha\KLD{Q}{P_0}\Big\}.
\end{align}
If $\bar\lambda(P_1,P_1)=0$, then the third term is larger than $\RenyiD{P_0}{P_1}{\frac{\alpha}{1+\alpha}}$. Notice that for efficient tests, it is guaranteed that $\bar\lambda(P_1,P_1)=0$. But in general we cannot show the optimality of the achievability result.

\section{Conclusion}\label{sec:conclusion}
In this work, we consider a unified framework for binary classification with universality constraints on the expected stopping time and the type-I error exponent. Furthermore, we extend to a more general composite hypothesis testing problem and provide upper and lower bounds on the error exponents, along with a sufficient condition such that the bounds match. Applying the results to binary classification, we compare the optimal error exponents in different sequential setups in a fair manner to show the benefit of sequentiality. Under constant type-I error exponent constraint, sequentiality in taking testing samples suffices to achieve strict gain over fixed-length setup. On the other hand, for efficient tests, we derive necessary and sufficient condition such that sequentiality in taking training samples results in additional gain.

Throughout this work, we focus on distributions on fixed and finite alphabets.
It is interesting whether this kind of tests can be extended to more general parametric distributions such as exponential families. Since the method of types is heavily used in this work and is only applicable for distributions on finite support, an alternative tool for continuous distributions is required. 
In particular, kernel-based tests such as those based on the maximum mean discrepancy (MMD) metric are suitable for achievability and have been explored in the literature \cite{ZouLiang_17, SunZou_23,GerberJiang_23}. 
Another more practical direction is to consider data with high dimension or growing alphabet size.
In practice, our test is difficult to compute as it involves solving a lot of optimization problems. So one may attempt to design a test with lower computational complexity and simpler statistics.
Also, our focus is on the error exponents, which is an asymptotic notion. Although the test is asymptotically optimal, there is no performance guarantee for small sample size. A possible direction is to study the sample complexity under the assumption that the two underlying distributions are $\varepsilon$-separated (total variation at least $\varepsilon$), as in \cite{GerberPolyanskiy_24}. 

\begin{appendices}
    \section{Useful Tools}\label{sec:appendix_lemma}
Here we introduce some useful lemmas that will be used throughout the following proofs. These are well-known results, so the proofs are omitted.
\begin{lemma}[Optimal error exponent trade-off of binary hypothesis testing]\label{lemma:BHT_tradeoff}
    Let $P_0,P_1\in \interior{\mcal P(\mcal X)}$ be two distinct distributions, and $e_0\in(0,\KLD{P_1}{P_0}]$ be a constant. Then
    \begin{align}
        e_1(e_0)&:=\inf_{Q\in\mcal P(\mcal X):\KLD{Q}{P_0}<e_0}\KLD{Q}{P_1}\\
        &= \inf_{Q\in\mcal P(\mcal X):\KLD{Q}{P_0}\leq e_0}\KLD{Q}{P_1},
    \end{align}
    and the infimum is attained by $Q=P_\rho$ for some unique $\rho\in[0,1]$, where $P_\rho$ is a tilted distribution of $P_0$ towards $P_1$, described by
    \begin{equation}
        P_\rho(x) = \frac{P_0(x)^{1-\rho} P_1(x)^{\rho}}{\sum_{x'\in\mcal X}P_0(x')^{1-\rho} P_1(x')^{\rho}} \quad\text{for } x\in\mcal X.
    \end{equation}
    Moreover, if $P_0,P_1\in\mcal P_\varepsilon$, then by H\"older's inequality, $P_\rho\in\mcal P_\varepsilon$, thus
    \begin{align}
        e_1(e_0)&=\inf_{Q\in\mcal P_\varepsilon:\KLD{Q}{P_0}<e_0}\KLD{Q}{P_1} \\
        &= \inf_{Q\in\mcal P_\varepsilon:\KLD{Q}{P_0}\leq e_0}\KLD{Q}{P_1},
    \end{align}
    Lastly, consider the parametric curve $(e_0,e_1)=\big( \KLD{P_\rho}{P_0}, \KLD{P_\rho}{P_1} \big)$ for $\rho\in[0,1]$. 
    It can be shown that $\frac{\mathrm de_1}{\mathrm de_0}<0$ for $\rho\in(0,1)$, and $\frac{\mathrm de_1}{\mathrm de_0}=0$ at $\rho=1$.
\end{lemma}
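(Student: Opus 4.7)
The plan is to solve the convex program $\min \KLD{Q}{P_1}$ subject to $\KLD{Q}{P_0}\leq e_0$ and $\sum_x Q(x)=1$ via Lagrangian analysis, then exploit the exponential-family structure of the minimizer to obtain monotonicity, uniqueness, and the slope formula. Introducing multipliers $\mu\geq 0$ for the divergence constraint and $\nu\in\mbb R$ for normalization, and setting the derivative with respect to $Q(x)$ to zero, gives $(1+\mu)\log Q(x)=\log P_1(x)+\mu\log P_0(x)+\text{const}$, so with $\rho:=1/(1+\mu)\in(0,1]$ the stationary point is exactly the tilted distribution $P_\rho$ in the statement. Strict convexity of $\KLD{\cdot}{P_1}$ on the open simplex and the fact that $P_0,P_1\in\interior{\mcal P(\mcal X)}$ guarantee that this KKT point is the unique global minimizer. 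The inactive case $\mu=0$ (equivalently $\rho=1$) corresponds to $Q=P_1$ and arises iff $e_0\geq \KLD{P_1}{P_0}$.

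Next I would verify monotonicity and compute the slope by differentiating exponential-family identities. Writing $Z(\rho):=\sum_x P_0(x)^{1-\rho}P_1(x)^\rho$ and $m(\rho):=\sum_x P_\rho(x)\log\frac{P_1(x)}{P_0(x)}=\tfrac{d}{d\rho}\log Z(\rho)$, a short computation gives
\begin{align}
\KLD{P_\rho}{P_0} &= \rho\, m(\rho)-\log Z(\rho),\\
\KLD{P_\rho}{P_1} &= -(1-\rho)\, m(\rho)-\log Z(\rho),
\end{align}
so that $\tfrac{d}{d\rho}\KLD{P_\rho}{P_0}=\rho\, m'(\rho)$ and $\tfrac{d}{d\rho}\KLD{P_\rho}{P_1}=-(1-\rho)\, m'(\rho)$. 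Because $\{P_\rho\}$ forms an exponential family in $\rho$, the identity $m'(\rho)=\mathrm{Var}_{P_\rho}\bigl[\log(P_1/P_0)\bigr]>0$ holds whenever $P_0\neq P_1$. This yields strict monotonicity of $\rho\mapsto\KLD{P_\rho}{P_0}$ on $[0,1]$ (hence uniqueness of $\rho$ given $e_0\in(0,\KLD{P_1}{P_0}]$) and, through the chain rule, the identity $\tfrac{de_1}{de_0}=-\tfrac{1-\rho}{\rho}$, strictly negative on $(0,1)$ and zero at $\rho=1$, as claimed.

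It remains to bridge the open and closed formulations of the infimum and to handle the restriction to $\mcal P_\varepsilon$. Let $Q^\star=P_{\rho^\star}$ realize the closed-set minimum; for $\rho^\star<1$ the points $P_{\rho^\star-\delta}$ are feasible for the strict constraint when $\delta>0$ is small, and their objective converges to $\KLD{Q^\star}{P_1}$ by continuity of $\KLD{P_\rho}{P_1}$ in $\rho$, while for $\rho^\star=1$ one simply approximates $P_1$ by $P_{1-\delta}$. For the $\mcal P_\varepsilon$ claim, H\"older's inequality with conjugate exponents $1/(1-\rho)$ and $1/\rho$ gives $Z(\rho)=\sum_x P_0(x)^{1-\rho}P_1(x)^\rho\leq 1$, so $P_\rho(x)\geq P_0(x)^{1-\rho}P_1(x)^\rho\geq \varepsilon^{1-\rho}\varepsilon^\rho=\varepsilon$, meaning the unconstrained minimizer lies in $\mcal P_\varepsilon$; consequently the $\mcal P_\varepsilon$-restricted infimum coincides with the unrestricted one, and the open/closed variants agree by the very same continuity argument.

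The lynchpin is the exponential-family variance identity $m'(\rho)=\mathrm{Var}_{P_\rho}[\log(P_1/P_0)]$, which simultaneously drives monotonicity, uniqueness, and the slope formula; the remaining steps are routine convex analysis, continuity arguments, and a one-line H\"older bound on $Z(\rho)$.
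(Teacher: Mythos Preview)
Your proof is correct and complete. The paper itself does not prove this lemma: it is placed in Appendix~A under ``Useful Tools'' with the remark ``These are well-known results, so the proofs are omitted.'' Your approach---KKT conditions identifying $P_\rho$ as the minimizer, then exponential-family calculus with $m(\rho)=\frac{d}{d\rho}\log Z(\rho)$ and $m'(\rho)=\mathrm{Var}_{P_\rho}[\log(P_1/P_0)]$ to get monotonicity and the slope $de_1/de_0=-(1-\rho)/\rho$, followed by the H\"older bound $Z(\rho)\le 1$ for the $\mcal P_\varepsilon$ claim---is exactly the standard route for this classical result (it is essentially the information-geometric derivation of the Blahut--Csisz\'ar error-exponent trade-off), so there is nothing to compare against and nothing to correct.
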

\begin{lemma}\label{lemma:inf_over_union}
    Let $I$ be an index set and $A_i$ be a set for each $i\in I$. Let $A=\bigcup_{i\in I}A_i$ and $f:A\to [0,\infty)$ be a function. Then 
    \begin{equation}
        \inf_{x\in A} f(x)=\inf_{i\in I}\lp\inf_{x\in A_i} f(x)\rp.
    \end{equation}
\end{lemma}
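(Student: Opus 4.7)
The statement is a routine set-theoretic identity about infima over unions, so the plan is to prove the two inequalities separately by exploiting the containments $A_i \subseteq A$ (for one direction) and the fact that every $x \in A$ lies in some $A_i$ (for the other direction). No topological or analytic structure on $A$ or $f$ is required, only that $f$ takes values in an ordered set where infima make sense; the assumption $f:A\to[0,\infty)$ ensures the infima exist in $[0,\infty]$.

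For the direction $\inf_{x\in A} f(x) \leq \inf_{i\in I}\inf_{x\in A_i} f(x)$, I would fix an arbitrary $i \in I$. Since $A_i \subseteq A$, any $x \in A_i$ is in $A$, so $f(x) \geq \inf_{y \in A} f(y)$. Taking the infimum over $x \in A_i$ preserves this bound: $\inf_{x \in A_i} f(x) \geq \inf_{y \in A} f(y)$. Since $i$ was arbitrary, taking the infimum over $i \in I$ yields the desired inequality.

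For the reverse direction $\inf_{x\in A} f(x) \geq \inf_{i\in I}\inf_{x\in A_i} f(x)$, I would fix an arbitrary $x \in A = \bigcup_{i \in I} A_i$. By definition of union, there exists $i_x \in I$ with $x \in A_{i_x}$, so $f(x) \geq \inf_{y \in A_{i_x}} f(y) \geq \inf_{j \in I} \inf_{y \in A_j} f(y)$. Taking the infimum over $x \in A$ yields the claim.

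There is essentially no obstacle here; the only minor care needed is handling the case $I = \emptyset$ or $A_i = \emptyset$ for some $i$, which under the standard convention $\inf \emptyset = +\infty$ in $[0,\infty]$ is consistent on both sides. Since $f$ is assumed defined on $A$, the lemma is only meaningful when $A$ is nonempty, in which case at least one $A_i$ is nonempty, and both sides remain finite or agree at $+\infty$ under the standard convention. Combining the two inequalities completes the proof.
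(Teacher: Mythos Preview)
Your proof is correct; the paper itself omits the proof of this lemma entirely, stating only that it is a well-known result. Your two-inequality argument via the containments $A_i\subseteq A$ and membership in the union is exactly the standard proof one would supply, so there is nothing to compare.
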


    \section{Proofs of General Composite Hypothesis Testing}\label{sec:appendix_general}
\subsection{Proof of Lemma~\ref{lemma:rewrite_Omega}}\label{app:rewrite_Omega}
By Lemma~\ref{lemma:inf_continuity}, 
\begin{align}
    g^{\bm P_1}(\bm Q) &= \inf_{\bm P_0'\in\mscr P_0\cap \mcal{S}(\bm P_1)} \Big( \sum_{i=1}^{\ell} \alpha_i \KLD{Q_i}{P_{0,i}'} - \lambda(\bm P_0') \Big)\\
    &= \inf_{\bm P_0'\in\closure{\mscr P_0\cap \mcal{S}(\bm P_1)}} \Big( \sum_{i=1}^{\ell} \alpha_i \KLD{Q_i}{P_{0,i}'} - \lambda(\bm P_0') \Big).
\end{align}
Notice that $g^{\bm P_1}(\bm Q)<0$ if and only if $\sum_{i=1}^{\ell} \alpha_i \KLD{Q_i}{P_{0,i}'} < \lambda(\bm P_0')$ for some $\bm P_0'\in\closure{\mscr P_0\cap \mcal{S}(\bm P_1)}$. Also, since $\closure{\mscr P_0\cap \mcal{S}(\bm P_1)}$ is compact and the KL divergence and $\lambda$ are continuous, $g^{\bm P_1}(\bm Q)\leq0$ if and only if $\sum_{i=1}^{\ell} \alpha_i \KLD{Q_i}{P_{0,i}'} \leq \lambda(\bm P_0')$ for some $\bm P_0'\in\closure{\mscr P_0\cap \mcal{S}(\bm P_1)}$. 
For simplicity, for $\bm P_0'\in\closure{\mscr P_0\cap \mcal{S}(\bm P_1)}$, let
\begin{align}
    \mcal B(\bm P_0') &= \lbp\bm Q\in \mcal{S}(\bm P_1)\mv \sum_{i=1}^{\ell} \alpha_i \KLD{Q_i}{P_{0,i}'} < \lambda(\bm P_0')\rbp,\\
    \bar{\mcal B}(\bm P_0') &= \lbp\bm Q\in \mcal{S}(\bm P_1)\mv \sum_{i=1}^{\ell} \alpha_i \KLD{Q_i}{P_{0,i}'} \leq \lambda(\bm P_0')\rbp.
\end{align}
Then $\Omega(\bm P_1) = \bigcup_{\bm P_0'\in\closure{\mscr P_0\cap \mcal{S}(\bm P_1)}}\mcal B(\bm P_0')$ and $\Omega'(\bm P_1) = \bigcup_{\bm P_0'\in\closure{\mscr P_0\cap \mcal{S}(\bm P_1)}}\bar{\mcal B}(\bm P_0')$.
Notice that if $\lambda(\bm P_0')>0$, then
\begin{equation}
    \inf_{\bm Q\in\mcal B(\bm P_0')} \sum_{i=1}^s \alpha_i \KLD{Q_i}{P_{1,i}} = \inf_{\bm Q\in\bar{\mcal B}(\bm P_0')} \sum_{i=1}^s \alpha_i \KLD{Q_i}{P_{1,i}}.
\end{equation}
On the other hand, if $\lambda(\bm P_0')=0$, then $\mcal B(\bm P_0')=\emptyset$ and $\bar{\mcal B}(\bm P_0')=\{\bm P_0'\}$.
For all $\bm P_0'\in\mscr P_0\cap \mcal{S}(\bm P_1)$, we have $\lambda(\bm P_0')>0$ and $\bm P_0'\in \mcal B(\bm P_0')$. Hence
\begin{align}
    &\quad\ \inf_{\bm Q\in\Omega(\bm P_1)} \sum_{i=1}^s \alpha_i \KLD{Q_i}{P_{1,i}} \\
    &\leq \inf_{\bm Q\in\mscr P_0\cap \mcal{S}(\bm P_1)} \sum_{i=1}^s \alpha_i \KLD{Q_i}{P_{1,i}} \\
    &= \inf_{\bm Q\in\closure{\mscr P_0\cap \mcal{S}(\bm P_1)}} \sum_{i=1}^s \alpha_i \KLD{Q_i}{P_{1,i}}.\label{eq:lambda=0}
\end{align}
By Lemma~\ref{lemma:inf_over_union}, 
\begin{align}
    &\quad\ \inf_{\bm Q\in\Omega(\bm P_1)} \sum_{i=1}^s \alpha_i \KLD{Q_i}{P_{1,i}}\\ 
    &= \inf_{\bm P_0'\in\closure{\mscr P_0\cap \mcal{S}(\bm P_1)}}\lp\inf_{\bm Q\in\mcal B(\bm P_0')} \sum_{i=1}^s \alpha_i \KLD{Q_i}{P_{1,i}}\rp\\
    &=\inf_{\substack{\bm P_0'\in\closure{\mscr P_0\cap \mcal{S}(\bm P_1)}\\\lambda(\bm P_0')>0}}\lp\inf_{\bm Q\in\mcal B(\bm P_0')} \sum_{i=1}^s \alpha_i \KLD{Q_i}{P_{1,i}}\rp.
\end{align}
Also,
\begin{align}
    &\inf_{\bm Q\in\Omega'(\bm P_1)} \sum_{i=1}^s \alpha_i \KLD{Q_i}{P_{1,i}}  \\
    &= \inf_{\bm P_0'\in\closure{\mscr P_0\cap \mcal{S}(\bm P_1)}}\lp\inf_{\bm Q\in\bar{\mcal B}(\bm P_0')} \sum_{i=1}^s \alpha_i \KLD{Q_i}{P_{1,i}}\rp \\
    &= \min\Bigg\{ \inf_{\substack{\bm P_0'\in\closure{\mscr P_0\cap \mcal{S}(\bm P_1)}\\\lambda(\bm P_0')>0}}\lp\inf_{\bm Q\in\bar{\mcal B}(\bm P_0')} \sum_{i=1}^s \alpha_i \KLD{Q_i}{P_{1,i}}\rp,\nonumber\\
    &\hspace{2.3cm}\inf_{\substack{\bm Q\in\closure{\mscr P_0\cap \mcal{S}(\bm P_1)}\\\lambda(\bm Q)=0}}\sum_{i=1}^s \alpha_i \KLD{Q_i}{P_{1,i}}\Bigg\}\label{eq:min_of_two_trash}\\
    &=\inf_{\bm Q\in\Omega(\bm P_1)} \sum_{i=1}^s \alpha_i \KLD{Q_i}{P_{1,i}}. 
\end{align}
The last equality holds since the second term in \eqref{eq:min_of_two_trash} is lower bounded by \eqref{eq:lambda=0}.

\subsection{Proof of Lemma~\ref{lemma: converse}}\label{app:converse_lemma}
Using data processing inequality of divergence:
\begin{align}
    &\BKLD{\mbb P_0\{\mcal{E}\}}{\mbb{P}_{1}\{\mcal{E}\}}
    \leq \KLD{\mbb{P}_0}{\mbb{P}_{1}}|_{\mcal{F}_\tau}\\
    &=\mbb E_0\Bigg[ \sum_{i=1}^\ell \sum_{k=1}^{N_i}\log\frac{P_{0,i}(X_{i,k})}{P_{1,i}(X_{i,k})} 
    +\sum_{i=\ell+1}^s\sum_{k=1}^{\lce\alpha_i\tau\rce}\log\frac{P_{0,i}(X_{i,k})}{P_{1,i}(X_{i,k})} \,\Bigg|\, \bm P_0\Bigg]\\
    &=\sum_{i=1}^\ell N_i \KLD{P_{0,i}}{P_{1,i}} + \sum_{i=\ell+1}^s \mbb{E}_0\lb\lce\alpha_i\tau\rce|\bm P_0\rb \KLD{P_{0,i}}{P_{1,i}},
\end{align}
where the last equality is by Wald's identity. One can easily check that the assumptions for Wald's identity are satisfied since $\tau$ is a stopping time and $X_{i,k}$'s are i.i.d. sampled from a distribution with full support on a finite alphabet.

\subsection{Proof of Lemma~\ref{lemma:limit}}\label{app:limit}
Recall that
\begin{align}
    \mu_n(\bm P_1) &= \nonumber\\
    &\hspace{-0.5cm}\inf_{\substack{\bm Q\in\mscr D\\g_n(\bm Q) < 0}} \Big( \sum_{i=1}^\ell \alpha_i \KLD{Q_i}{P_{1,i}} + n\sum_{i=\ell+1}^s \alpha_i \KLD{Q_i}{P_{1,i}} \Big),\\
    g_n(\bm Q) &= \inf_{\bm P_0'\in\mscr P_0} \Big( \sum_{i=1}^\ell \alpha_i \KLD{Q_i}{P_{0,i}'} \nonumber\\
    &\hspace{1.3cm}+ n\sum_{i=\ell+1}^s \alpha_i \KLD{Q_i}{P_{0,i}'} - \lambda(\bm P_0') \Big),\\
    \mcal{S}(\bm P_1) &= \lbp \bm Q\in\mscr D \mv Q_i=P_{1,i}\ \forall\,i=\ell+1,\dots,s\rbp\\
    \bar\Omega (\bm P_1) &= \lbp\bm Q\in \mcal{S}(\bm P_1) \mv \bar g^{\bm P_1}(\bm Q) \leq 0\rbp,\\
    \bar g^{\bm P_1}(\bm Q) &= \inf_{\bm P_0'\in\closure{\mscr P_0}\cap \mcal{S}(\bm P_1)} \Big( \sum_{i=1}^{\ell} \alpha_i \KLD{Q_i}{P_{0,i}'} - \bar\lambda(\bm P_0') \Big),\\
    \Omega(\bm P_1) &= \lbp\bm Q\in \mcal{S}(\bm P_1) \mv g^{\bm P_1}(\bm Q) < 0\rbp,\\
    g^{\bm P_1}(\bm Q) &= \inf_{\bm P_0'\in\mscr P_0\cap \mcal{S}(\bm P_1)} \Big( \sum_{i=1}^{\ell} \alpha_i \KLD{Q_i}{P_{0,i}'} - \lambda(\bm P_0') \Big).
\end{align}
For simplicity, let
\begin{align}
    \mu(\bm P_1) &= \inf_{\bm Q\in\bar\Omega (\bm P_1)} \sum_{i=1}^s \alpha_i \KLD{Q_i}{P_{1,i}}\\
    &= \inf_{\bm Q\in\bar\Omega (\bm P_1)} \sum_{i=1}^\ell \alpha_i \KLD{Q_i}{P_{1,i}}.
\end{align}
The second equality holds because $\bar\Omega (\bm P_1)\subseteq \mcal{S}(\bm P_1)$. 
To show that $\{\mu_n(\bm P_1)\}$ is non-decreasing in $n$, observe that for any $\bm Q\in\mscr D$, $\{g_n(\bm Q)\}$ is non-decreasing in $n$. So
\begin{align}
    &\quad\ \mu_n(\bm P_1) \\
    &=\inf_{\substack{\bm Q\in\mscr D\\g_n(\bm Q) < 0}} \hspace{-0.2cm}\Big( \sum_{i=1}^\ell \alpha_i \KLD{Q_i}{P_{1,i}} + n\sum_{i=\ell+1}^s \alpha_i \KLD{Q_i}{P_{1,i}} \Big)\\
    &\leq \hspace{-0.2cm}\inf_{\substack{\bm Q\in\mscr D\\g_{n+1}(\bm Q) < 0}} \hspace{-0.2cm}\Big( \sum_{i=1}^\ell \alpha_i \KLD{Q_i}{P_{1,i}} + n\sum_{i=\ell+1}^s \alpha_i \KLD{Q_i}{P_{1,i}} \Big)\\
    &\leq \hspace{-0.2cm}\inf_{\substack{\bm Q\in\mscr D\\g_{n+1}(\bm Q) < 0}} \hspace{-0.2cm}\Big( \sum_{i=1}^\ell \alpha_i \KLD{Q_i}{P_{1,i}} 
    + (n+1)\hspace{-0.2cm}\sum_{i=\ell+1}^s \alpha_i \KLD{Q_i}{P_{1,i}} \Big)\\
    &= \mu_{n+1}(\bm P_1).
\end{align}
If $\{\mu_n(\bm P_1)\}$ is unbounded, the statement is trivially true. Otherwise, by the monotone convergence theorem, $\{\mu_n(\bm P_1)\}$ converges. Denote the limit by $\lim_{n\to\infty}\mu_n(\bm P_1) = E$. For any $n$, there exist $\bm Q^n\in\mscr D$ and $\bm P_0^n\in\mscr P_0$ such that
\begin{align}
    \sum_{i=1}^\ell \alpha_i \KLD{Q_i^n}{P_{1,i}} + n\sum_{i=\ell+1}^s \alpha_i \KLD{Q_i^n}{P_{1,i}} 
    &\leq \mu_n(\bm P_1)+\frac{1}{n}\\
    &\leq E+\frac{1}{n},\label{eq:fn_bound}\\
    \sum_{i=1}^\ell \alpha_i \KLD{Q_i^n}{P_{0,i}^n} + n\sum_{i=\ell+1}^s \alpha_i \KLD{Q_i^n}{P_{0,i}^n} &- \lambda(\bm P_0^n) <0.\label{eq:gn_bound}
\end{align}
Let $\bar{\bm Q}\in\mscr D$ and $\bar{\bm P}_0\in\closure{\mscr P_0}$ be limit points of the sequences $\{\bm Q^n\}$ and $\{\bm P_0^n\}$.
For $i=\ell+1,\dots,s$, by \eqref{eq:fn_bound}, $\KLD{Q_i^n}{P_{1,i}}\leq \frac{1}{n}(E+\frac{1}{n})\to 0$ as $n\to\infty$, and hence $\bar{Q}_i = P_{1,i}$.
Since $\lambda$ can be extended to a continuous function $\bar\lambda:\closure{\mscr P_0}\to[0,\infty)$ and $\closure{\mscr P_0}$ is compact, we know that $\bar\lambda$ is bounded, and so is $\lambda$. By \eqref{eq:gn_bound}, it can be shown similarly that $\bar{Q}_i = \bar{P}_{0,i}$, thus $\bar{P}_{0,i}=P_{1,i}$. As a result, we have $\bar{\bm Q}, \bar{\bm P}_0\in \mcal{S}(\bm P_1)$. 
If $\closure{\mscr P_0}\cap \mcal{S}(\bm P_1)=\emptyset$, then $\bar{\bm P}_0$ makes a contradiction. Otherwise, by the continuity of KL divergence and $\bar\lambda$, we have
\begin{align}
    &\sum_{i=1}^\ell \alpha_i \KLD{\bar{Q}_i}{P_{1,i}}
    \leq E,\\
    \bar g^{\bm P_1}(\bar{\bm Q}) \leq &\sum_{i=1}^\ell \alpha_i \KLD{\bar{Q}_i}{\bar{P}_{0,i}}  - \bar\lambda(\bar{\bm P}_0) \leq 0.
\end{align}
If $\bar\Omega (\bm P_1)=\emptyset$, there is again a direct contradiction. Otherwise $E\geq \mu(\bm P_0)$ by the definition of $\mu(\bm P_0)$.

\subsection{Proof of Proposition~\ref{prop:substituting_lambda}}\label{app:substituting_lambda}
Recall that the divergence ball centered at $\bm P_0\in\mscr P_0$ with radius $r$ is
\begin{equation}
    \mcal B_r(\bm P_0) = \big\{ \bm Q\in\mscr D \,\big\vert\, \sum_{i=1}^s \alpha_i \KLD{Q_i}{P_{0,i}} < r \big\}.
\end{equation}
Also, $\Omega_0=\bigcup_{\bm P_0'\in\mscr P_0}\mcal B_\lambda(\bm P_0')$,
where $\mcal B_\lambda(\bm P_0)$ denotes $\mcal B_{\lambda(\bm P_0)}(\bm P_0)$ for simplicity.

If $\Omega_0=\mscr D$, it is done by simply taking $\tilde\lambda(\bm P_0) = \sup_{\bm P_0'\in\mscr P_0}\lambda(\bm P_0')$ as $\lambda$ is bounded. If $\Omega_0\neq\mscr D$, let $\bar\lambda(\bm P_0)=\sup\big\{ r\geq0 \,\big\vert\, \mcal B_r(\bm P_0) \subseteq\Omega_0 \big\}$ for $\bm P_0\in\closure{\mscr P_0}$. The function $\bar\lambda$ is well-defined because there exists $\bm Q\in\mscr D\setminus \Omega_0$ and $\sum_{i=1}^s \alpha_i \KLD{Q_i}{P_{0,i}}$ is bounded as $\closure{\mscr P_0}\subseteq \interior{\mscr D}$. 
Take $\tilde\lambda = \bar\lambda|_{\mscr P_0}$, and we first show that $\mcal B_{\tilde\lambda}(\bm P_0)\subseteq\Omega_0$ for each $\bm P_0\in\mscr P_0$. Suppose there exists $\bm Q\in \mcal B_{\tilde\lambda}(\bm P_0)\setminus\Omega_0$, then we have $\sum_{i=1}^s \alpha_i \KLD{Q_i}{P_{0,i}} < \tilde\lambda(\bm P_0)$. By definition, there exists $\sum_{i=1}^s \alpha_i \KLD{Q_i}{P_{0,i}}<r\leq \tilde\lambda(\bm P_0)$ such that $\mcal B_r(\bm P_0)\subseteq\Omega_0$. Hence $\bm Q\in \mcal B_r(\bm P_0)\subseteq\Omega_0$, which makes a contradiction. 
It remains to show that $\bar\lambda$ is continuous. For any $\bm P_0\in\closure{\mscr P_0}$, suppose $\bar\lambda$ is not continuous at $\bm P_0$. Then there exists $\epsilon>0$ such that $\forall\, m\in\mbb N$, $\exists\, \bm P_0^m\in\closure{\mscr P_0}$ with $\lnorm \bm P_0^m - \bm P_0\rnorm_1<\frac{1}{m}$ and $|\bar\lambda(\bm P_0^m) - \bar\lambda(\bm P_0)|\geq \epsilon$.
\setcounter{case}{0}
\begin{case}
    For infinitely many $m$, $\bar\lambda(\bm P_0^m)\leq \bar\lambda(\bm P_0) - \epsilon$. Let $r = \bar\lambda(\bm P_0) - \epsilon/2$ and focus on the infinite subsequence. For each $m$ in the subsequence, there exists $\bm Q^m\in\mcal B_r(\bm P_0^m)\cap\comp{\Omega_0}$.
    Let $\bar{\bm Q}$ be a limit point of $\{\bm Q^m\}$. 
    Since $\mcal B_\lambda(\bm P_0')$ is open for all $\bm P_0'\in\mscr P_0$, we know that $\Omega_0$ is open and hence $\comp{\Omega_0}$ is closed (and also bounded). Therefore $\bar{\bm Q}\in\comp{\Omega_0}$. However, we have $\sum_{i=1}^s \alpha_i \KLD{Q_i^m}{P_{0,i}^m}<r$ for each $m$ in the subsequence and by continuity of KL divergence,
    \begin{equation}
        \sum_{i=1}^s \alpha_i \KLD{\bar Q_i}{P_{0,i}}\leq r<\bar\lambda(\bm P_0) - \epsilon/4.
    \end{equation}
    This implies $\bar{\bm Q}\in \mcal B_{\bar\lambda(\bm P_0) - \epsilon/4}(\bm P_0)\subseteq \Omega_0$, which leads to contradiction.
\end{case}
\begin{case}
    For infinitely many $m$, $\bar\lambda(\bm P_0^m)\geq \bar\lambda(\bm P_0) + \epsilon$. Let $r = \bar\lambda(\bm P_0) + \epsilon/2$ and focus on the infinite subsequence. For each $m$ in the subsequence, $\mcal B_r(\bm P_0^m)\subseteq \Omega_0$. 
    Take $r'=\bar\lambda(\bm P_0)+\epsilon/4$ and notice that for each $i=1,\dots,s$,
    \begin{equation}
        \big | \KLD{Q_i}{P_{0,i}^m} - \KLD{Q_i}{P_{0,i}}\big|\leq \sum_{x\in\mcal X_i}\big|\log P_{0,i}^m(x)-\log P_{0,i}(x)\big|.
    \end{equation}
    Since $\bm P_0^m\to\bm P_0$, we can find $m$ large enough such that for all $\bm Q\in \mcal B_{r'}(\bm P_0)$,
    \begin{equation}
        \sum_{i=1}^s \alpha_i \KLD{Q_i}{P_{0,i}^m}\leq \sum_{i=1}^s \alpha_i \KLD{Q_i}{P_{0,i}} + \epsilon/8< r.
    \end{equation}
    Thus $\mcal B_{r'}(\bm P_0)\subseteq \mcal B_r(\bm P_0^m)\subseteq \Omega$, leading to contradiction.
\end{case}

    \section{Proofs of Binary Classification}\label{sec:appendix_classification}
\subsection{Proof of Proposition~\ref{prop:constant_compare}}\label{app:constant_compare}
When $\lambda(P_0,P_1)\equiv \lambda_0$, we have
\begin{align}
    &\quad\ g(Q_0,Q_1) \\
    &=\inf_{P_1'\in\mcal{P}_\varepsilon\setminus \{P_1\}} \Big(\alpha\KLD{Q_0}{P_1} + \beta\KLD{Q_1}{P_1'} - \lambda(P_1,P_1')\Big)\\
    &\geq \alpha\KLD{Q_0}{P_1}-\lambda_0
\end{align}
and hence
\begin{align}
    \mu(P_0,P_1) &= \inf_{\substack{Q_0,Q_1\in\mcal P(\mcal X)\\g(Q_0,Q_1) < 0}} \Big(\alpha\KLD{Q_0}{P_0} + \beta\KLD{Q_1}{P_1}\Big)\\
    &\geq \inf_{\substack{Q_0\in\mcal P(\mcal X)\\\alpha\KLD{Q_0}{P_1}<\lambda_0}} \alpha\KLD{Q_0}{P_0}\label{eq:ttttt}
\end{align} 
Moreover,
\begin{align}
    &\quad\ \kappa(P_0,P_1) \\
    &=\hspace{-0.2cm}\inf_{\substack{(Q_0,Q_1)\in\distset\\g_1(Q_1, Q_0,Q_1)<0}} \hspace{-0.2cm}\Big( \KLD{Q_1}{P_1} + \alpha\KLD{Q_0}{P_0} + \beta\KLD{Q_1}{P_1} \Big)\\
    &\leq \inf_{\substack{Q_0\in\mcal{P}_\varepsilon\setminus \{P_1\}\\g_1(P_1, Q_0,P_1)<0}} \alpha\KLD{Q_0}{P_0},\label{eq:jjjjj}
\end{align}
and
\begin{align}
    &\quad\ g_1(P_1,Q_0,P_1) \\
    &= \hspace{-0.15cm}\inf_{(P_0',P_1')\in\distset} \hspace{-0.15cm}\Big(\KLD{P_1}{P_0'} + \alpha\KLD{Q_0}{P_0'} + \beta\KLD{P_1}{P_1'}\Big) - \lambda_0\\
    &\leq\alpha\KLD{Q_0}{P_1}-\lambda_0.
\end{align}
With the help of Lemma~\ref{lemma:BHT_tradeoff}, the proof is complete by observing that \eqref{eq:jjjjj} is upper bounded by \eqref{eq:ttttt}.

\subsection{Proof of Proposition~\ref{prop:constant_strict_gain}}\label{app:constant_strict_gain}
First we rewrite $\efix$. By Lemma~\ref{lemma:inf_continuity},
\begin{align}
    &\quad\ g_1(Q,Q_0,Q_1) \\
    &= \hspace{-0.15cm}\inf_{(P_0',P_1')\in\distset} \hspace{-0.15cm}\Big(\KLD{Q}{P_0'} + \alpha\KLD{Q_0}{P_0'} + \beta\KLD{Q_1}{P_1'}\Big) - \lambda_0\\
    &= \hspace{-0.15cm}\inf_{P_0',P_1'\in\mcal P_\varepsilon} \hspace{-0.15cm}\Big(\KLD{Q}{P_0'} + \alpha\KLD{Q_0}{P_0'} + \beta\KLD{Q_1}{P_1'}\Big) - \lambda_0.
\end{align}
For $P_0',P_1'\in\mcal P_\varepsilon$, let $\mcal B(P_0',P_1') =$
\begin{align}
    \Big\{ &Q,Q_0,Q_1\in\mcal P(\mcal X)\,\Big|\nonumber\\
    &\hspace{0.5cm}\KLD{Q}{P_0'} + \alpha\KLD{Q_0}{P_0'} + \beta\KLD{Q_1}{P_1'} <\lambda_0 \Big\}
\end{align}
By the continuity and convexity of KL divergence, we know that $\closure{\mcal B(P_0',P_1')} = $
\begin{align}
    \Big\{ &Q,Q_0,Q_1\in\mcal P(\mcal X)\,\Big|\nonumber\\
    &\hspace{0.5cm}\KLD{Q}{P_0'} + \alpha\KLD{Q_0}{P_0'} + \beta\KLD{Q_1}{P_1'} \leq\lambda_0 \Big\}.
\end{align}
Notice that 
\begin{align}
    &\quad\ \Big\{ Q,Q_0,Q_1\in\mcal P(\mcal X)\,\Big|\, g_1(Q,Q_0,Q_1) < 0\Big\} \\
    &= \bigcup_{P_0',P_1'\in\mcal P_\varepsilon}\mcal B(P_0',P_1')
\end{align}
By Lemma~\ref{lemma:inf_over_union} and Lemma~\ref{lemma:BHT_tradeoff}, 
\begin{align}
    &\quad\ \efix(P_0,P_1) \\
    &= \inf_{\substack{Q,Q_0,Q_1\in\mcal P(\mcal X)\\g_1(Q,Q_0,Q_1) < 0}} \Big(\KLD{Q}{P_1} + \alpha\KLD{Q_0}{P_0} + \beta\KLD{Q_1}{P_1}\Big) \\
    &= \inf_{P_0',P_1'\in\mcal P_\varepsilon} \bigg\{\inf_{(Q,Q_0,Q_1)\in \mcal B(P_0',P_1')} \Big(\KLD{Q}{P_1} + \alpha\KLD{Q_0}{P_0} \nonumber\\
    &\hspace{4.9cm}+ \beta\KLD{Q_1}{P_1}\Big)\bigg\} \\
    &= \inf_{P_0',P_1'\in\mcal P_\varepsilon} \bigg\{\inf_{(Q,Q_0,Q_1)\in \closure{\mcal B(P_0',P_1')}\cap \mcal P_\varepsilon^3} \Big(\KLD{Q}{P_1} \nonumber\\
    &\hspace{3.9cm}+ \alpha\KLD{Q_0}{P_0} + \beta\KLD{Q_1}{P_1}\Big)\bigg\} \\
    &= \inf_{\substack{Q,Q_0,Q_1\in\mcal P_\varepsilon\\g_1(Q,Q_0,Q_1) \leq 0}} \hspace{-0.2cm}\Big(\KLD{Q}{P_1} + \alpha\KLD{Q_0}{P_0} + \beta\KLD{Q_1}{P_1}\Big). \label{eq:rewrite_e1}
\end{align}
The last equality follows from the fact that since $\mcal P_\varepsilon^2$ is compact and the KL divergence is continuous in the concerned region, $g_1(Q,Q_0,Q_1)\leq 0$ if and only if $\KLD{Q}{P_0'} + \alpha\KLD{Q_0}{P_0'} + \beta\KLD{Q_1}{P_1'} \leq \lambda_0$ for some $P_0',P_1'\in\mcal P_\varepsilon$.

Notice that for $Q,Q_0,Q_1\in\mcal P_\varepsilon$, 
\begin{align}
    &\quad\ g_1(Q,Q_0,Q_1) \\
    &= \inf_{P_0',P_1'\in\mcal P_\varepsilon} \hspace{-0.1cm}\Big(\KLD{Q}{P_0'} + \alpha\KLD{Q_0}{P_0'} + \beta\KLD{Q_1}{P_1'}\Big) - \lambda_0 \\
    &= \inf_{P_0'\in\mcal P_\varepsilon} \Big(\KLD{Q}{P_0'} + \alpha\KLD{Q_0}{P_0'}\Big) - \lambda_0 \\
    &=\GJS{Q_0}{Q}{\alpha} - \lambda_0. \label{eq:rewrite_g1}
\end{align}
The last equality follows from the fact that the GJS divergence can be written as the following minimization problem $\GJS{P}{Q}{\alpha} = \min_{V\in\mcal{P}(\mcal{X})} \{\alpha\KLD{P}{V} + \KLD{Q}{V}\}$, with the minimizer $\frac{\alpha P+Q}{\alpha+1}$.
Combining \eqref{eq:rewrite_e1} and \eqref{eq:rewrite_g1}, 
\begin{align}
    &\quad\ \efix(P_0,P_1)\\
    &= \inf_{\substack{Q,Q_0\in\mcal P_\varepsilon\\ \GJS{Q_0}{Q}{\alpha} \leq \lambda_0}} \hspace{-0.2cm}\Big(\KLD{Q}{P_1} + \alpha\KLD{Q_0}{P_0} + \beta\KLD{Q_1}{P_1}\Big) \\
    &=\inf_{\substack{Q,Q_0,Q_1\in\mcal P_\varepsilon\\ \GJS{Q_0}{Q}{\alpha} \leq \lambda_0}} \Big(\KLD{Q}{P_1} + \alpha\KLD{Q_0}{P_0}\Big).\label{eq:rewrite_e1_final}
\end{align}

Next we rewrite $\kappa(P_0,P_1)$. 
By \eqref{eq:rewrite_g1}, we know that for $Q_0,Q_1\in\mcal P_\varepsilon$, $g_1(Q_1, Q_0,Q_1) = \GJS{Q_0}{Q_1}{\alpha} - \lambda_0$.
Hence, for any $Q\in\mcal P_\varepsilon$, the pair $(Q,Q)$ can be approximated by $(Q_0,Q_1)\in\distset$ with $g_1(Q_1, Q_0,Q_1)<0$, since we can simply choose $Q_0=Q$ and $Q_1\in\mcal P_\varepsilon\setminus\{Q\}$ arbitrarily close to $Q$.
By Lemma~\ref{lemma:inf_continuity}, 
we can slightly change the set where the infimum is taken over in $\kappa(P_0,P_1)$.
\begin{align}
    &\quad\ \kappa(P_0,P_1) \\
    &= \hspace{-0.2cm}\inf_{\substack{(Q_0,Q_1)\in\distset\\g_1(Q_1, Q_0,Q_1)<0}} \hspace{-0.2cm}\Big( \KLD{Q_1}{P_1} + \alpha\KLD{Q_0}{P_0} + \beta\KLD{Q_1}{P_1} \Big)\\
    &= \inf_{\substack{Q_0,Q_1\in\mcal P_\varepsilon\\g_1(Q_1, Q_0,Q_1)<0}} \hspace{-0.2cm}(1+\beta)\KLD{Q_1}{P_1} + \alpha\KLD{Q_0}{P_0}.
\end{align}
Then following similar arguments in rewriting $\efix(P_0,P_1)$, we get the desired result
\begin{equation}
    \kappa(P_0,P_1) = \inf_{\substack{Q_0,Q_1\in\mcal P_\varepsilon\\\GJS{Q_0}{Q_1}{\alpha}\leq\lambda_0}} \hspace{-0.2cm}(1+\beta)\KLD{Q_1}{P_1} + \alpha\KLD{Q_0}{P_0}.
\end{equation}
It is then obvious that if $\GJS{P_0}{P_1}{\alpha}\leq \lambda_0$, then $\efix(P_0,P_1)$ and $\kappa(P_0,P_1)$ are $0$.
In the following we focus on the case $\GJS{P_0}{P_1}{\alpha}> \lambda_0$. 
Notice that the infimum in \eqref{eq:rewrite_e1_final} is attainable since $\mcal P_\varepsilon$ is compact and the GJS divergence is continuous. Also, it is $0$ only if $Q=P_1$ and $Q_0=P_0$. However, $\GJS{P_0}{P_1}{\alpha}> \lambda_0$, so we have $\efix(P_0,P_1)>0$.
To show that $\efix(P_0,P_1)<\kappa(P_0,P_1)$ and $\kappa(P_0,P_1)$ is strictly increasing in $\beta$, it suffices to show that for any $b\geq 1$, the following infimum is not attained by $Q_1=P_1$:
\begin{equation}
    \inf_{\substack{Q_0,Q_1\in\mcal P_\varepsilon\\\GJS{Q_0}{Q_1}{\alpha}\leq\lambda_0}} b\KLD{Q_1}{P_1} + \alpha\KLD{Q_0}{P_0}.\label{eq:inf_problem}
\end{equation}
Suppose the infimum is attained by $(Q_0, P_1)$. 
Let $V^*=\frac{\alpha Q_0+P_1}{\alpha+1}$, $c_0=\alpha\KLD{Q_0}{V^*}$ and $c_1 = \KLD{P_1}{V^*}$.
Since $\lambda_0>0$, it can be shown that $Q_0\neq P_1$. Thus $V^*\neq P_1$ and $c_1>0$. Also, by the above arguments, we know that $Q_0\neq P_0$. As a result, we must have $c_0<\alpha\KLD{P_0}{V^*}$. For $y\geq 0$, let
\begin{align}
    Q_0^y&=\argmin_{Q\in\mcal P_\varepsilon: \alpha\KLD{Q}{V^*}\leq c_0+y} \alpha\KLD{Q}{P_0},\\
    Q_1^y&=\argmin_{Q\in\mcal P_\varepsilon: \KLD{Q}{V^*}\leq c_1-y} b\KLD{Q}{P_1},
\end{align}
and $f(y) = b\KLD{Q_1^y}{P_1} + \alpha\KLD{Q_0^y}{P_0}$. Notice that $f(0)\leq \alpha\KLD{Q_0}{P_0}$. By Lemma~\ref{lemma:BHT_tradeoff}, we have 
\begin{equation}
    \frac{\mathrm d}{\mathrm dy} b\KLD{Q_1^y}{P_1}\Big|_{y=0} = 0\quad\text{and}\quad \frac{\mathrm d}{\mathrm dy} \alpha\KLD{Q_0^y}{P_0}\Big|_{y=0} < 0.
\end{equation}
Hence $f'(0)<0$, and there exists some small $y>0$ such that $f(y)<f(0)$. Also,
\begin{align}
    \GJS{Q_0^y}{Q_1^y}{\alpha} &\leq \alpha\KLD{Q_0^y}{V^*} + \KLD{Q_1^y}{V^*} \\
    &\leq c_0+y+c_1-y \\
    &= \GJS{Q_0}{Q_1}{\alpha}\leq\lambda_0.
\end{align}
It is clear that $Q_1^y\neq P_1$ and this contradicts with the assumption that the infimum in \eqref{eq:inf_problem} is attained by $(Q_0, P_1)$.

Finally, to show strict gain of sequentiality, it remains to show that $\efix(P_0,P_1)<\RenyiD{P_0}{P_1}{\frac{\alpha}{1+\alpha}}$. Let $\bar P = \argmin_{V\in\mcal P_\varepsilon} \alpha\KLD{V}{P_0}+ \KLD{V}{P_1}$. Since $\lambda(\bar P, P_1)=\lambda_0>0$, we can pick $Q=\eta P_1 + (1-\eta)\bar P$, $Q_0=\eta P_0 + (1-\eta)\bar P$ for some $\eta>0$ small enough such that $(Q,Q_0,P_1)\in\mcal B(\bar P, P_1)$ and then
\begin{align}
    \efix(P_0,P_1)&\leq \KLD{Q}{P_1} + \alpha\KLD{Q_0}{P_0} + \beta\KLD{P_1}{P_1}\\
    &<\KLD{\bar P}{P_1} + \alpha\KLD{\bar P}{P_0}=\RenyiD{P_0}{P_1}{\frac{\alpha}{1+\alpha}}.
\end{align}
As a remark, $\efix(P_0,P_1)<\RenyiD{P_0}{P_1}{\frac{\alpha}{1+\alpha}}$ holds in general and does not limit to the case of constant $\lambda$.

\subsection{Proof of Proposition~\ref{prop:kappa=infty}}\label{app:kappa=infty}
Following the proof of converse in \cite{HsuWang_22}, we can obtain $\lambda(P_0,P_1)\leq e_0(P_0,P_1)\leq \RenyiD{P_1}{P_0}{\frac{\beta}{1+\beta}}$, where the first inequality follows from the universality constraint on the type-I error exponent. Then for $(Q_0,Q_1)\in\distset$,
\begin{align}
    &\quad\ g_1(Q_1,Q_0,Q_1)\\
    &= \inf_{(P_0',P_1')\in\distset} \Big(\KLD{Q_1}{P_0'} + \alpha\KLD{Q_0}{P_0'} + \beta\KLD{Q_1}{P_1'} \nonumber\\
    &\hspace{5.5cm}- \lambda(P_0',P_1')\Big) \\
    &\geq \inf_{(P_0',P_1')\in\distset} \Big(\KLD{Q_1}{P_0'} + \alpha\KLD{Q_0}{P_0'} + \beta\KLD{Q_1}{P_1'} \nonumber\\
    &\hspace{4.8cm}- \RenyiD{P_1'}{P_0'}{\frac{\beta}{1+\beta}}\Big) \\
    &\geq \inf_{(P_0',P_1')\in\distset} \alpha\KLD{Q_0}{P_0'}\label{eq:kappa=infty1}\\
    &\geq 0 
\end{align}
Note that \eqref{eq:kappa=infty1} follows from the fact that R\'enyi divergence can be written as a minimization problem. We then have $\kappa(P_0,P_1)=\infty$ as it is the infimum of an empty set.

\subsection{Proof of Proposition~\ref{prop:efficient_compare}}\label{app:efficient_compare}
First we show that if $\alpha\beta\geq1$, then $\mu(P_0,P_1)\geq \RenyiD{P_0}{P_1}{\frac{\alpha}{1+\alpha}}$.
Recall that
\begin{align}
    \mu(P_0,P_1) &= \inf_{\substack{Q_0,Q_1\in\mcal P(\mcal X)\\g(Q_0,Q_1) < 0}} \Big(\alpha\KLD{Q_0}{P_0} + \beta\KLD{Q_1}{P_1}\Big),\\
    g(Q_0,Q_1) &= \nonumber\\
    &\hspace{-0.5cm}\inf_{P_1'\in\mcal{P}_\varepsilon\setminus \{P_1\}} \alpha\KLD{Q_0}{P_1} + \beta\KLD{Q_1}{P_1'} - \lambda(P_1,P_1').
\end{align}
Now $\lambda(P_0,P_1)=\RenyiD{P_1}{P_0}{\frac{\beta}{1+\beta}}$ for all $(P_0,P_1)\in\distset$. Hence it is equivalent to show that if for some $Q_0,Q_1\in\mcal P(\mcal X)$,
\begin{equation}
    \alpha\KLD{Q_0}{P_0} + \beta\KLD{Q_1}{P_1} < \RenyiD{P_0}{P_1}{\frac{\alpha}{1+\alpha}}, 
    \label{eq:ineq1}
\end{equation}
then for all $P_1'\in\mcal{P}_\varepsilon\setminus \{P_1\}$, we have
\begin{equation}
    \alpha\KLD{Q_0}{P_1} + \beta\KLD{Q_1}{P_1'} \geq \RenyiD{P_1'}{P_1}{\frac{\beta}{1+\beta}}.
\end{equation}
Since the R\'{e}nyi divergence can be written as a minimization problem \eqref{eq:Renyi}, we upper bound the RHS of \eqref{eq:ineq1} by $\KLD{Q_0}{P_1} + \alpha\KLD{Q_0}{P_0}$ and get $\beta \KLD{Q_1}{P_1} < \KLD{Q_0}{P_1}$. For all $P_1'\in\mcal{P}_\varepsilon\setminus \{P_1\}$,
\begin{align}
    &\quad\ \alpha\KLD{Q_0}{P_1} + \beta\KLD{Q_1}{P_1'}\\
    &\geq 
    \alpha\beta\KLD{Q_1}{P_1} + \beta\KLD{Q_1}{P_1'}\\
    &\geq 
    \KLD{Q_1}{P_1} + \beta\KLD{Q_1}{P_1'}\\
    &\geq \RenyiD{P_1'}{P_1}{\frac{\beta}{1+\beta}}.
\end{align}
Second, we prove that when $\alpha\beta<1$, there exists $(P_0,P_1)\in\distset$ such that $\mu(P_0,P_1)< \RenyiD{P_0}{P_1}{\frac{\alpha}{1+\alpha}}$.
Specifically, we want to find $P_1\in\mcal P_\varepsilon$, $P_0,P_1'\in\mcal{P}_\varepsilon\setminus \{P_1\}$, and $Q_0,Q_1\in\mcal P(\mcal X)$ such that
\begin{align}
    &g(Q_0,Q_1)\\
    &\leq \alpha\KLD{Q_0}{P_1} + \beta\KLD{Q_1}{P_1'} - \beta\KLD{V_\beta}{P_1'} - \KLD{V_\beta}{P_1}\\
    &<0,
\end{align}
and
\begin{align}
    \mu(P_0,P_1)&= \alpha\KLD{Q_0}{P_0} + \beta\KLD{Q_1}{P_1} \\
    &< \alpha\KLD{V_\alpha}{P_0} + \KLD{V_\alpha}{P_1},
\end{align}
where
\begin{align}
    V_\alpha &= \argmin_{V\in\mcal{P}(\mcal{X})} \lbp \alpha\KLD{V}{P_0} + \KLD{V}{P_1} \rbp,\\
    \text{ and }
    V_\beta &= \argmin_{V\in\mcal{P}(\mcal{X})} \lbp \beta\KLD{V}{P_1'} + \KLD{V}{P_1} \rbp.
\end{align}
Take $Q_0=V_\alpha$ and $Q_1=V_\beta$, then it suffices to have $\alpha\KLD{V_\alpha}{P_1}<\KLD{V_\beta}{P_1}$ and $\beta\KLD{V_\beta}{P_1}<\KLD{V_\alpha}{P_1}$. Observe that $(P_0,\alpha)$ and $(P_1',\beta)$ play symmetric roles. WLOG assume $\alpha\leq\beta$ and get $\alpha<1$. Arbitrarily pick $(P_0,P_1)\in\distset$, and now the goal is to find $P_1'\in\mcal{P}_\varepsilon\setminus \{P_1\}$ such that $\alpha\KLD{V_\alpha}{P_1}<\KLD{V_\beta}{P_1}< \frac{1}{\beta}\KLD{V_\alpha}{P_1}$. This is possible since $\alpha< \frac{1}{\beta}$ and $V_\alpha\neq P_1$.
The closed-form expression of $V_\beta$ can be derived and notice that it is continuous in $P_1'$. Hence $\KLD{V_\beta}{P_1}$ is continuous in $P_1'$. When $P_1'=P_1$, $\KLD{V_\beta}{P_1}=0$. When $P_1'=P_0$, since $\alpha\leq\beta$, it can be shown that $\KLD{V_\alpha}{P_1}\leq \KLD{V_\beta}{P_1}$.
By the intermediate value theorem, there exists $P_1'$ between $P_1$ and $P_0$ with the desired property.

    \section{Proofs of Technical Tools}\label{sec:appendix_tools}
\subsection{Proof of Lemma~\ref{lemma: continuity}}\label{app:continuity}
Given any $\epsilon>0$, we aim to find $\delta>0$ such that for all $p, q\in (0,\delta)$, $|h(p,q) -1|\leq\epsilon$. For $p, q\in (0,\delta)$, if $\delta<0.5$,
\begin{align}
    &\quad\ \BKLD{1-p}{q} \\
    &\geq \BKLD{1-\delta}{q} \\
    &= (1-\delta)\log\frac{1-\delta}{q} + \delta\log\frac{\delta}{1-q}\\
    &= (1-\delta)\log\frac{1}{q} + (1-\delta)\log(1-\delta) + \delta\log\delta + \delta\log\frac{1}{1-q}\\
    &\geq (1-\delta)\log\frac{1}{q} -\frac{2}{e\ln2}
\end{align}
The last inequality follows from the fact that for all $x\in(0,1),\;x\log x\geq-\frac{2}{e\ln2}$
For the upper bound,
\begin{align}
    \BKLD{1-p}{q} &= (1-p)\log\frac{1-p}{q} + p\log\frac{p}{1-q} \\
    &\leq \log\frac{1}{q} + \delta\log\frac{1}{1-\delta}.
\end{align}
Hence
\begin{equation}
    \frac{\BKLD{1-p}{q}}{\log\frac{1}{q}} \geq 1-\delta - \frac{2}{e\ln2}\frac{1}{\log \frac{1}{q}} \geq 1-\delta - \frac{2}{e\ln2}\frac{1}{\log \frac{1}{\delta}},
\end{equation}
and
\begin{equation}
    \frac{\BKLD{1-p}{q}}{\log\frac{1}{q}} \leq 1 + \frac{\delta\log\frac{1}{1-\delta}}{\log\frac{1}{q}} \leq 1 + \frac{\delta\log\frac{1}{1-\delta}}{\log\frac{1}{\delta}}.
\end{equation}
It suffices to choose $\delta>0$ small enough such that 
\begin{equation}
    \delta - \frac{2}{e\ln2}\frac{1}{\log \delta} < \epsilon \text{ and } \frac{\delta\log(1-\delta)}{\log\delta} <\epsilon.
\end{equation}

\subsection{Proof of Lemma~\ref{lemma:inf_continuity}}\label{app:inf_continuity}
For the first part, clearly, $\inf_{x\in A} f(x) \geq \inf_{x\in\closure{A}} f(x)$. For $y\in\closure{A}$, choose a sequence $\{y_n\}$ such that $y_n\in A$ and $y_n\to y$. By the continuity of $f$ and the definition of infimum, we have
\begin{equation}
    f(y) = \lim_{n\to\infty}f(y_n)\geq \inf_{x\in A} f(x).
\end{equation}
Since this holds for all $y\in\closure{A}$, it follows that $\inf_{x\in\closure{A}} f(x) \geq \inf_{x\in A} f(x)$.

For the second part, observe that for any $\epsilon>0$, we have $\inf_{x\in \mcal B'_\epsilon(A)\cap S} f(x) \leq \inf_{x\in\closure{A}} f(x)<\infty$. Also, $\inf_{x\in \mcal B'_\epsilon(A)\cap S} f(x)$ is non-decreasing as $\epsilon\to0$. Hence we know the limit exists. Assume
\begin{equation}
    \lim_{\epsilon\to0}\inf_{x\in \mcal B'_\epsilon(A)\cap S} f(x) = a < \inf_{x\in\closure{A}} f(x).
\end{equation}
For each $n\in\mbb N$, there exists $y_n\in\mcal B'_{1/n}(A)\cap S$ such that $f(y_n)\leq \inf_{x\in \mcal B'_{1/n}(A)\cap S} f(x) + 1/n$. Since $y_n\in\mcal B'_{1/n}(A)$, there exists $x_n\in A$ such that $\lnorm y_n-x_n\rnorm_1\leq 1/n$. Consider a convergent subsequence $\{y_{n_i}\}$ and let $y\in S$ denote the limit point. Since $\lnorm y-x_{n_i}\rnorm_1 \leq \lnorm y-y_{n_i}\rnorm_1 + \lnorm y_{n_i}-x_{n_i}\rnorm_1$ converges to $0$ as $i$ goes to infinity, it follows that $y\in\closure A$.
By the continuity of $f$, we have
\begin{equation}
    f(y)=\lim_{i\to\infty} f(y_{n_i})\leq a < \inf_{x\in\closure{A}} f(x),
\end{equation}
which makes a contradiction.

\end{appendices}

\bibliographystyle{IEEEtran}

\begin{IEEEbiographynophoto}{Ching-Fang Li}
(Student Member, IEEE) received the B.S. degree in electrical engineering and mathematics, and the M.S. degree in electrical engineering from National Taiwan University, Taiwan, in 2021 and 2024, respectively. She is currently a Ph.D. student in the department of electrical engineering at Stanford University, CA, USA.
Her research fields mainly lie in information theory.
\end{IEEEbiographynophoto}

\begin{IEEEbiographynophoto}{I-Hsiang Wang}(Member, IEEE) received the B.Sc. degree in Electrical Engineering from National Taiwan University, Taiwan, in 2006. He received a Ph.D. degree in Electrical Engineering and Computer Sciences from the University of California at Berkeley, USA, in 2011. From 2011 to 2013, he was a postdoctoral researcher at \`{E}cole Polytechnique F\`{e}d\`{e}rale de Lausanne, Switzerland. Since 2013, he has been at the Department of Electrical Engineering in National Taiwan University, where he is now a professor. His research interests include network information theory, networked data analysis, and statistical learning. He was a finalist of the Best Student Paper Award of IEEE International Symposium on Information Theory, 2011. He received the 2017 IEEE Information Theory Society Taipei Chapter and IEEE Communications Society Taipei/Tainan Chapters Best Paper Award for Young Scholars.
\end{IEEEbiographynophoto}

\end{document}